\newcommand{\outcome}{{\sf Out}}
\renewcommand{\val}{{\sf Val}}
\newcommand{\arena}{{\mathcal{A}}}
\newcommand{\game}{{\mathcal{G}}}
\newcommand{\MP}{{\sf MP}}
\newcommand{\MPsup}{\overline{{\sf MP}}}
\newcommand{\MPinf}{\underline{{\sf MP}}}
\newcommand{\BR}{{\sf BR}}
\newcommand{\Estrat}[1]{{\ll#1\gg~}}
\newcommand{\ASV}{{\sf ASV}}
\newcommand{\CSV}{{\sf CSV}}
\newcommand{\WCV}{{\sf WCV  }}
\renewcommand{\last}{{\it last  }}
\newcommand{\DS}{{\sf DS}}
\newcommand*\circled[1]{\tikz[baseline=(char.base)]{
            \node[shape=circle,draw,inner sep=1pt] (char) {#1};}}
\title{The Adversarial Stackelberg Value in Quantitative
  Games}
\date{}
\author{Emmanuel Filiot}{Universit\'e libre de Bruxelles (ULB), Belgium}{}{}{}
\author{Raffaella Gentilini}{University of Perugia, Italy}{}{}{}
\author{Jean-Fran\c{c}ois Raskin}{Universit\'e libre de Bruxelles (ULB), Belgium}{}{}{}
\authorrunning{Emmanuel Filiot, Raffaella Gentillini, and Jean-Fran\c{c}ois Raskin}%TODO mandatory. First: Use abbreviated first/middle names. Second (only in severe cases): Use first author plus 'et al.'
\keywords{Non-zero sum games, reactive synthesis, adversarial Stackelberg.}% mandatory: Please provide 1-5 keywords
\def\out{\sf{Out}}
\begin{document}

\maketitle

\begin{abstract}
In this paper, we study the notion of adversarial Stackelberg value for two-player non-zero sum games played on bi-weighted graphs with the mean-payoff and the discounted sum functions. The adversarial Stackelberg value of Player 0 is the largest value that Player 0 can obtain when announcing her strategy to Player 1 which in turn responds with any of his best response. For the mean-payoff function, we show that the adversarial Stackelberg value is not always achievable but $\epsilon$-optimal strategies exist. We show how to compute this value and prove that the associated threshold problem is in {\sf NP}. For the discounted sum payoff function, we draw a link with the target discounted sum problem which explains why the problem is difficult to solve for this payoff function. We also provide solutions to related gap problems.
\end{abstract}
\newpage
\section{Introduction}

In this paper, we study two-player non-zero sum infinite duration quantitative games played on graph games. In non-zero sum games, the notion of worst-case value is not rich enough to reason about the (rational) behavior of players. More elaborate solution concepts have been proposed in game theory to reason about non-zero sum games: Nash equilibria, subgames perfect equilibria, admissibility, and Stackelberg equilibria are important examples of such solution concepts, see e.g.~\cite{nash50} and \cite{osbornebook}.

Let us first recall the abstract setting underlying the notion of Stackelberg equilibria and explain the variant that is the focus of this paper. Stackelberg games are strategic games played by two players. We note $\Sigma_0$ the set of strategies of Player~0, also called the {\em leader}, and $\Sigma_1$ the set of strategies of Player~1, also called the {\em follower}. Additionally, the game comes with two (usually $\mathbb{R}$-valued) payoff functions, ${\sf Payoff}_0$ and ${\sf Payoff}_1$, that determine the payoff each player receives: if $\sigma_0 \in \Sigma_0$ and $\sigma_1 \in \Sigma_1$ are chosen then Player~0 receives the payoff ${\sf Payoff}_0(\sigma_0,\sigma_1)$ while Player~1 receives the payoff ${\sf Payoff}_1(\sigma_0,\sigma_1)$. Both players aim at maximizing their respective payoffs, and in a Stackelberg game, players play sequentially as follows.
%  \begin{enumerate}
      %\item 
      \circled{1} Player~0, the leader, announces her choice of strategy $\sigma_0 \in \Sigma_0$.
      %\item 
      \circled{2} Player~1, the follower, announces his choice of strategy $\sigma_1 \in \Sigma_1$.
      %\item 
      \circled{3} Both players receive their respective payoffs: ${\sf Payoff}_0(\sigma_0,\sigma_1)$ and ${\sf Payoff}_1(\sigma_0,\sigma_1)$.
  %\end{enumerate}
%\noindent
Due to the sequential nature of the game, Player~1 knows the strategy $\sigma_0$, and so to act rationally (s)he should choose a strategy $\sigma_1$ that maximizes the payoff ${\sf Payoff}_1(\sigma_0,\sigma_1)$. If such a strategy $\sigma_1$ exists, it is called a {\em best-response}~\footnote{As we will see later in the paper, sometimes, best-responses are not guaranteed to exist. In such cases, we need to resort to weaker notions such as  $\epsilon$-best-responses. We leave those technical details for later in the paper.} to the strategy $\sigma_0 \in \Sigma_0$. In turn, if the leader assumes a rational response of the follower to her strategy, this should guide the leader when choosing $\sigma_0 \in \Sigma_0$. Indeed, the leader should choose a strategy $\sigma_0 \in \Sigma_0$ such that the value ${\sf Payoff_0}(\sigma_0,\sigma_1)$ is as large as possible when $\sigma_1$ is a best-response of the follower. 

Two different scenarios can be considered in this setting: either the best-response $\sigma_1 \in \Sigma_1$ is imposed by the leader (or equivalently chosen {\em cooperatively} by the two players), or the best-response is chosen {\em adversarially} by Player~1. In classical results from game theory and most of the close related works on games played on graphs~\cite{DBLP:conf/time/GuptaS14,DBLP:journals/corr/GuptaS14c}, with the exception of~\cite{DBLP:journals/amai/KupfermanPV16}, only the cooperative scenario has been investigated. But, the adversarial case is interesting because it allows us to model the situation in which the leader chooses $\sigma_0 \in \Sigma_0$ only and must be prepared to face any rational response of Player~1, i.e. if Player~1 has several possible best responses then $\sigma_0$ should be designed to face all of them. In this paper, our main contribution is to investigate the second route. As already noted in~\cite{DBLP:journals/amai/KupfermanPV16}, this route is particularly interesting for applications in {\em automatic synthesis}. Indeed, when designing a program, and this is especially true for reactive programs~\cite{DBLP:conf/popl/PnueliR89,DBLP:conf/lata/BrenguierCHPRRS16}, we aim for robust solutions that works for multiple rational usages, e.g. all the usages that respect some specification or that maximize some measure for the user. 

To reflect the two scenarios above, there are two notions of {\em
  Stackelberg values}. First, the {\em cooperative Stackelberg value}
is the largest value that Player~0 can secure by proposing a strategy
$\sigma_0$ and a strategy $\sigma_1$ to the follower with the
constraint that $\sigma_1$ is a best-response for the follower to
$\sigma_0$. Second, the {\em adversarial Stackelberg value} is the
largest value that Player~0 can secure by proposing a strategy
$\sigma_0$ and facing any best response $\sigma_1$ of the follower to
the strategy $\sigma_0$. In this paper, we mostly concentrate on the
{\em adversarial} Stackelberg value, for infinite duration games
played on bi-weighted game graphs for the mean-payoff function and the
discounted sum function. The cooperative case has been studied
in~\cite{DBLP:conf/time/GuptaS14,DBLP:journals/corr/GuptaS14c} and we
only provide some additional results when relevant for that case (see
also related works below).

%Those values are interesting in a synthesis context. For example, you may consider that the strategy $\sigma_0 \in \Sigma_0$ is a program or protocol that is published and used by rational users (modeled by followers). Then the Stackelberg setting offers the possibility to reason on how to construct the program/protocol to maximize some utility function modeled by ${\sf Payoff}_0$ under the hypothesis that the user will behave in interaction with $\sigma_0$ in order to maximize his/her own payoff modeled by ${\sf Payoff}_1$.

\vspace{-3mm}
\subparagraph{{\bf Main contributions}} First, we consider the mean-payoff function. For this payoff function, best responses of Player~1 to a strategy $\sigma_0 \in \Sigma_0$ not always exist (Lemma~\ref{lem:emptiness-nonemptiness}). As a consequence, the {\em cooperative} (\CSV) and {\em adversarial} (\ASV) 
Stackelberg values are defined using $\epsilon$-best responses.
While strategies of Player~0 to achieve $\CSV$ always exist as shown in~\cite{DBLP:conf/time/GuptaS14}, we show that it is not the case for $\ASV$ (Theorem~\ref{thm:equiv-def}). The $\ASV$ can only be approached in general and memory may be necessary to play optimally or $\epsilon$-optimally in adversarial Stackelberg games for the mean-payoff function (Theorem~\ref{thm:equiv-def}). We also provide results for related algorithmic problems. 
%We show how to compute the $\CSV$ and to decide the threshold problem for $\CSV$ in ${\sf NP} \cap {\sf coNP}$ and in pseudo-polynomial time (Theorem~\ref{thm:CSV-membership}). 
We provide a notion of witness for proving that the $\ASV$ is (strictly) above some threshold (Theorem~\ref{thm:witness-mp}), and it is the basis for an {\sf NP} algorithm to solve the threshold problem (Theorem~\ref{NP-membership-ASV-MP}). Finally, we show how the $\ASV$ can be computed effectively (Theorem~\ref{thm:effectiveASV}).

Second, we consider the discounted sum function. In that case, best responses of Player~1 to strategies $\sigma_0 \in \Sigma_0$ of Player~0 always exist (Lemma~\ref{lem:exists-ds-br}). The $\CSV$ and $\ASV$ are directly based on best-responses in that case. Then we draw a link between the {\em target discounted sum problem} and the $\CSV$ threshold problem (Lemma~\ref{lem:target-csv}). The target discounted sum problem has been studied recently in~\cite{DBLP:conf/lics/BokerHO15}, left open there for the general case and shown to be related to other open problems in piecewise affine maps and the representation of numbers in nonintegral bases. As a consequence, we introduce a relaxation of the threshold problems for both $\CSV$ and $\ASV$ in the form of gap problems (or promised problems as defined in~\cite{DBLP:conf/birthday/Goldreich06a}). We provide algorithms to solve those gap problems (Theorem~\ref{thm:ASV-CSV-gap}) both for $\CSV$ and $\ASV$. Finally, we prove {\sf NP}-hardness for the gap problems both for $\CSV$ and $\ASV$ (Theorem~\ref{thm:hardness-gap-ds}).

\vspace{-3mm}
\subparagraph{{\bf Closely related work}}
The notions of cooperative and adversarial synthesis have been introduced in~\cite{DBLP:conf/tacas/FismanKL10,DBLP:journals/amai/KupfermanPV16}, and further studied in~\cite{DBLP:conf/icalp/ConduracheFGR16,DBLP:conf/lics/FiliotGR18}. Those two notions are closely related to our notion of cooperative and adversarial Stackelberg value respectively. The games that are considered in those papers are infinite duration games played on graphs but they consider Boolean $\omega$-regular payoff functions or finite range $\omega$-regular payoff functions. Neither the mean-payoff function nor the discounted sum payoff function are $\omega$-regular, and thus they are not considered in~\cite{DBLP:conf/tacas/FismanKL10,DBLP:journals/amai/KupfermanPV16}. The $\omega$-regularity of the payoff functions that they consider is central to their techniques: they show how to reduce their problems to problems on tree automata and strategy logic. Those reductions cannot be used for payoff functions that are {\em not} $\omega$-regular functions and we need specific new techniques to solve our problems.

In~\cite{DBLP:conf/time/GuptaS14,DBLP:journals/corr/GuptaS14c}, the
cooperative scenario for Stackelberg game is studied for mean-payoff
and discounted sum respectively. Their results are sufficient to solve
most of the relevant questions on the $\CSV$ but not for
$\ASV$. Indeed, the techniques that are used for $\CSV$ are closely
related to the techniques that are used to reason on Nash equilibria
and build on previous works~\cite{DBLP:conf/lfcs/BrihayePS13} which in
turn reduce to algorithmic solutions for zero-sum one dimensional mean-payoff (or discounted sum games). For the $\ASV$ in the context of the mean-payoff function, we have to use more elaborate
multi-dim. mean-payoff games and a notion of Pareto curve adapted
from~\cite{DBLP:conf/cav/BrenguierR15}. Additionally, we provide new
results on the $\CSV$ for the discounted sum function. First, our
reduction that relates the target discounted sum problem to the $\CSV$ is new and gives additional explanations why the $\CSV$ is difficult to solve and not solved in the general case in~\cite{DBLP:journals/corr/GuptaS14c}. Second, while we also leave the general problem open here, we show how to solve the gap problems related to both $\CSV$ and $\ASV$. Finally, the authors of~\cite{DBLP:conf/sefm/GuptaSTDP16} study {\em incentive equilibria} for multi-player mean-payoff games. This work is an extension of their previous work~\cite{DBLP:conf/time/GuptaS14} and again concentrates on $\CSV$ and does not consider $\ASV$.

%The authors of~\cite{DBLP:conf/lfcs/BrihayePS13} consider Nash-equilibria in multi-player quantitative games played on graphs. They consider general conditions that ensure the existence of NE in those games. They do not consider Stackberg equilibria nor NE that would optimize the value received by players. Also, their results are in constrast with our results. For example, their main result is to establish that there always exist simple NEs with strategy profiles that are memoryless or only bounded size memory. In contrast, we have proved that strategies to obtain equilibria with high cooperative or adversarial Stackelberg values require infinite memory.

\vspace{-3mm}\subparagraph{{\bf Structure of the paper}} In Sect.~2, we introduce the necessary preliminaries for our definitions and developments. In Sect.~3, we consider the adversarial Stackelberg value for the mean-payoff function. In Sect.~4, we present our results for the discounted sum function.

\section{Preliminaries and notations}

\subparagraph{{\bf Arenas}} A (bi-weighted) \emph{arena}  $\mathcal{A}=(V, E, \langle V_0, V_1\rangle, w_0,w_1)$ consists of a finite set $V$ of vertices, a set $E\subseteq V\times V$ of edges such that for all $v\in V$ there exists   $v'\in V$ such that    $(v,v')\in E$, a partition  $\langle V_0,V_1\rangle$ of $V$, where $V_0$ (resp. $V_1$) is   the set of vertices for Player $0$ (resp. Player $1$), and two edge weight functions $w_0:E \mapsto \mathbb{Z},w_1: E\mapsto \mathbb{Z}$. In the sequel, we denote the maximum absolute value of a weight in ${\cal A}$ by $W$. As arenas are directed weighted graphs, we use, sometimes without recalling the details, the classical vocabulary for directed graphs. E.g., a set of vertices $S \subseteq V$ is a strongly connected component of the arena ({\sf SCC} for short), if for all $s_1,s_2 \in S$, there exists a path from $s_1$ to $s_2$ and a path from $s_2$ to $s_1$.

\subparagraph{{\bf Plays and histories}}
 A \emph{play} in $\mathcal{A}$ is an infinite sequence of vertices $\pi=\pi_0\pi_1\dots \in V^\omega$ such that for all $k \in  \mathbb{N}, (\pi_k,\pi_{k+1}) \in E$. We denote  by $\sf{Plays}_\mathcal{A}$  the set of plays  in $\mathcal{A}$, omitting the subscript $\mathcal{A}$ when the underlying arena is clear from the context. 
 %Given a play $\pi \in \sf{Plays}_\mathcal{A}$, we denote by $\sf{Inf}(\pi)$ the set of vertices that appears infinitely often on $\pi$.
Given  $\pi=\pi_0\pi_1\dots \in \sf{Plays}_\mathcal{A}$ and $k\in \mathbb{N}$, the prefix $\pi_0\pi_1\dots \pi_k$ of $\pi$ (resp. suffix $\pi_k\pi_{k+1}\dots$ of $\pi$) is denoted by $\pi_{\leq k}$  (resp.  $\pi_{\geq k}$). 
 An \emph{history} in $\mathcal{A}$ is a (non-empty) prefix of a play in $\mathcal{A}$. The length $|h|$ of an history  $h=\pi_{\leq k} $ is the number $|h|=k$  of its edges. We denote  by $\sf{Hist}_\mathcal{A}$ the set of  histories in $\mathcal{A}$, $\mathcal{A}$ is omitted when clear from the context. Given $i\in \{0,1\}$, the set $\sf{Hist}^{\it i}_\mathcal{A}$ denotes   the set of histories such that their last vertex  belongs to $V_i$. We denote the last vertex of a history $h$ by ${\sf last}(h)$.
 We write $h \leq \pi$ whenever $h$ is a prefix of $\pi$.
A play $\pi$ is called a lasso if it is obtained as the concatenation of 
a history $h$ concatenated with the infinite repetition of another
history $l$, i.e. $\pi = h \cdot l^\omega$ with $h,l \in
\sf{Hist}_\mathcal{A}$ (notice that $l$ is not necessary a simple
cycle). The {\em size} of a lasso $h \cdot l^\omega$ is defined as $|h \cdot l|$.
Given a vertex $v\in V$ in the  arena $\mathcal{A}$, we denote by $Succ(v) = \{v' | (v,v') \in E\}$ the set of successors of $v$  and by $Succ^*$ its transitive closure.

\subparagraph{{\bf Games}} A \emph{game} $\mathcal{G}=(\mathcal{A}, \langle \val_0,\val_1\rangle )$ consists of a bi-weighted arena $\mathcal{A}$, a value function  $\val_0: \sf{Plays}_\mathcal{A}\mapsto \mathbb{R} $    for Player $0$ and a value function $\val_1: \sf{Plays}_\mathcal{A}\mapsto \mathbb{R} $ for  Player $1$. In this paper, we consider the classical \emph{mean-payoff} and \emph{discounted-sum} value functions. Both are played in bi-weighted arenas.

 In a \emph{mean-payoff} game $\mathcal{G}=(\mathcal{A},   \langle \sf{MP}_0,\sf{MP}_1\rangle )$   the payoff functions  $\sf{MP}_0,  \sf{MP}_1$ are defined as follows. Given a play  $\pi\in\sf Plays_\mathcal{A}$ and  $i\in\{0,1\}$, the payoff  $\MPinf_i(\pi)$ is given by $\MPinf_i(\pi)=\liminf_{k\rightarrow\infty}\frac{1}{k}  w_i(\pi_{\leq k}) $, where the weight $w_i(h)$ of an history $h\in\sf{Hist}$ is the sum of the weights assigned by $w_i$ to its edges. In our definition of the mean-payoff, we have used $\liminf$, we will also need the $\limsup$ case for technical reasons. Here is the formal definition together with its notation:  $\MPsup_i(\pi)=\limsup_{k\rightarrow\infty}\frac{1}{k}  w_i(\pi_{\leq k}) $
 
 For a given discount factor $0<\lambda<1$, a  {\em discounted sum}
 game is a game  $\mathcal{G}=(\mathcal{A},   \langle
 \sf{DS}^\lambda_0,\sf{DS}^\lambda_1\rangle )$ where  the payoff
 functions  $\sf{DS}^\lambda_0,  \sf{DS}^\lambda_1$ are defined as
 follows. Given a play  $\pi\in\sf Plays_\mathcal{A}$ and
 $i\in\{0,1\}$, the payoff  $\sf{DS}^\lambda_i(\pi)$ is defined as ${\sf{DS}}^\lambda_i(\pi)= \sum_{k=0}^\infty  \lambda^k w_i(\pi_k,\pi_{k+1}) $.

\subparagraph{{\bf Strategies and payoffs}} A strategy for Player $i\in \{0,1\}$  in a game $\mathcal{G}=(\mathcal{A}, \langle \val_0,\val_1\rangle )$ is a function $\sigma:{\sf{Hist}}^i_\mathcal{A}\mapsto V$ that maps histories ending with a vertex $v\in V_i$  to a successor of $v$. The set of all strategies of Player $i\in\{0,1\}$ in the game $\mathcal{G}$ is denoted $\Sigma_i(\mathcal{G})$, or $\Sigma_i$ when $\mathcal{G}$ is clear from the context.

A strategy has memory $M$ if it can be realized as the output of a finite state machine with $M$ states. A memoryless (or positional) strategy is a strategy with memory $1$, that is, a function that only depends on the last element of the given partial play. We note $\Sigma_i^{\sf ML}$ the set of memoryless strategies of Player~$i$, and $\Sigma_i^{\sf FM}$ its set of finite memory strategies.
 A \emph{profile} is a pair of strategies $\bar{\sigma}=(\sigma_0,\sigma_1)$, where $\sigma_0\in \Sigma_0(\mathcal{G})$ and $\sigma_1\in \Sigma_1(\mathcal{G})$.
  As we consider games with perfect information and deterministic transitions, any  profile $\bar{\sigma}$  yields, from any history $h$, a unique {\em play} or \emph{outcome}, denoted ${\sf Out}_h(\mathcal{G}, \bar{\sigma} )$. Formally, ${\sf Out}_h(\mathcal{G}, \bar{\sigma} )$ is the play $\pi$ such that $\pi_{\leq |h|-1} = h$  and $\forall k\geq |h|-1$ it holds that $\pi_{k+1} =\sigma_i(\pi_{\leq k})$ if $\pi_k \in V_i$. The set of outcomes (resp. histories) compatible with a strategy $\sigma\in  \Sigma_{i\in\{0,1\}}(\mathcal{G})$ after a history $h$ is ${\sf Out}_h(\mathcal{G}, \sigma) = \{\pi \:|\: \exists \sigma'\in\Sigma_{1-i}(\mathcal{G}) \mbox{ such that } \pi = {\sf Out}_h(\mathcal{G}, (\sigma, \sigma'))\}$ (resp. ${\sf Hist}_h(\sigma) = \{h' \in {\sf Hist}(\mathcal{G}) \:|\: \exists \pi \in {\sf Out}_h(\mathcal{G}, \sigma), n\in\mathbb{N}: h' = \pi_{\leq n}\}$.

  Each outcome $\pi$ in  $\mathcal{G}=(\mathcal{A}, \langle \val_0,\val_1\rangle )$ yields a payoff $\val(\pi)=(\val_0(\pi),\val_1(\pi))$, where $\val_0(\pi)$ is the payoff  for Player $0$ and  $\val_1(\pi)$ is the payoff  for Player $1$. We denote by $\val(h,\bar{\sigma})=\val({\sf Out}_h(\mathcal{G}, \bar{\sigma} ))$ the payoff of a profile of strategies $\bar{\sigma}$
 after a history $h$.
 
 Usually, we consider game instances such that players start to play at a fixed vertex $v_0$.
Thus, we call an initialized game a pair $(\mathcal{G}, v_0)$, where  $\mathcal{G}$ is a game and $v_0\in V$ is the initial vertex. When the initial vertex $v_0$ is clear from context, we speak directly of $\mathcal{G}, \outcome(\mathcal{G}, \bar{\sigma} ), \outcome(\mathcal{G}, \sigma) $, $\val(\bar{\sigma})$ instead of $(\mathcal{G}, v_0)$, $\outcome_{v_0} (\mathcal{G}, \bar{\sigma} )$, $\outcome_{v_0}(\mathcal{G}, \sigma)$,$\val(v_0,\bar{\sigma})$ . We sometimes simplify further the notation omitting also $\mathcal{G}$, when the latter is clear from the context.

\vspace{-3mm}
\subparagraph{{\bf Best-responses and adversarial value in zero-sum games}} 
Let $\game=(\mathcal{A}, \langle \val_0, \val_1 \rangle)$ be a $(\val_0,\val_1)$-game on the bi-weighted arena $\arena$. Given a strategy $\sigma_0$ for Player~0, we define two sets of strategies for Player~1. His {\em best-responses} to $\sigma_0$, noted $\BR_1(\sigma_0)$, and defined as: 
            $$\left\{  \sigma_1 \in \Sigma_1 \mid \forall v \in V \cdot  \forall \sigma'_1 \in \Sigma_1: \val_1(\outcome_v(\sigma_0,\sigma_1)) \geq \val_1(\outcome_v(\sigma_0,\sigma'_1)) \right\}.$$ 
\noindent 
And his  {\em $\epsilon$-best-responses} to $\sigma_0$, for $\epsilon >0$, noted $\BR^{\epsilon}_1(\sigma_0)$, and defined as:
            $$\left\{  \sigma_1 \in \Sigma_1 \mid \forall v \in V \cdot \forall \sigma'_1 \in \Sigma_1: \val_1(\outcome_v(\sigma_0,\sigma_1) \geq \val_1(\outcome_v(\sigma_0,\sigma'_1))-\epsilon \right\}.$$ 
  \noindent
  We also introduce notations for zero-sum games (that are needed as intermediary steps in our algorithms).
  The adversarial value that Player~1 can enforce in the game $\game$ from vertex $v$ as:
%  $$\WCV_0(v)=\sup_{\sigma_0 \in \Sigma_0}\inf_{\sigma_1 \in \Sigma_1} \val_0(\outcome_v(\sigma_0,\sigma_1))$$
  $\WCV_1(v)=\sup_{\sigma_1 \in \Sigma_1}\inf_{\sigma_0 \in \Sigma_0} \val_1(\outcome_v(\sigma_0,\sigma_1))$.
%\noindent
%While the adversarial (or worst-case) value is the classical solution concept to reason about zero-sum games, this concept is not rich enough to reason on non zero-sum games as considered in this paper. Richer solution concepts like Nash or Stackelberg equilibria have been proposed for non-zero sum games. 
%Nash equilibria for mean-payoff and discounted games have studied in~\cite{}. Here we study notions related to Stackelberg equilibria.
%The games that we consider in this paper are {\em non-zero} sum games
%as each player has its own payoff and wants to maximize it. This
%payoff is not necessarily the inverse or complement of the payoff of
%the other player. Nevertheless, several intermediary steps of our
%algorithmic solutions need to consider and solve zero-sum games that
%are derived from the original non-zero games. This is why we introduce
%the following additional notations. 
Let $\mathcal{A}$ be an arena, $v
\in V$ one of its states, and ${\cal O} \subseteq
\sf{Plays}_\mathcal{A}$ be a set of plays (called objective), then we
write $\mathcal{A}, v \models \Estrat{i} {\cal O}$, if $\exists
\sigma_i \in \Sigma_i \cdot \forall \sigma_{1-i} \in \Sigma_{1-i}:
{\sf Out}_v(\mathcal{A}, (\sigma, \sigma')) \in {\cal O}$, for $i \in
\{0,1\}$. Here the underlying interpretation is zero-sum: Player~$i$
wants to force an outcome in ${\cal O}$ and Player~$1-i$ has the
opposite goal. All the zero-sum games we consider in this paper are {\em determined} meaning that for all $\mathcal{A}$, for all objectives ${\cal O} \subseteq \sf{Plays}_\mathcal{A}$ we have that:
  $ \mathcal{A},v \models \Estrat{i} {\cal O}$ iff 
        $\mathcal{A},v \nvDash \Estrat{1-i} \sf{Plays}_\mathcal{A} \setminus {\cal O}$.

\subparagraph{Convex hull and $F_{\sf min}$}
First, we need som  e additional notations and vocabulary related to
linear algebra.
Given a finite set of $d$-dim. vectors $X \subset
\mathbb{R}^d$, we note the set of all their convex combinations as
${\sf CH}(X)=\{ v \mid v=\sum_{x \in X} \alpha_x \cdot x \land \forall
x \in X:\alpha_x \in [0,1] \land \sum_{x \in X} \alpha_x=1 \}$, this
set is called the {\em convex hull} of $X$. We also need the following additional, and less standard notions, introduced
in~\cite{DBLP:conf/concur/ChatterjeeDEHR10}. Given a finite set of
$d$-dim. vectors $X \subset \mathbb{R}^d$, let $f_{\sf min}(X)$ be the vector $v=(v_1,v_2,\dots,v_d)$ where $v_i=\min\ \{ c \mid \exists x \in X : x_i=c \}$, i.e. the vector $v$ is the pointwise minimum of the vectors in $X$.
Let $S \subseteq \mathbb{R}^d$, then $F_{\sf min}(S)=\{ f_{\sf min}(P) \mid P \mbox{~is a finite subset of~} S\}$. The following proposition expresses properties of the $F_{\sf min}(S)$ operator that are useful for us in the sequel. The interested reader will find more results about the $F_{\sf min}$ operator in~\cite{DBLP:conf/concur/ChatterjeeDEHR10}.

\begin{proposition}
\label{prop:larger}
For all sets $S \subseteq \mathbb{R}^d$, for all $x \in F_{\sf min}(S)$, there exists $y \in S$ such that $x \leq y$. If $S$ is a closed bounded set then $F_{\sf min}(S)$ is also a closed bounded set.
\end{proposition}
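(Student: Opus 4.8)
I would treat the two assertions separately. The first is immediate, so I would dispatch it in one line: given $x \in F_{\sf min}(S)$, write $x = f_{\sf min}(P)$ for some nonempty finite $P \subseteq S$ (nonemptiness is what makes $f_{\sf min}(P)$ a genuine real vector), fix any $p \in P$, and note that since each coordinate $x_i$ is a minimum over $P$ of $i$-th coordinates we have $x_i \le p_i$ for all $i$, i.e. $x \le p$ componentwise, with $p \in S$. Thus $y := p$ works.

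For the second assertion I would first record that $S$ closed and bounded means $S$ compact by Heine--Borel, and that boundedness of $F_{\sf min}(S)$ is routine: if $S \subseteq [L,U]^d$ then every coordinate of any $f_{\sf min}(P)$ is a minimum of values in $[L,U]$, so $F_{\sf min}(S) \subseteq [L,U]^d$. The substantive part is closedness, and my plan is to present $F_{\sf min}(S)$ as a continuous image of a compact set rather than argue with limits directly.

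The crucial step is a reduction in the number of points: although the definition of $F_{\sf min}(S)$ quantifies over arbitrarily large finite $P$, each value $f_{\sf min}(P)$ is already determined by at most $d$ of its points. Concretely, for each coordinate $i$ I would select $q^{(i)} \in P$ attaining $x_i = \min_{p \in P} p_i$ and set $P' = \{q^{(1)}, \dots, q^{(d)}\}$; since $P' \subseteq P$ the coordinatewise minima agree, so $f_{\sf min}(P') = f_{\sf min}(P)$. Padding $P'$ up to exactly $d$ entries with repetitions (which leaves the coordinatewise minimum unchanged) identifies $F_{\sf min}(S)$ with $\phi(S^d)$, where $\phi : (\mathbb{R}^d)^d \to \mathbb{R}^d$ sends $(s^{(1)}, \dots, s^{(d)})$ to its coordinatewise minimum, $\phi(s^{(1)}, \dots, s^{(d)})_i = \min_{1 \le j \le d} s^{(j)}_i$. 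Then I would conclude: $\phi$ is continuous (a finite minimum of projections), $S^d$ is compact (a finite product of compacts), and a continuous image of a compact set is compact, hence closed and bounded.

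The main obstacle is exactly this bound on the number of points. Without it, proving closedness directly would require showing that a convergent sequence $f_{\sf min}(P^{(n)}) \to x$ again arises as $f_{\sf min}$ of some finite subset of $S$, where the cardinalities $|P^{(n)}|$ are a priori unbounded and no clean subsequence argument is available. The reduction to $d$ points collapses $F_{\sf min}(S)$ into the image $\phi(S^d)$ and thereby turns closedness into the textbook fact about compact images.
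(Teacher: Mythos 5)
Your proof is correct, and a preliminary remark is in order: the paper does not actually prove Proposition~\ref{prop:larger} --- it is stated as a known property of the $F_{\sf min}$ operator, imported from the cited work of Chatterjee et al.\ on mean-payoff automaton expressions --- so the comparison is with the standard argument rather than with an in-paper proof. Your handling of the first claim (any $p \in P$ dominates $f_{\sf min}(P)$ coordinatewise) is the one-liner it should be, and you rightly flag that $P$ must be nonempty for $f_{\sf min}(P)$ to be defined. For the second claim, your key step --- replacing an arbitrary finite $P$ by at most $d$ coordinate-minimizers $q^{(1)},\dots,q^{(d)}$, so that after padding with repetitions $F_{\sf min}(S) = \phi(S^d)$ for the continuous coordinatewise-minimum map $\phi : (\mathbb{R}^d)^d \to \mathbb{R}^d$ --- is exactly the right idea, and it is what turns the a priori unbounded quantification over finite subsets into a closed condition; the conclusion then follows from compactness of $S^d$ and continuity of $\phi$, as you say. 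You also correctly diagnose why a naive sequential argument stalls: with $|P^{(n)}|$ unbounded there is no usable subsequence extraction, whereas after the reduction one can simply extract convergent subsequences of the $d$-tuples in the compact $S^d$ (an equivalent, slightly more pedestrian way to finish that avoids invoking the compact-image theorem by name). Two marginal points worth a sentence if you write this up: the identification $F_{\sf min}(S) = \phi(S^d)$ tacitly assumes $S \neq \emptyset$ (if $S = \emptyset$ then $F_{\sf min}(S) = \emptyset$ and the statement is vacuous), and the inclusion $\phi(S^d) \subseteq F_{\sf min}(S)$ uses that a $d$-tuple with repeated entries still determines a nonempty finite subset of $S$ with the same coordinatewise minimum --- both trivial, but they close the loop.
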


%\begin{color}{red}
In the sequel, we also use formulas of the theory of the reals with addition and order, noted $\langle \mathbb{R},+,\leq \rangle$, in order to define subsets of $\mathbb{R}^n$. This theory is decidable and admits effective quantifier elimination~\cite{DBLP:journals/siamcomp/FerranteR75}.
%\end{color}

\section{Adversarial Stackelberg value for mean-payoff games}
\label{secMP}

\subparagraph{{\bf Mean-payoffs induced by simple cycles}}
%The following theorem is a $2$-dim. version of a theorem established
%in~\cite{DBLP:conf/concur/ChatterjeeDEHR10}. It 
 Given a
play $\pi \in \sf{Plays}_\mathcal{A}$, we note ${\sf inf}(\pi)$ the
set of vertices $v$ that appear infinitely many times along $\pi$,
i.e. ${\sf inf}(\pi)=\{ v \mid \forall i \in \mathbb{N} \cdot \exists
j \geq i: v=\pi_j \}$. It is easy to see that ${\sf inf}(\pi)$ is an
{\sf SCC} in the underlying graph of the arena $\mathcal{A}$. A {\em
  cycle} $c$ is a sequence of edges that starts and stops in a given
vertex $v$, it is {\em simple} if it does not contain any other
 repetition of vertices. Given an {\sf SCC} $S$, we write
$\mathbb{C}(S)$ for the set of simple cycles inside $S$.  Given a
simple cycle $c$, for $i \in \{0,1\}$, let
$\MP_i(c)=\frac{w_i(c)}{|c|}$ be the mean of $w_i$ weights along edges
in the simple cycle $c$, and we call the pair $(\MP_0(c),\MP_1(c))$
the {\em mean-payoff coordinate} of the cycle $c$. We write ${\sf
  CH}(\mathbb{C}(S))$ for the convex-hull of the set of mean-payoff
coordinates of simple cycles of $S$.  The following theorem relates the $d$-dim. mean-payoff values of infinite plays
and the $d$-dim. mean-payoff of simple cycles in the arena.

\begin{theorem}[\cite{DBLP:conf/concur/ChatterjeeDEHR10}]
\label{thm:caractSCC}
Let $S$ be an {\sf SCC} in the arena $\mathcal{A}$, the following two properties hold:
%  \begin{enumerate}
%      \item 
      $(i)$ for all $\pi \in \sf{Plays}_\mathcal{A}$, if ${\sf inf}(\pi) \subseteq S$ then $(\MPinf_0(\pi),\MPinf_1(\pi)) \in F_{\sf min}({\sf CH}(\mathbb{C}(S)))$
%      \item 
      $(ii)$ for all $(x,y) \in F_{\sf min}({\sf CH}(\mathbb{C}(S)))$, there exists $\pi \in \sf{Plays}_\mathcal{A}$ such that ${\sf inf}(\pi)=S$ and $(\MPinf_0(\pi),\MPinf_1(\pi))=(x,y)$.
      Furthermore, the set $F_{\sf min}({\sf CH}(\mathbb{C}(S)))$ is effectively expressible in $\langle \mathbb{R},+,\leq \rangle$. 
%  \end{enumerate}
\end{theorem}

%\begin{color}{Red}
In the sequel, we denote by $\Phi_{S}(x,y)$ the formula with two free variables in $\langle \mathbb{R},+,\leq \rangle$ such that for all $(u,v) \in \mathbb{R}^2$, $(u,v) \in F_{\sf min}({\sf CH}(\mathbb{C}(S)))$ if and only if $\Phi_{S}(x,y)[x/u,y/v]$ is true.
%\end{color}

%The following lemma is a corollary of Theorem~\ref{thm:caractSCC} and Proposition~\ref{prop:larger}.

%\begin{lemma}
%\label{lem:caractSCC-larger}
%Let $S$ be an {\sf SCC} in the arena $\mathcal{A}$, the following two properties hold:
%  \begin{enumerate}
%      \item for all $\pi \in \sf{Plays}_\mathcal{A}$ such that ${\sf inf}(\pi)=S$, there exists $\pi': {\sf inf}(\pi')={\sf inf}(\pi)$ and $(\MPinf_0(\pi'),\MPinf_1(\pi')) \geq (\MPinf_0(\pi),\MPinf_1(\pi))$ and $(\MPinf_0(\pi'),\MPinf_1(\pi')) \in {\sf CH}(\mathbb{C}(S))$.
%      \item for all $(x,y) \in {\sf CH}(\mathbb{C}(S))$, there exists a play $\pi \in \sf{Plays}_\mathcal{A}$: ${\sf inf}(\pi)=S$ and $(\MPinf_0(\pi),\MPinf_1(\pi))=(x,y)$.
%  \end{enumerate}
%\end{lemma}

\subparagraph{{\bf On the existence of best-responses for $\MP$}}
We start the study of mean-payoff games with some considerations about the existence of best-responses and  $\epsilon$-best-responses for Player~1 to strategies of Player~0.

\begin{lemma} 
%The following statements are true:
\label{lem:emptiness-nonemptiness}
%  \begin{enumerate}
      %\item 
      There is a mean-payoff game $\game$ and a strategy $\sigma_0 \in \Sigma_0(\game)$ such that $\BR_1(\sigma_0)=\varnothing$. 
     %\item 
     For all mean-payoff games $\game$ and finite memory strategies $\sigma_0 \in \Sigma^{\sf FM}_0(\game)$,  $\BR_1(\sigma_0)\neq\varnothing$.
     %\item 
     For all mean-payoff games $\game$, for all strategies $\sigma_0 \in \Sigma_0(\game)$, for all $\epsilon >0$, $\BR^{\epsilon}_1(\sigma_0)\neq\varnothing$.
  %\end{enumerate}
\end{lemma}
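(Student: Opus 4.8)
The plan is to treat the three assertions separately, but I would first record a structural reduction that organizes all of them. Since $\sigma_0$ is fixed throughout, from Player~1's viewpoint each statement concerns the family of suprema $V^\ast(v) := \sup_{\sigma_1 \in \Sigma_1} \MPinf_1(\outcome_v(\sigma_0,\sigma_1))$, one per vertex $v \in V$; these are finite reals because every payoff lies in $[-W,W]$. The key elementary observation is that histories rooted at distinct vertices are disjoint, so a single Player~1 strategy may be defined independently on the histories starting at each $v$. Consequently, if for every $v$ some strategy attains (resp.\ $\epsilon$-approximates) $V^\ast(v)$, then gluing these strategies by their first vertex yields one strategy that is optimal (resp.\ $\epsilon$-optimal) from every $v$ simultaneously, i.e.\ a member of $\BR_1(\sigma_0)$ (resp.\ $\BR^{\epsilon}_1(\sigma_0)$). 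This reduces the second and third statements to a per-vertex question, and shows that the first statement can only arise from a $\sigma_0$ for which some $V^\ast(v)$ is a supremum that is \emph{not} attained.

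For the second statement, I would fix a finite-memory $\sigma_0 \in \Sigma^{\sf FM}_0(\game)$ with memory structure $M$ and form the product $\arena \times M$. In this finite product the moves at Player~0 vertices are fixed by $\sigma_0$, so only Player~1 still chooses, and maximizing $\MPinf_1$ becomes a one-player mean-payoff optimization on a finite graph. There the optimal value from any vertex equals the maximum mean $w_1$-weight over the simple cycles reachable from it, attained by the positional strategy that reaches such a best reachable cycle and loops it forever (its tail is periodic, so $\MPinf_1 = \MPsup_1$ equals the cycle mean); one such positional strategy is uniformly optimal from all product vertices. Pulling it back through $M$ yields a finite-memory Player~1 strategy attaining $V^\ast(v)$ for every $v$, which is therefore uniformly optimal and lies in $\BR_1(\sigma_0)$; hence $\BR_1(\sigma_0) \neq \varnothing$.

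For the third statement $\sigma_0$ may use infinite memory, so no finite-graph argument is available, but only $\epsilon$-approximation is required. By the very definition of supremum, for each $v$ there is a strategy $\sigma_1^{(v)}$ with $\MPinf_1(\outcome_v(\sigma_0,\sigma_1^{(v)})) \geq V^\ast(v) - \epsilon$, which is possible precisely because $V^\ast(v)$ is a finite real. Gluing the $\sigma_1^{(v)}$ by first vertex produces $\sigma_1$ with $\MPinf_1(\outcome_v(\sigma_0,\sigma_1)) \geq V^\ast(v) - \epsilon \geq \MPinf_1(\outcome_v(\sigma_0,\sigma_1')) - \epsilon$ for all $v$ and all $\sigma_1' \in \Sigma_1$, i.e.\ $\sigma_1 \in \BR^{\epsilon}_1(\sigma_0)$.

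The first statement is the delicate one, and by the reduction above it amounts to building a game and a (necessarily infinite-memory, by the second statement) strategy $\sigma_0$ for which $V^\ast(v)$ is approached but never attained. I plan a two-vertex gadget: a Player~1 vertex $v_1$ carrying a $0$-weight self-loop used to signal a ``patience level'' and an edge to a Player~0 vertex $v_0$ carrying a reward-$1$ self-loop, together with a $\sigma_0$ that (i) in steady state at patience level $P$ lets the running $w_1$-average settle at exactly $1 - 1/P$, and (ii) whenever Player~1 raises its patience above every previously used value, injects a $0$-reward detour whose length equals the history length so far, forcing the running average down to at most $1/2$. Then every Player~1 strategy of bounded patience eventually settles at some finite level $P$, giving $\MPinf_1 = 1 - 1/P < 1$, while every strategy of unbounded patience triggers the punishment infinitely often, giving $\MPinf_1 \leq 1/2$; in both cases the value is strictly below $1$, yet levels $P \to \infty$ witness that the supremum equals $1$. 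The main obstacle is exactly this verification: arranging the reward schedule so the steady-state average is $1 - 1/P$, and carrying out the bounded-versus-unbounded case analysis rigorously so that no single strategy reaches $1$. Non-attainment at this one vertex already forces $\BR_1(\sigma_0) = \varnothing$.
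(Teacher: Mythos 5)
Your proofs of the second and third assertions are correct and essentially the paper's own argument: for finite-memory $\sigma_0$ the paper likewise takes the product of the arena with the memory machine and invokes the existence of optimal cycles (Karp) in the resulting one-player mean-payoff graph, and for the third assertion it likewise picks, by definition of supremum, a play within $\epsilon$ of $\sup_{\sigma_1}\MPinf_1(\outcome_v(\sigma_0,\sigma_1))$ and follows it. Your explicit gluing of per-vertex strategies to handle the $\forall v\in V$ quantifier in the definition of $\BR_1$ is a point the paper dispatches silently via prefix-independence; making it explicit is fine and correct, since histories rooted at distinct vertices are disjoint.

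The genuine gap is in the first assertion. In the gadget you describe, $v_0$ carries \emph{only} a reward-$1$ self-loop, so Player~0 has no choices after Player~1 exits: every finite-patience response yields the play $v_1^{P}v_0^{\omega}$, and by prefix-independence of the mean-payoff this gives $\MPinf_1=1$ exactly, for every $P$. The supremum is thus \emph{attained} by every exiting strategy, $\BR_1(\sigma_0)\neq\varnothing$, and the construction proves the opposite of what you want. Both ingredients of your intended $\sigma_0$ are unrealizable in that arena: "settling the average at exactly $1-1/P$" and "injecting $0$-reward detours" require Player~0 to dispose of edges with at least two distinct $w_1$-weights, and she has none. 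This is precisely the structure the paper builds into its example (Fig.~\ref{fig:game-empty}): after Player~1's exit, Player~0 controls a region with loops of weight $1$ and weight $2$, so that against patience $k$ she can play $c^k d$ repeatedly and calibrate the limit to $\frac{2k+1}{k+1}=2-\frac{1}{k+1}$, strictly increasing in $k$ with unattained supremum $2$. Note also that your clause (ii) is vacuous as stated: in a two-vertex gadget where play never returns to $v_1$, each Player~1 strategy has one fixed patience (the history $v_1^{k+1}$ determines the single decision), so there is no "raising patience above previously used values" and no bounded-versus-unbounded dichotomy to analyze; with the repaired arena the case analysis you flag as the main obstacle simply disappears, since every response is either "exit after fixed $k$" (value $2-\frac{1}{k+1}$) or "never exit" (value $0$), and no best response exists.
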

\begin{proof}[Proof sketch - full proof in Appendix]
First, in the arena of Fig.~\ref{fig:game-empty}, we consider the strategy of Player~0 that plays the actions $c$ and $d$ with a frequency that is equal to $1-{\frac{1}{k}}$ for $c$ and $\frac{1}{k}$ for $d$ where $k$ is the number of times that Player~1 has played $a$ in state $1$ before sending the game to state $2$. We claim that there is no best response of Player~1 to this strategy of Player~0. Indeed, taking $a$ one more time before going to state $2$ is better for Player~1. 

Second, if Player~0 plays a finite memory strategy, then a best response for Player~1 is an optimal path for the mean-payoff of Player~1 in the finite graph obtained as the product of the original game arena with the finite state strategy of Player~0. Optimal mean-payoff paths are guaranteed to exist~\cite{DBLP:journals/dm/Karp78}. 

Finally, the existence of $\epsilon$-best responses for $\epsilon > 0$, is guaranteed by an analysis of the infinite tree obtained as the unfolding of the game arena with the (potentially infinite memory) strategy of Player~1. Branches of this tree witness responses of Player~1 to the strategy of Player~0. The supremum of the values of those branches for Player~1 is always approachable to any $\epsilon > 0$.
\end{proof}

\begin{figure}[t]
    \centering

\begin{tikzpicture}[->,>=stealth',shorten >=1pt,auto,node distance=2.8cm,
                    thick,scale=0.7,every node/.style={scale=0.8}]
  \tikzstyle{every state}=[fill=gray!30,text=black]
  \tikzstyle{every edge}=[draw=black]
  \tikzstyle{initial}=[initial by arrow]

  %% weighted automaton

	\node[thick,state,initial left,initial text=,rectangle,minimum size=25pt]
        (s1) at (0,0) {$1$};

	\node[thick,state,minimum size=25pt]
        (s2) at (4,0) {$2$};

	\node[thick,state,minimum size=25pt]
        (s3) at (8,0) {$3$};

        \draw (s1) edge [loop above] node {(0,0)} node [left,xshift=-0.1cm,yshift=-0.35cm] {$a$} (s1) ;

        \draw (s1) edge node {(0,0)} node [below] {$b$} (s2) ;

        \draw (s2) edge [loop above] node {(0,1)} node [left,xshift=-0.1cm,yshift=-0.35cm] {$d$} (s2) ;

        \draw (s3) edge [loop above] node {(0,2)} node [left,xshift=-0.1cm,yshift=-0.35cm] {$c$} (s3) ;

        \draw (s2) edge [bend left] node [yshift=-0.07cm] {(0,2)} node [below] {$c$} (s3) ;

        \draw (s3) edge [bend left] node [above,yshift=-0.07cm] {(0,1)} node [below] {$d$} (s2) ;
        % \draw (s2) edge [bend left,->] node {$\trans{\underline{\hspace{0.2cm}}}{grant_2}$} (s1) ;
\end{tikzpicture}

    \caption{A mean-payoff game in which there exists a
      Player~0's strategy $\sigma_0$ such that $\BR_1(\sigma_0)=\varnothing$.}
    \label{fig:game-empty}
\vspace{-5mm}
\end{figure}

%\vspace{-3mm}
According to Lemma~\ref{lem:emptiness-nonemptiness}, the set of best-responses of Player~1 to a strategy of Player~0 can be empty. As a consequence, we need to use the notion of  $\epsilon$-best-responses (which are always guaranteed to exist) when we define the adversarial Stackelberg value:
%We give below the formal definitions for those values together with some comments that justify the choices that we make. 
%in mean-payoff games can be empty, we use the notion of  $\epsilon$-best-responses for the definitions of $\CSV$ and $\ASV$ in the case of the mean-payoff measure: 
%We can consider two variants: one in which $\epsilon$ is quantified and one in which it is fixed: $\epsilon$ is quantified.
%In the adversarial case, for a strategy $\sigma_0$ and an $\epsilon \geq 0$ such that $\BR_{1}^{\epsilon}(\sigma_0)\neq\varnothing$ are fixed, we can{\em not} rely on the cooperation of Player~1: among the  $\epsilon$-best-responses of Player~1, we assume that Player~1 chooses strategies that minimize the payoff of Player~0. Note that contrary to the cooperative case, Player~0 is better off when the set of  $\epsilon$-best-response is as small as possible.  So in this case, it makes sense to assume that Player~0 chooses the value of $\epsilon$ among those that make the set of  $\epsilon$-best-responses non-empty. This intuition is matched by the following definition of the adversarial Stackelberg value of a strategy $\sigma_0 \in \Sigma_0$ and of a vertex $v$:
$$\ASV(\sigma_0)(v)=\sup_{\epsilon \geq 0 ~\mid~ \BR_{1}^{\epsilon}(\sigma_0)\neq\varnothing} ~~~\inf_{\sigma_1 \in \BR_{1}^{\epsilon}(\sigma_0)} \MPinf_0(\outcome_v(\sigma_0,\sigma_1))
\mbox{~and~}\ASV(v)=\sup_{\sigma_0 \in \Sigma_0} \ASV(\sigma_0)(v)
$$
\noindent
We note that when best-responses to a strategy $\sigma_0$ exist, then
as expected  the following equality holds, because
$\BR_{1}(\sigma_0) = \BR_{1}^{0}(\sigma_0)$ and
$\BR_{1}^{\epsilon_1}(\sigma_0)\subseteq
\BR_{1}^{\epsilon_2}(\sigma_0)$ for all $\epsilon_1\leq \epsilon_2$, $\epsilon$ should be taken equal to $0$:
$$\ASV(\sigma_0)(v)=\sup_{\epsilon \geq 0 ~\mid~ \BR_{1}^{\epsilon}(\sigma_0)\neq\varnothing} ~~~\inf_{\sigma_1 \in \BR_{1}^{\epsilon}(\sigma_0)} \MPinf_0(\outcome_v(\sigma_0,\sigma_1))=\inf_{\sigma_1 \in \BR_{1}(\sigma_0)} \MPinf_0(\outcome_v(\sigma_0,\sigma_1))$$
\noindent
Finally, we note that changing the $\sup$ over $\epsilon$ into an $\inf$ in our definition, we get the classical notion of worst-case value in which the rationality of Player~1 and his payoff are ignored. We also  recall the definition of $\CSV$, the cooperative Stackelberg value:
$$\CSV(\sigma_0)(v)=\sup_{\epsilon \geq 0 ~\mid~ \BR_{1}^{\epsilon}(\sigma_0)\neq\varnothing} ~~~\sup_{\sigma_1 \in \BR_{1}^{\epsilon}(\sigma_0)} \MPinf_0(\outcome_v(\sigma_0,\sigma_1))
\mbox{~and~}\CSV(v)=\sup_{\sigma_0 \in \Sigma_0} \CSV(\sigma_0)(v)$$

\noindent

The interest reader is referred to~\cite{DBLP:conf/time/GuptaS14} for an in-depth treatment of this value. \color{black}{}

\subparagraph{The adversarial Stackelberg value may not be achievable} In contrast with results in~\cite{DBLP:conf/time/GuptaS14} that show that $\CSV$ can always be achieved, the following statement expresses the fact that the adversarial Stackelberg value may not be achievable but it can always be approximated by a strategy of Player~0.

\begin{figure}[t]
    \centering

\begin{tikzpicture}[->,>=stealth',shorten >=1pt,auto,node distance=2.8cm,
                    thick,scale=0.8,every node/.style={scale=0.9}]
  \tikzstyle{every state}=[fill=gray!30,text=black]
  \tikzstyle{every edge}=[draw=black]
  \tikzstyle{initial}=[initial by arrow]

  %% weighted automaton

	\node[thick,state,initial above,initial text=,rectangle,minimum size=25pt]
        (s0) at (0,0) {$v_0$};

	\node[thick,state,minimum size=25pt]
        (s1) at (-4,0) {$v_1$};

	\node[thick,state,minimum size=25pt]
        (s2) at (4,0) {$v_2$};

        \draw (s0) edge node {$(0,1)$} (s2) ;

        \draw (s2) edge [loop above] node {(0,1)}  (s2) ;

        \draw (s1) edge [loop above] node {(0,2)} (s1) ;

        \draw (s0) edge [bend left] node [above ]{(1,1)}  (s1) ;

        \draw (s1) edge [bend left] node [above] {(1,1)}  (s0) ;
        % \draw (s2) edge [bend left,->] node {$\trans{\underline{\hspace{0.2cm}}}{grant_2}$} (s1) ;
\end{tikzpicture}

    \caption{In this game, $\ASV(v_0)=1$ but there is no Player~0 strategy to achieve this value.}
    \label{fig:no-opt-ASV}
\end{figure}
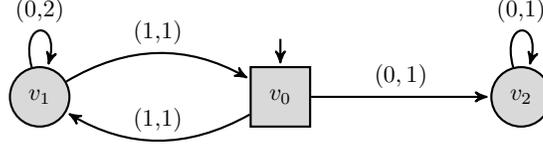

\begin{theorem}
\label{thm:equiv-def}
%\label{thm:asv-not-achievable}
There exists a mean-payoff game $\mathcal{G}$ in which Player~0 has no strategy which enforces the adversarial Stackelberg value. Furthermore, for all mean-payoff games $\mathcal{G}$, for all vertices $v \in V$, for all $\epsilon >0$, there exists a strategy $\sigma_0 \in \Sigma_0$ such that $\ASV(\sigma_0)(v) > \ASV(v) - \epsilon$. Memory is needed to achieve high $\ASV$.
\end{theorem}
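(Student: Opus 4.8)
The plan is to establish all three assertions with the single game $\mathcal G$ of Fig.~\ref{fig:no-opt-ASV}, where (following the paper's convention, as made explicit by the proof sketch of Lemma~\ref{lem:emptiness-nonemptiness}) the rectangular vertex $v_0$ belongs to Player~1 and the circular vertices $v_1,v_2$ belong to Player~0. First I would record the simple-cycle mean-payoff coordinates: inside the {\sf SCC} $\{v_0,v_1\}$ the two simple cycles are $v_0v_1$ with coordinate $(1,1)$ and the self-loop on $v_1$ with coordinate $(0,2)$, while the absorbing component $\{v_2\}$ contributes $(0,1)$. By Theorem~\ref{thm:caractSCC} (or by a direct frequency computation), every play that eventually stays in $\{v_0,v_1\}$ satisfies $\MPinf_0(\pi)=1-f$ and $\MPinf_1(\pi)=1+f$, where $f\in[0,1]$ is the long-run frequency of the $v_1$-self-loop; in particular $\MPinf_0+\MPinf_1=2$ on this component, and $\MPinf_0\le 1$ on every play of $\mathcal G$.

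Next I would pin down $\ASV(v_0)=1$. For the lower bound, consider the finite-memory strategy $\sigma_0^K$ that, at $v_1$, returns to $v_0$ except on every $K$-th visit, where it takes the self-loop exactly once before returning. Against $\sigma_0^K$ Player~1 obtains $\MPinf_1=1+\tfrac{1}{2K+1}>1$ by always moving $v_0\to v_1$, whereas diverting to $v_2$ yields only $1$; hence the unique best-response outcome is the $v_1$-play (best responses exist since $\sigma_0^K$ is finite-memory, by Lemma~\ref{lem:emptiness-nonemptiness}), on which $\MPinf_0=2-\MPinf_1=\tfrac{2K}{2K+1}$. Thus $\ASV(\sigma_0^K)(v_0)=\tfrac{2K}{2K+1}$ and $\sup_K\ASV(\sigma_0^K)(v_0)=1$, while the bound $\MPinf_0\le 1$ gives $\ASV(v_0)\le 1$.

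For non-achievability I would show $\ASV(\sigma_0)(v_0)<1$ for every $\sigma_0\in\Sigma_0$. Let $\beta=\sup_{\sigma_1}\MPinf_1(\outcome_{v_0}(\sigma_0,\sigma_1))$; since Player~1 can always go straight to $v_2$ we have $\beta\ge 1$, and the $v_2$-play gives Player~1 value $1$ but Player~0 value $0$. If $\beta=1$, the $v_2$-play is an $\epsilon$-best response for every $\epsilon\ge 0$, so the inner infimum is $\le 0$ and $\ASV(\sigma_0)(v_0)\le 0$. If $\beta>1$, then for small $\epsilon$ every $\sigma_1\in\BR_1^{\epsilon}(\sigma_0)$ has $\MPinf_1\ge\beta-\epsilon>1$, hence must stay in $\{v_0,v_1\}$ and satisfies $\MPinf_0=2-\MPinf_1\le 2-\beta+\epsilon$; letting $\epsilon\to 0$ yields $\ASV(\sigma_0)(v_0)\le 2-\beta<1$. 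In both cases the value is strictly below $1$, so $\ASV(v_0)=1$ is approached but never realized. The approximability assertion is then immediate, since $\ASV(v)=\sup_{\sigma_0}\ASV(\sigma_0)(v)$ is by definition a supremum, so for every $\epsilon>0$ some $\sigma_0$ lies within $\epsilon$. For the memory claim I would check the only two memoryless options in $\mathcal G$: always looping at $v_1$ drives Player~1 to loop there ($\beta=2$, $\MPinf_0=0$), and always returning makes $v_2$ an equally good response ($\beta=1$, infimum $\MPinf_0=0$); both give $\ASV(\sigma_0)(v_0)=0<1$, whereas $\sigma_0^K$ needs memory growing like $1/\epsilon$ to reach value $1-\epsilon$, so (unbounded) memory is necessary.

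The main obstacle I anticipate is the non-achievability step, specifically its interaction with the $\epsilon$-best-response machinery: because an arbitrary $\sigma_0$ may admit no genuine best response, the infimum must range over $\BR_1^{\epsilon}(\sigma_0)$ and I must control the supremum over $\epsilon$ as $\epsilon\to 0$, using the monotonicity $\BR_1^{\epsilon_1}(\sigma_0)\subseteq\BR_1^{\epsilon_2}(\sigma_0)$ for $\epsilon_1\le\epsilon_2$ recalled after the definition of $\ASV$. The remaining parts — the cycle-coordinate bookkeeping yielding $\MPinf_0+\MPinf_1=2$ on $\{v_0,v_1\}$, and the uniqueness of the best-response outcome against the finite-memory witnesses $\sigma_0^K$ — are routine.
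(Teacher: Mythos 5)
Your proposal is correct and takes essentially the same route as the paper: the same game of Fig.~\ref{fig:no-opt-ASV}, the same family of cycling strategies $\sigma_0^K$ for the lower bound $\ASV(v_0)=1$, non-achievability via the tension between Player~1's incentive to stay and Player~0's payoff (your $\beta$-dichotomy with the monotonicity of $\BR_1^{\epsilon}$ is in fact a more careful rendering of the paper's informal frequency argument, which matches the appendix proof), approximability read off directly from the supremum in the definition (the paper dresses this up as the equivalence $\ASV(v)>c$ iff $\exists\sigma_0:\ASV(\sigma_0)(v)>c$, needed later for Theorem~\ref{thm:witness-mp}, but your one-liner suffices for the statement itself), and the memory claim from the same example with the memoryless check made explicit. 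One small inaccuracy worth fixing: for $\liminf$ payoffs only $\MPinf_0(\pi)+\MPinf_1(\pi)\leq 2$ holds in general on plays trapped in $\{v_0,v_1\}$ (frequencies need not converge, and indeed $F_{\sf min}({\sf CH}(\mathbb{C}(S)))$ contains points of sum strictly below $2$), so your blanket equality $\MPinf_0+\MPinf_1=2$ is false for non-periodic plays --- but this is harmless, since the periodic outcomes against $\sigma_0^K$ do satisfy equality and the non-achievability step only uses the valid inequality $\MPinf_0\leq 2-\MPinf_1\leq 2-\beta+\epsilon$.
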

\begin{proof}[Proof sketch - full proof in Appendix]
First, consider the game depicted in Fig~\ref{fig:no-opt-ASV}. In this game, $\ASV(v_0)=1$ and it is not achievable. Player~0 needs to ensure that Player~1 does not take the transition from $v_0$ to $v_2$ otherwise she gets a payoff of $0$. To ensure this, Player~0 needs to choose a strategy (that cycles within $\{v_0,v_1\}$) and that gives to Player~1 at least $1+\epsilon$ with $\epsilon>0$. Such strategies gives $1-\epsilon$ to Player~0, and the value $1$ cannot be reached.

Second, by definition of the $\ASV$, the value is obtained as the sup over all strategies of Player~0. As a consequence, $\epsilon$-optimal strategies (for $\epsilon >0)$ exist.
\end{proof}

%\marginpar{recycle this part in the above lemma}
%\subparagraph{Memory} We now show that memory full startegies are necessary for Player~1 to obtain high adversarial Stackelberg values. The following lemma consider both the case of $\MPinf_1$ and $\MPsup_1$. \marginpar{The previous example also works.}

%\begin{lemma}There are mean-payoff games where Player~1 needs memory to obtain large adversarial Stackelberg values:
%\begin{itemize}
%    \item for $\MPsup$, infinite memory may be necessary. In the game of Fig.~\ref{}, Player~0 needs infinite memory to ensure $\ASV(v_0)=2$.
%    \item for $\MPinf$, finite memory is necessary. In the game of Fig.~\ref{}, for all $\delta >0$, Player~0 can ensure $\ASV(v_0)=1-\delta$ but he needs a memory that increase when $\delta$ gets small.
%\end{itemize}
%\end{lemma}
%\begin{proof}
%...
%\end{proof}

%\subparagraph{Algorithms for \ASV}
%In this subsection, we show how to construct witnesses for showing that the adversarial Stackelberg value is (strictly) above some threshold. This notion of witness is at the basis of an algorithm for showing the membership of the threshold problem to {\sf NP} and of an algorithm to compute the adversarial Stackelberg value.

\subparagraph{Witnesses for the $\ASV$}
Given a mean-payoff game $\mathcal{G}$, we associate with each vertex $v$, the following set of pairs of real numbers:
%\begin{equation}
 $\Lambda(v)=\{(c,d) \in \mathbb{R}^2 \mid v \models \Estrat{1} \MPinf_0 \leq c \land \MPinf_1  \geq d \}$.
%\end{equation}
%\noindent
We say that $v$ is $(c,d)$-bad if $(c,d) \in \Lambda(v)$. Let
  $c'\in \mathbb{R}$. A play $\pi$
in $\mathcal{G}$ is called a $(c',d)$-witness of $\ASV(v)>c$ if
it starts from $v$, $(\MPinf_0(\pi),\MPinf_1(\pi))=(c',d)$, $c' > c$ and $\pi$ does not
contain any $(c,d)$-bad vertex. A play $\pi$ is called a witness of
$\ASV(v)>c$ if it is a $(c',d)$-witness of $\ASV(v)>c$ for some
$c',d$. The following theorem justifies the name witness. 
%The following theorem formalizes a notion of witness for the adversarial Stackelberg value in mean-payoff games.
\begin{theorem}
\label{thm:witness-mp}
Let $\mathcal{G}$ be a mean-payoff game and $v$ be one of its
vertices. $\ASV(v) > c$ iff there exists a play $\pi$ in $\mathcal{G}$
such that $\pi$ is a witness of $\ASV(v) > c$.
%there exists a play $\pi$ in $\mathcal{G}$ that starts in $v$ and such that $(\MPinf_1(\pi),\MPinf_2(\pi))=(c',d)$, $c' > c$ and $\pi$ does not cross a $(c,d)$-bad vertex. 
\end{theorem}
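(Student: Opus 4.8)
The plan is to prove the two implications by viewing a witness as a play that Player~0 proposes to follow and the $(c,d)$-bad vertices as the places from which Player~1 has a profitable deviation. The guiding intuition is that, by determinacy, a vertex $u$ fails to be $(c,d)$-bad exactly when $u \models \Estrat{0}(\MPinf_0 > c \lor \MPinf_1 < d)$, i.e.\ when Player~0 can react to a deviation at $u$ so as to either keep her own payoff above $c$ or push Player~1's payoff below the value $d$ he would obtain by complying. A witness $\pi$ with $(\MPinf_0(\pi),\MPinf_1(\pi))=(c',d)$ thus secures Player~0 the payoff $c'>c$ while leaving Player~1 no deviation that both reaches $\MPinf_1 \ge d$ and hurts Player~0.

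For the direction from a witness to $\ASV(v)>c$, I would build a strategy $\sigma_0$ that follows $\pi$ while Player~1 complies and, on the first history that is not a prefix of $\pi$, switches to a finite-memory Player~0 strategy $\tau^u_0$ witnessing $u \models \Estrat{0}(\MPinf_0 > c \lor \MPinf_1 < d)$ at the last on-$\pi$ vertex $u$; such $\tau^u_0$ exists by determinacy because $u$ lies on $\pi$ and is therefore not $(c,d)$-bad, and it can be taken finite-memory since the objective is a one-sided multi-dimensional mean-payoff condition. Complying gives Player~1 exactly $d$ and Player~0 exactly $c'$, while any deviation that still achieves $\MPinf_1 \ge d$ is, by the punishment, forced to yield $\MPinf_0 > c$. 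The delicate point is to exclude \emph{parasitic} responses that deviate so as to drive $\MPinf_0 \le c$ with $\MPinf_1$ just below $d$: here I would use that $\tau^u_0$ is finite-memory, so that against it the attainable pairs $(\MPinf_0,\MPinf_1)$ are governed by finitely many cycle means (Theorem~\ref{thm:caractSCC}); consequently every play with $\MPinf_0 \le c$ has $\MPinf_1 \le d-\delta$ for a uniform $\delta>0$. Choosing $\epsilon<\delta$ then rules out such responses from $\BR^{\epsilon}_1(\sigma_0)$, so every $\epsilon$-best response has $\MPinf_0>c$, uniformly bounded away from $c$ by finiteness; feeding this through the $\sup_{\epsilon}$ in the definition yields $\ASV(\sigma_0)(v)>c$ and hence $\ASV(v)>c$.

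For the converse I would fix $\sigma_0$ and a level $\epsilon^\ast$ with $\inf_{\sigma_1 \in \BR^{\epsilon^\ast}_1(\sigma_0)} \MPinf_0(\outcome_v(\sigma_0,\sigma_1)) > c$, so that \emph{every} $\epsilon^\ast$-best response leaves Player~0 strictly above $c$, and take the outcome $\pi^\ast$ of a response attaining this infimum, setting $(c',d)=(\MPinf_0(\pi^\ast),\MPinf_1(\pi^\ast))$ so that $c'>c$. It remains to show no vertex of $\pi^\ast$ is $(c,d)$-bad. If some $u$ on $\pi^\ast$ were $(c,d)$-bad, with $\tau^u_1$ witnessing $u \models \Estrat{1}(\MPinf_0 \le c \land \MPinf_1 \ge d)$, I would splice $\tau^u_1$ into the response at the first visit of $u$ along $\pi^\ast$; the spliced response forces $\MPinf_0 \le c$ (as $\tau^u_1$ enforces it against every Player~0 continuation) while attaining $\MPinf_1 \ge d$, and I would argue it is still an $\epsilon^\ast$-best response, contradicting that all such responses give $\MPinf_0>c$. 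Hence $\pi^\ast$ is a $(c',d)$-witness.

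The main obstacle, in both directions, is the interaction between the $\epsilon$-relaxation and the \emph{subgame-perfect} reading of $\BR^{\epsilon}_1$ (the universal quantifier over all start vertices), compounded by the fact that mean-payoff suprema need not be attained. In the completeness direction this surfaces as the need to certify that the spliced strategy is $\epsilon^\ast$-optimal from \emph{every} vertex, not merely from $v$; I expect to resolve it by taking $\sigma_0$ finite-memory (so that best responses exist and the infimum is attained on lassos), which forces the fresh-start value $V^\ast_1(u)$ at the splicing vertex to equal $d$, whence $\tau^u_1$ is itself optimal from $u$ and splicing preserves optimality. Throughout, the finiteness of the arena --- via the cycle-mean characterization of Theorem~\ref{thm:caractSCC} and the resulting uniform gaps --- is what turns the strict but possibly non-attained inequalities $\MPinf_0>c$ and $\MPinf_1<d$ into the robust, $\epsilon$-stable statements the argument requires; if convenient I would first reduce to an ultimately periodic (lasso) witness, since Theorem~\ref{thm:caractSCC} guarantees the pair $(c',d)$ is realized within a single {\sf SCC}.
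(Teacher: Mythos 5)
Your right-to-left direction is essentially the paper's proof: follow the witness $\pi$ while Player~1 complies, and on deviation switch to a punishing strategy enforcing $\MPinf_0 > c \lor \MPinf_1 < d$, which exists by determinacy of multi-dimensional mean-payoff games because $\pi$ avoids $(c,d)$-bad vertices (and can indeed be taken memoryless for Player~0, as the paper's appendix establishes via~\cite{DBLP:journals/iandc/VelnerC0HRR15}). Where you differ is in closing the $\epsilon$-analysis: the paper splits into two cases according to whether $\sup_{\sigma_1}\MPinf_1(\outcome_v(\sigma_0,\sigma_1))$ equals $d$ (then $0$-best responses exist and suffice) or equals some $d'>d$ (then pick $\epsilon$ with $d'-\epsilon>d$), whereas you derive a uniform gap $\delta>0$ from compactness of the achievable payoff pairs against the finitely many punishing strategies (via Theorem~\ref{thm:caractSCC} and Proposition~\ref{prop:larger}) and choose $\epsilon<\delta$. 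That variant is sound and gives you, as a bonus, the uniform bound $\MPinf_0 > c$ needed for the strict inequality $\ASV(\sigma_0)(v)>c$. One small slip: the punishment is launched from the vertex $u'$ reached \emph{after} the deviation, not ``at the last on-$\pi$ vertex $u$''; this is harmless because $u\in V_1$ is not $(c,d)$-bad and Player~1 himself moved to $u'$, so by prefix-independence Player~1 cannot enforce the bad objective from $u'$ either.

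The genuine gap is in your converse. You propose to ``resolve'' the attainment and the $\forall v$-quantifier issues ``by taking $\sigma_0$ finite-memory (so that best responses exist and the infimum is attained on lassos)'' --- but nothing licenses this restriction. The hypothesis $\ASV(v)>c$ only yields \emph{some} $\sigma_0$ with $\ASV(\sigma_0)(v)>c$, and such strategies are in general infinite-memory: the witness plays of Lemma~\ref{lem:smallwitness} alternate between two cycles with growing exponents, and no finite-memory sufficiency result for the leader appears anywhere in the paper (establishing one would essentially presuppose the theorem you are proving). Your auxiliary claim that the fresh-start value at the splicing vertex equals $d$ is likewise unsupported. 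Fortunately the restriction is unnecessary, as the paper's proof shows: take \emph{any} $\sigma_1\in\BR^{\epsilon^*}_1(\sigma_0)$ (nonempty by the equivalence from Theorem~\ref{thm:equiv-def}), let $\pi=\outcome_v(\sigma_0,\sigma_1)$ and set $d_1=\MPinf_1(\pi)$ --- no infimum need be attained, since the argument only uses $\MPinf_0(\pi)>c$ and non-badness relative to $\pi$'s \emph{own} Player-1 payoff. Then any play $\pi'$ compatible with $\sigma_0$ with $\MPinf_1(\pi')\geq d_1$ secures at least $d_1 \geq \sup - \epsilon^*$, so it is the outcome of some $\epsilon^*$-best response (patching its behavior from other start vertices with any fixed member of $\BR^{\epsilon^*}_1(\sigma_0)$ handles the universal quantifier), whence $\MPinf_0(\pi')>c$; applying this to the play that follows $\pi$ to a putatively $(c,d_1)$-bad vertex and then enforces $\MPinf_0\leq c \land \MPinf_1\geq d_1$ gives the contradiction directly, with no assumption on $\sigma_0$'s memory.
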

\begin{proof}
\emph{From right to left.} Assume the existence of a $(c',d)$-witness $\pi$ and let us show that there exists a strategy $\sigma_0$ which forces $\ASV(\sigma_0)(v) > c$. We define $\sigma_0$ as follows:
  \begin{enumerate}
      \item for all histories $h \leq \pi$ such that $\last(h)$ belongs to Player~0, $\sigma_0(h)$ follows $\pi$.
      \item for all histories $h \not\leq \pi$ where there has been a deviation from $\pi$ by Player~1, we assume that Player~0 switches to a strategy that we call {\em punishing}. This strategy is defined as follows. In the subgame after history $h'$ where $h'$ is a first deviation by Player~1 from $\pi$, we know that Player~0 has a strategy to enforce the objective: $\MPinf_0 > c \lor \MPinf_1 < d$. This is true because $\pi$ does not cross any $(c,d)$-bad vertex. So, we know that $h' \nvDash \Estrat{1} \MPinf_0 \leq c \land \MPinf_1 \geq d$ which entails the previous statement by determinacy of $n$-dimension mean-payoff games~\cite{DBLP:journals/iandc/VelnerC0HRR15} (here $n=2$).
      \item for all other histories $h$, Player~0 can behave arbitrarily as those histories are never reached when Player~0 plays as defined in point $1$ and $2$ above.
  \end{enumerate}
\noindent
Let us now establish  that the strategy $\sigma_0$ satisfies
$\ASV(\sigma_0)(v) > c$. We have to show the existence
  of some $\epsilon\geq 0$ such that
  $\BR^{\epsilon}_1(\sigma_0)\neq\varnothing$ and for all $\sigma_1\in
  \BR^{\epsilon}_1(\sigma_0)$,
  $\MPinf_0(\outcome_v(\sigma_0,\sigma_1))>c$ holds. For that, we consider two subcases:
\begin{enumerate}
    \item $\sup_{\sigma_1} \MPinf_1(\outcome_{v}(\sigma_0,\sigma_1))=d=\MPinf_1(\pi)$. This means that any strategy $\sigma_1$ of Player~1 that follows $\pi$ is for $\epsilon=0$ a best-response to $\sigma_0$. Now let us consider any strategy $\sigma_1 \in \BR^0_1(\sigma_0)$. Clearly, $\pi'=\outcome_{v}(\sigma_0,\sigma_1)$ is such that $\MPinf_1(\pi') \geq d$. If $\pi'=\pi$, we have that $\MPinf_0(\pi')=c' > c$. If $\pi'\neq\pi$, then when $\pi'$ deviates from $\pi$, we know that $\sigma_0$ behaves as the punishing strategy and so we have that $\MPinf_0(\pi') > c \lor \MPinf_1(\pi') <d$. But as $\sigma_1 \in \BR_1^0(\sigma_0)$, we conclude that $\MPinf_1(\pi') \geq d$, and so in turn, we obtain that $\MPinf_0(\pi') > c$.
    
    \item $\sup_{\sigma_1} \MPinf_1(\outcome_{v}(\sigma_0,\sigma_1))=d' > d$. Let $\epsilon >0$ be such that $d'-\epsilon>d$. By Lemma~\ref{lem:emptiness-nonemptiness}, $\BR_1^{\epsilon}(\sigma_0) \neq \varnothing$. Let us now characterize the value that Player~0 receives against any strategy $\sigma_1 \in \BR_1^{\epsilon}(\sigma_0)$. First, if $\sigma_1$ follows $\pi$ then Player~0 receives $c' > c$. Second, if $\sigma_1$ deviates from $\pi$, Player~1 receives at least $d'-\epsilon>d$. But by definition of $\sigma_0$, we know that if the play deviates from $\pi$ then Player~0 applies her punishing strategy. Then we know that the outcome satisfies $\MPinf_0 > c \lor \MPinf_1 < d$. But as $d'-\epsilon > d$, we must conclude that the outcome $\pi'$ is such that $\MPinf_0(\pi') > c$.  
\end{enumerate}

\noindent\emph{From left to right}. Let $\sigma_0$ such
that $\ASV(\sigma_0)(v) > c$. Then by the equivalence shown in the proof of Theorem~\ref{thm:equiv-def}, we know that
\begin{equation}\label{eq:epsilon}
\exists \epsilon \geq 0 : \BR^{\epsilon}_1(\sigma_0) \neq \varnothing \land  \forall \sigma_1 \in \BR_1^{\epsilon}(\sigma_0): \outcome_v(\sigma_0,\sigma_1) > c   
\end{equation}
Let $\epsilon^*$ be a value for $\epsilon$ that makes eq.~(\ref{eq:epsilon}) true. Take any $\sigma_1 \in \BR^{\epsilon^*}_1(\sigma_0)$ and consider $\pi=\outcome_{v}(\sigma_0,\sigma_1)$.  We will show that $\pi$ is a witness for $\ASV(v)>c$.

We have that $\MPinf_0(\pi) > c$. Let $d_1=\MPinf_1(\pi)$ and consider any $\pi' \in \outcome_{v}(\sigma_0)$. Clearly if $\MPinf_1(\pi')\geq d_1$ then there exists $\sigma'_1 \in \BR^{\epsilon^*}_1(\sigma_0)$ such that $\pi'=\outcome_{v_0}(\sigma_0,\sigma'_1)$ and we conclude that $\MPinf_0(\pi') > c$. So all deviations of Player~1 w.r.t. $\pi$ against $\sigma_0$ are either giving him a $\MPinf_1$ which is less than $d_1$ or it gives to Player~0 a $\MPinf_0$ which is larger than $c$. So $\pi$ is a $(\MPinf_0(\pi),\MPinf_1(\pi))$-witness for $\ASV(v) > c$ as we have shown that $\pi$ never crosses an $(c,\MPinf_1(\pi))$-bad vertex, and we are done.
\end{proof}

\noindent
The following statement is a direct consequence of the proof of the previous theorem.

\begin{corollary}\label{coro:up}
If $\pi$ is a witness for $\ASV(v) > c$ then all $\pi'$ such that:
%  \begin{enumerate}
      %\item 
      $\pi'(0)=v$,
      %\item 
      the set of vertices visited along $\pi$ and $\pi'$ are the same, and 
      %\item 
      $\MPinf_0(\pi') \geq \MPinf_0(\pi)$ and $\MPinf_1(\pi') \geq \MPinf_1(\pi)$, 
  %\end{enumerate}
%\noindent
are also witnesses for $\ASV(v) > c$.
\end{corollary}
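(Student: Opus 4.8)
The plan is to show directly that $\pi'$ meets the four requirements of a $(c'',d'')$-witness, where I set $c''=\MPinf_0(\pi')$ and $d''=\MPinf_1(\pi')$. First I would unpack the hypothesis: since $\pi$ is a witness for $\ASV(v)>c$, it is a $(c',d)$-witness for some $c',d$, so $\pi$ starts in $v$, $(\MPinf_0(\pi),\MPinf_1(\pi))=(c',d)$ with $c'>c$, and $\pi$ visits no $(c,d)$-bad vertex. From the assumptions on $\pi'$ we get $\pi'(0)=v$ together with $c''=\MPinf_0(\pi')\geq c'$ and $d''=\MPinf_1(\pi')\geq d$.

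Three of the witness conditions are then immediate: $\pi'$ starts in $v$; its mean-payoff coordinate is $(c'',d'')$ by definition; and $c''\geq c'>c$, hence $c''>c$. The only remaining obligation is to verify that $\pi'$ crosses no $(c,d'')$-bad vertex.

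The key step is a monotonicity property of the sets $\Lambda(\cdot)$ in their second argument. I would observe that for any vertex $u$ and any $d''\geq d$, if $(c,d'')\in\Lambda(u)$ then $(c,d)\in\Lambda(u)$: indeed, $(c,d'')\in\Lambda(u)$ means $u\models\Estrat{1}\MPinf_0\leq c\land\MPinf_1\geq d''$, and since $d''\geq d$ the objective $\MPinf_0\leq c\land\MPinf_1\geq d''$ is included in the objective $\MPinf_0\leq c\land\MPinf_1\geq d$, so a Player~1 strategy enforcing the former also enforces the latter, giving $(c,d)\in\Lambda(u)$. Contrapositively, a vertex that is not $(c,d)$-bad is not $(c,d'')$-bad.

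Finally I would combine this with the hypothesis that $\pi$ and $\pi'$ visit the same set of vertices. Since $\pi$ crosses no $(c,d)$-bad vertex, none of the vertices visited by $\pi'$ is $(c,d)$-bad, and therefore, by the monotonicity above, none of them is $(c,d'')$-bad either. Thus $\pi'$ satisfies all four conditions and is a $(c'',d'')$-witness, hence a witness for $\ASV(v)>c$. The only delicate point is that the ``bad'' parameter shifts from $d$ to $d''$ when passing from $\pi$ to $\pi'$; the monotonicity of $\Lambda$ in its second argument is exactly what reconciles this shift, so I do not expect any genuine obstacle beyond stating that observation carefully.
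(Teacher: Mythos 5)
Your proof is correct, and it takes the route the paper intends: the paper states the corollary as a direct consequence of the proof of Theorem~\ref{thm:witness-mp} without spelling out the details, and your direct verification supplies exactly the reasoning left implicit. In particular, your key observation---that $\Lambda(u)$ is monotone in its second argument, i.e.\ for $d'' \geq d$ any Player~1 strategy enforcing $\MPinf_0 \leq c \land \MPinf_1 \geq d''$ also enforces $\MPinf_0 \leq c \land \MPinf_1 \geq d$, so a vertex that is not $(c,d)$-bad is not $(c,d'')$-bad---is sound and is precisely what reconciles the shift of the badness parameter from $d$ to $d''=\MPinf_1(\pi')$ when passing from $\pi$ to $\pi'$.
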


\subparagraph{Small witnesses and {\sf NP} membership}
Here, we refine Theorem~\ref{thm:witness-mp} to establish membership of the threshold problem to {\sf NP}.

\begin{theorem}
\label{NP-membership-ASV-MP}
Given a mean-payoff game $\game$, a vertex $v$
and a rational value $c \in \mathbb{Q}$, it can be decided in nondeterministic polynomial time if $\ASV(v) > c$.
\end{theorem}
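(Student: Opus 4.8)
The plan is to leverage Theorem~\ref{thm:witness-mp}, which reduces the problem $\ASV(v) > c$ to the existence of a witness play $\pi$, and then to show that a witness can always be chosen to have a small (polynomially-bounded) representation that can be guessed and verified in polynomial time. Recall that a $(c',d)$-witness is a play $\pi$ starting at $v$ with $(\MPinf_0(\pi),\MPinf_1(\pi))=(c',d)$, $c'>c$, that avoids all $(c,d)$-bad vertices. The key observation is that whether a vertex is $(c,d)$-bad is a property defined by the zero-sum objective $\Estrat{1} \MPinf_0 \leq c \land \MPinf_1 \geq d$, i.e. by membership of $(c,d)$ in the set $\Lambda(v)$. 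So the nondeterministic algorithm will guess a succinct object encoding both the infinity set of the witness and the values $(c',d)$, verify that no reachable vertex is $(c,d)$-bad, and verify the mean-payoff values achieved.

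First I would argue that it suffices to look for witnesses whose infinity set ${\sf inf}(\pi)$ lies inside a single {\sf SCC} $S$, and whose mean-payoff coordinate $(c',d)$ is realizable inside $S$. By Theorem~\ref{thm:caractSCC}, the achievable pairs are exactly $F_{\sf min}({\sf CH}(\mathbb{C}(S)))$, and by part $(ii)$ any such pair is realized by a genuine play with ${\sf inf}(\pi)=S$. Using Corollary~\ref{coro:up}, I can push the coordinate to a Pareto-optimal (or otherwise convenient) point of $F_{\sf min}({\sf CH}(\mathbb{C}(S)))$ without losing the witness property, since increasing either coordinate while keeping the visited vertex set fixed preserves being a witness. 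The nondeterministic guess is therefore: an {\sf SCC} $S$ (equivalently a set of vertices to be visited infinitely often), a simple path from $v$ reaching $S$, and a small certificate that $(c',d) \in F_{\sf min}({\sf CH}(\mathbb{C}(S)))$ for some $c' > c$. The certificate for the latter can be a convex combination of polynomially many simple-cycle coordinates together with the $f_{\sf min}$ witness subset, all of which have polynomial bit-size; the constraints $c' > c$ and the convex-combination equalities are linear and checkable in polynomial time, exploiting that the relevant sets are effectively expressible in $\langle \mathbb{R},+,\leq \rangle$.

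The remaining verification is that no vertex appearing along the guessed play is $(c,d)$-bad, i.e. that for every visited vertex $u$ we have $(c,d) \notin \Lambda(u)$, equivalently $u \nvDash \Estrat{1} \MPinf_0 \leq c \land \MPinf_1 \geq d$. By determinacy of two-dimensional mean-payoff games~\cite{DBLP:journals/iandc/VelnerC0HRR15}, this is equivalent to $u \models \Estrat{0} \MPinf_0 > c \lor \MPinf_1 < d$, and deciding membership in $\Lambda(u)$ for fixed rational thresholds is a zero-sum multi-dimensional mean-payoff problem that lies in {\sf NP}$\cap$co{\sf NP} (or can itself be certified succinctly). Since only polynomially many vertices are visited and each such check is performed for the single guessed pair $(c,d)$, the whole verification stays within nondeterministic polynomial time.

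I expect the main obstacle to be the bit-size control and the verification of the badness condition. Specifically, I must ensure that the guessed coordinates $(c',d)$ and the convex-combination coefficients admit polynomial-size rational representations; this follows from the fact that $F_{\sf min}({\sf CH}(\mathbb{C}(S)))$ is a polytope described by a system with coefficients of polynomial bit-size (vertex weights bounded by $W$, cycle lengths bounded by $|V|$), so its vertices and any point needed have bounded description. The more delicate point is checking the non-badness of every visited vertex uniformly: I would phrase it as a single zero-sum two-dimensional mean-payoff query per vertex against the fixed thresholds $(c,d)$ and invoke the known complexity bounds for such games, so that the overall procedure remains a genuine {\sf NP} algorithm rather than merely an {\sf NP}$^{\sf NP}$ one. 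Combining the guessed reachability path, the cycle-coordinate certificate, and the per-vertex non-badness checks yields the claimed {\sf NP} membership.
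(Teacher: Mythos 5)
Your overall route is the same as the paper's: start from Theorem~\ref{thm:witness-mp}, compress the witness into a reachable strongly connected part carrying a small number of simple cycles whose mean-payoff coordinates are combined with polynomial-size rational convex coefficients (this is the paper's Lemma~\ref{lem:smallwitness}; note that ``polynomially many'' cycles should be pinned down --- Carath\'eodory gives three in dimension two, and the paper's finer geometric argument gives two, and some such constant bound is what makes the guess well-defined), then guess the connecting simple paths and verify non-badness of every visited vertex. Up to that point your proposal matches the paper's proof in substance.

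The genuine gap is at the step you yourself flag as delicate: certifying, inside the {\sf NP} computation, that each visited vertex $u$ satisfies $u \nvDash \Estrat{1} \MPinf_0 \leq c \land \MPinf_1 \geq d$. Your assertion that this membership problem ``lies in {\sf NP}$\cap${\sf coNP}'' is unsupported: this is a two-dimensional (conjunctive, liminf) mean-payoff game, for which the conjunction player may require infinite memory and the decision problem is only known to be in {\sf coNP}~\cite{DBLP:journals/iandc/VelnerC0HRR15}; no {\sf NP} upper bound for the conjunction side is available, and simply ``invoking known complexity bounds'' would leave you with an oracle call rather than an {\sf NP} verifier. What actually saves the argument --- and what the paper supplies explicitly --- is the complement certificate: by determinacy, non-badness of $u$ means Player~0 wins the disjunction $\MPinf_0 > c \lor \MPinf_1 < d$ from $u$, and \emph{(i)} Player~0 has \emph{memoryless} winning strategies for this objective, which is not immediate because the constraint $\MPinf_0 \leq c$ dualizes to a \emph{limsup} constraint $-\MPsup_0 \geq -c$, so one needs Propositions~\ref{prop:mp-inverse} and~\ref{prop:mp-inf-sup} (Lemma~14 of~\cite{DBLP:journals/iandc/VelnerC0HRR15}) to trade the limsup for a liminf before the memoryless-spoiler result applies; and \emph{(ii)} checking that a fixed memoryless strategy wins the disjunction reduces to a polynomial-time one-player analysis~\cite{DBLP:journals/dm/Karp78}. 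The guessed object must therefore include one memoryless Player~0 strategy per visited vertex, verified in \textsf{PTime}. Without ingredient \emph{(i)}, your parenthetical ``or can itself be certified succinctly'' is an assertion rather than a proof, and it is precisely the content of the paper's appendix propositions.
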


\noindent Proof of Thm.~\ref{NP-membership-ASV-MP} relies on the existence of small witnesses established in the following lemma: 

\begin{lemma}
\label{lem:smallwitness}
Given a mean-payoff game $\game$, a vertex $v$ and  $c \in \mathbb{Q}$, $\ASV(v) > c$ if and only if there exists an {\sf SCC} reachable from $v$ that contains two simple cycles $\ell_1$, $\ell_2$ such that:
%\begin{enumerate}
      %\item 
      $(i)$ there exist $\alpha,\beta \in \mathbb{Q}$ such that 
        %\begin{enumerate}
            %\item 
            $\alpha \cdot w_0(\ell_1) + \beta \cdot w_0(\ell_2) =c' > c$, and 
            %\item 
            $\alpha \cdot w_1(\ell_1) + \beta \cdot w_1(\ell_2)=d$
        %\end{enumerate}
      %\item 
      $(ii)$ there is no $(c,d)$-bad vertex $v'$ along the path from $v$ to $\ell_1$, the path from $\ell_1$ to $\ell_2$, and the path from $\ell_2$ to $\ell_1$.% such that $v' \models \Estrat{1} \MPinf_0 \leq c \land \MPinf_1 \geq d$.
  %\end{enumerate}
\end{lemma}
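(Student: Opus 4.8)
The plan is to prove both implications through the witness characterisation of Theorem~\ref{thm:witness-mp}: it suffices to show that a (possibly infinite-memory) witness play for $\ASV(v)>c$ exists if and only if a witness can be realised using only two simple cycles of an {\sf SCC} reachable from $v$. Throughout I use the correspondence between the coefficients of condition $(i)$ and genuine mean-payoff coordinates (writing $\MP(\ell)=(\MP_0(\ell),\MP_1(\ell))$): a pair $\alpha,\beta\geq 0$ with $\alpha\cdot|\ell_1|+\beta\cdot|\ell_2|=1$ turns $(\alpha w_0(\ell_1)+\beta w_0(\ell_2),\,\alpha w_1(\ell_1)+\beta w_1(\ell_2))$ into the convex combination $\mu\cdot\MP(\ell_1)+(1-\mu)\cdot\MP(\ell_2)$ with $\mu=\alpha|\ell_1|$, so the point $(c',d)$ lies on the segment between $\MP(\ell_1)$ and $\MP(\ell_2)$, hence in ${\sf CH}(\mathbb{C}(S))$.

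For the direction from $(i)$--$(ii)$ to $\ASV(v)>c$ (the routine one), I would let $S'$ be the strongly connected subgraph formed by $\ell_1$, $\ell_2$ and the two connecting paths. Since $\ell_1,\ell_2\in\mathbb{C}(S')$, the point $(c',d)$ lies in ${\sf CH}(\mathbb{C}(S'))\subseteq F_{\sf min}({\sf CH}(\mathbb{C}(S')))$ (using that singletons give $X\subseteq F_{\sf min}(X)$), so Theorem~\ref{thm:caractSCC}$(ii)$ yields a play, realisable inside $S'$, with $(\MPinf_0,\MPinf_1)=(c',d)$. Prefixing it with the path from $v$ to $\ell_1$ gives a play $\pi$ from $v$ with the same coordinate $c'>c$ that visits only vertices lying on the three prescribed paths; by $(ii)$ none of them is $(c,d)$-bad, so $\pi$ is a $(c',d)$-witness and Theorem~\ref{thm:witness-mp} applies.

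For the converse I would start from a $(c',d_\pi)$-witness $\pi$ given by Theorem~\ref{thm:witness-mp}, with $d_\pi=\MPinf_1(\pi)$, $c'>c$, and no $(c,d_\pi)$-bad vertex on $\pi$. Let $S={\sf inf}(\pi)$, a reachable {\sf SCC} all of whose vertices, as well as those of the finite prefix, are visited by $\pi$ and hence not $(c,d_\pi)$-bad. By Theorem~\ref{thm:caractSCC}$(i)$, $(c',d_\pi)\in F_{\sf min}({\sf CH}(\mathbb{C}(S)))$, so Proposition~\ref{prop:larger} gives $y\in{\sf CH}(\mathbb{C}(S))$ with $y\geq(c',d_\pi)$, i.e. $y$ lies in the region $R=\{(s,t)\mid s>c,\ t\geq d_\pi\}$. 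The key geometric step is to shoot a horizontal ray from $y$ in the $+s$ direction: every point of the ray stays in $R$, and since ${\sf CH}(\mathbb{C}(S))$ is compact the ray exits it on some boundary edge $[\MP(\ell_1),\MP(\ell_2)]$ between two cycle coordinates, at a point $(c'',d'')\in R$. This produces the two simple cycles and a point of their segment with $c''>c$ and $d''\geq d_\pi$. Finally, since badness is anti-monotone in the second coordinate (a smaller threshold makes Player~1's objective $\MPinf_1\geq\cdot$ easier, so $(c,d'')$-bad implies $(c,d_\pi)$-bad whenever $d''\geq d_\pi$), no vertex of $S$ is $(c,d'')$-bad, which yields $(ii)$ along paths chosen inside $S$.

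The main obstacle is precisely this reduction to \emph{two} cycles together with the rationality of $\alpha,\beta$: the point produced by the $F_{\sf min}$ and convex-hull machinery could a priori need three cycles (Carath\'eodory in the plane) and an irrational combination. The ray argument removes the third cycle by landing on an edge of the hull; for rationality I would note that along that edge the first coordinate $c''(\mu)$ is affine and strictly above $c$ on an open $\mu$-interval around the exit point, while $\{\,\mu\mid d''(\mu)\geq d_\pi\,\}$ is an interval containing that point, so their intersection has nonempty interior and hence contains a rational $\mu$, giving rational $\alpha=\mu/|\ell_1|$ and $\beta=(1-\mu)/|\ell_2|$. Degenerate hulls (a single point or a segment) are covered by allowing $\ell_1=\ell_2$.
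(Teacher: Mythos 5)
Your argument is correct and structurally the same as the paper's: both directions are routed through Theorem~\ref{thm:witness-mp}, and your left-to-right direction uses ${\sf inf}(\pi)$, Theorem~\ref{thm:caractSCC}$(i)$ and Proposition~\ref{prop:larger} exactly as the appendix does. Two steps are executed by different (equivalent) means, and the comparison is worth recording. For right-to-left, the paper explicitly builds the witness $\pi=\pi_1\rho_1\rho_2\rho_3\dots$ with $\rho_i=\ell_1^{\lceil\alpha i\rceil}\pi_2\,\ell_2^{\lceil\beta i\rceil}\pi_3$ and growing repetition counts, whereas you invoke Theorem~\ref{thm:caractSCC}$(ii)$ on the strongly connected subgraph $S'$; this works (every vertex of $S'$ has a successor inside $S'$, so the induced subarena is well-formed and its plays are plays of $\game$), and it spares the frequency computation, at the small cost of noting that the theorem, stated for {\sf SCC}s of the arena, is being applied to $S'$ as an {\sf SCC} of the subarena. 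For left-to-right, the paper first applies Carath\'eodory to get three cycles and then argues that if the resulting triangle meets $P=\{(x,y)\mid x>c\land y\geq d\}$ then one of its edges does; your horizontal ray shot from $y$ until it exits the compact hull is the same geometric idea applied directly to ${\sf CH}(\mathbb{C}(S))$, skipping Carath\'eodory entirely --- slightly cleaner, since the exit point automatically lies on an edge between two cycle coordinates (or on a hull vertex, your $\ell_1=\ell_2$ case). You are also more careful than the paper on three points it glosses over: the normalization between the total cycle weights $w_i(\ell)$ appearing in the statement and the mean-payoff coordinates $w_i(\ell)/|\ell|$ living in the hull; the rationality of $\alpha,\beta$, via your interval argument (note that when $\{\mu\mid d''(\mu)\geq d_\pi\}$ degenerates to a single point, that point is an endpoint of the edge, hence a hull vertex with rational coordinates, so your $\ell_1=\ell_2$ fallback covers this case too); and the anti-monotonicity of $(c,d)$-badness in $d$, which is precisely what legitimizes replacing $d_\pi$ by $d''\geq d_\pi$ --- the paper only obtains $\alpha\cdot w_1(\ell_1)+\beta\cdot w_1(\ell_2)\geq d$ and leaves this upgrade implicit in its closing ``easy to verify'' remark.
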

\begin{proof}[Proof sketch - full proof in Appendix]
Theorem~\ref{thm:witness-mp} establishes the existence of a witness
$\pi$ for $\ASV(v)>c$. In turn, we show here that the existence of
such a $\pi$ can be established by a polynomially checkable witness
composed of the following elements. First, a simple path from $v$ to
the {\sf SCC} in which $\pi$ gets trapped in the long run, $(ii)$ two
simple cycles (that can produce the value $(c',d)$ of $\pi$) by
looping at the right frequencies along the two cycles. Indeed,
$(\MPinf_0(\pi),\MPinf_1(\pi))$ only depends on the suffix in the {\sf
  SCC} in which it gets trapped. Furthermore, by
Theorem~\ref{thm:caractSCC}, Proposition~\ref{prop:larger} and
Corollary~\ref{coro:up}, we know that the mean-payoff of witnesses can
be obtained as the convex combination of the mean-payoff coordinates
of simple cycles, and 3 such simple cycles are sufficient by the
CarathÃ©odory baricenter theorem. A finer analysis of the geometry of
the sets allows us to go to 2 cycles only (see the full proof in
appendix). 
\end{proof}

\begin{proof}[Proof of Theorem~\ref{NP-membership-ASV-MP}]
According to Lemma~\ref{lem:smallwitness}, the nondeterministic algorithm that establishes the membership to {\sf NP}  guesses a reachable {\sf SCC} together with the two simple cycles $\ell_1$ and $\ell_2$, and parameters $\alpha$ and $\beta$. Additionally, for each vertex $v'$ that appears along the paths to reach the {\sf SCC}, on the simple cycles $\ell_1$ and $\ell_2$, and to connect those simple cycles, the algorithm guesses a memoryless strategy $\sigma^{v'}_0$ for Player~0 that establishes $v' \nvDash \Estrat{1} \MPinf_0 \leq c \land \MPinf_1 \geq d$ which means by determinacy of multi-dimensional mean-payoff games, that $v' \vDash \Estrat{0} \MPinf_0 > c \lor \MPinf_1 < d$. The existence of those memoryless strategy is established in Propositions~\ref{prop:mp-inverse} and~\ref{prop:mp-inf-sup} given in appendix (in turn those propositions rely on results from~\cite{DBLP:journals/iandc/VelnerC0HRR15}). Those memoryless strategies are checkable
in \textsf{PTime}~\cite{DBLP:journals/dm/Karp78}.
\end{proof}

\subparagraph{Computing the $\ASV$ in mean-payoff games}
The previous theorems establish the existence of a notion of witness for the adversarial Stackelberg value in non zero-sum two-player mean-payoff games. This notion of witness can be used to decide the threshold problem in \textsc{NPtime}. We now show how to use this notion to effectively compute the $\ASV$. This algorithm is also based on the computation of an effective representation, for each vertex $v$ of the game graph, of the infinite set of pairs $\Lambda(v)$. The following lemma expresses that a symbolic representation of this set of pairs can be constructed effectively. This result is using techniques that have been introduced in~\cite{DBLP:conf/cav/BrenguierR15}.

%\begin{lemma}\label{lem:badval-effective}
%Given a bi-weighted game graph $\game$ and a vertex $v \in V$, we
%can effectively construct a symbolic representation of the set of
%pairs $\Lambda(v)$, as a finite disjunction
%$\bigvee_i S_i$ where each set $S_i$ is represented by a finite
%system of (strict and non-strict) linear %inequations~\footnote{Equivalently, $\Lambda(v)$ can be
%represented by a formula of the theory of the reals $\langle \mathbb{R},+,\geq\rangle$ with two free variables.}. This symbolic representation can be constructed in exponential time in general and in polynomial time when the weights in the bi-weighted arena are polynomially bounded.
%\end{lemma}

%\begin{color}{red}
\begin{lemma}\label{lem:badval-effective}
Given a bi-weighted game graph $\game$ and a vertex $v \in V$, we
can effectively construct a formula $\Psi_v(x,y)$ of $\langle \mathbb{R},+,\leq \rangle$ with two free variables such that 
$(c,d) \in \Lambda(v)$ if and only if the formula $\Psi_v(x,y)[x/c,y/d]$ is true.
%This symbolic representation can be constructed in exponential time in general and in polynomial time when the weights in the bi-weighted arena are polynomially bounded.
\end{lemma}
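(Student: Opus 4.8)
The plan is to read $\Lambda(v)$ as the set of threshold pairs that Player~1 can \emph{enforce} in an auxiliary two-dimensional mean-payoff game on $\arena$, and then to obtain a defining formula by adapting the Pareto-curve construction of~\cite{DBLP:conf/cav/BrenguierR15}. For a fixed $(c,d)$, the statement $v \models \Estrat{1}(\MPinf_0 \leq c \land \MPinf_1 \geq d)$ is precisely the condition that Player~1 wins the zero-sum game whose objective is the conjunction $\MPinf_0 \leq c \land \MPinf_1 \geq d$. Since $\MPinf_0(\pi) \leq c$ is equivalent to $\limsup_{k} \tfrac{1}{k}(-w_0(\pi_{\leq k})) \geq -c$, this objective is a \emph{mixed} $\liminf/\limsup$ two-dimensional mean-payoff threshold, in which Player~1 forces the $\limsup$-average of $-w_0$ above $-c$ and simultaneously the $\liminf$-average of $w_1$ above $d$. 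Thus $\Lambda(v)$ is exactly the enforceable-threshold (Pareto) region of this game seen from $v$, after the coordinate substitution $(c,d)\mapsto(-c,d)$.

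First I would record that $\Lambda(v)$ is monotone: it is upward closed in its first coordinate and downward closed in its second, since enlarging $c$ or shrinking $d$ only weakens Player~1's objective. Hence $\Lambda(v)$ is a ``staircase'' region entirely determined by its Pareto frontier, which is what I aim to describe. To compute it symbolically I would follow~\cite{DBLP:conf/cav/BrenguierR15}: the long-run behaviour of a play is governed by the bottom {\sf SCC} $S$ in which it is eventually trapped, and inside $S$ the mean-payoff vectors realizable by a single play are exactly $F_{\sf min}({\sf CH}(\mathbb{C}(S)))$, already effectively expressible as $\Phi_{S}$ by Theorem~\ref{thm:caractSCC}. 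The adversarial content---that Player~1 can \emph{guarantee} a vector against every Player~0 strategy---is resolved using the determinacy of two-dimensional mean-payoff games~\cite{DBLP:journals/iandc/VelnerC0HRR15}, and the ability of Player~1 to steer the play into a favourable {\sf SCC} is captured by a standard controllable-predecessor/attractor analysis (a union over successors at Player~1 vertices, an intersection at Player~0 vertices). Each of these ingredients is definable in $\langle \mathbb{R},+,\leq \rangle$, and because this theory is closed under Boolean connectives and admits quantifier elimination~\cite{DBLP:journals/siamcomp/FerranteR75}, they can be composed into a single effectively constructible formula $\Psi_v(x,y)$; substituting $x=c$, $y=d$ then characterizes membership in $\Lambda(v)$.

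The main obstacle is exactly this adversarial quantification. Theorem~\ref{thm:caractSCC} only describes the mean-payoff vectors realizable by \emph{some} play trapped in $S$, whereas here every such vector must be \emph{enforced} by Player~1 against all strategies of Player~0, so one cannot read $\Lambda(v)$ off $F_{\sf min}({\sf CH}(\mathbb{C}(S)))$ directly. The technical heart---and the part genuinely imported from~\cite{DBLP:conf/cav/BrenguierR15}---is to show that passing from realizable to enforceable thresholds preserves definability in $\langle \mathbb{R},+,\leq \rangle$, despite the mixed $\liminf/\limsup$ nature of the objective and the fact that winning strategies in such games may require infinite memory. Verifying that our conjunctive, mixed-objective game fits the Pareto-curve framework of~\cite{DBLP:conf/cav/BrenguierR15}, and that the per-vertex symbolic sets close under the finitely many fixpoint iterations so that the construction terminates, is the remaining work.
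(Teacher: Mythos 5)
Your overall route is the paper's: negate the weights on dimension~0 so that Player~1 maximizes on both coordinates (your $\limsup$ observation is exactly the paper's Proposition~\ref{prop:mp-inverse}), identify $\Lambda(v)$ with the set of thresholds Player~1 can enforce in the resulting two-dimensional game, and invoke the symbolic Pareto-curve computation of~\cite{DBLP:conf/cav/BrenguierR15} (Theorem~\ref{thm:val}). However, the step you defer as ``remaining work''---fitting the mixed $\limsup/\liminf$ conjunctive objective into that framework---is not a routine verification; it is the one substantive ingredient of the proof, and your proposal does not supply it. Theorem~\ref{thm:val} is stated for games in which \emph{every} dimension carries a $\liminf$ mean-payoff, so after your weight negation the appeal to~\cite{DBLP:conf/cav/BrenguierR15} does not go through as written. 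The paper closes exactly this gap with Proposition~\ref{prop:mp-inf-sup}, a direct corollary of Lemma~14 in~\cite{DBLP:journals/iandc/VelnerC0HRR15}: for all $c,d$, we have $v \models \Estrat{1} -\MPsup_0 \geq -c \land \MPinf_1 \geq d$ if and only if $v \models \Estrat{1} -\MPinf_0 \geq -c \land \MPinf_1 \geq d$. This exchange of $\limsup$ for $\liminf$ (valid because both lower-bound constraints are enforced by the same player) turns the auxiliary game into a genuine two-dimensional $\liminf$ mean-payoff game whose threshold set ${\sf Th}$ coincides with $\Lambda(v)$, up to the sign substitution you already describe.

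Once that lemma is in place, most of the extra machinery you sketch is unnecessary: Theorem~\ref{thm:val} is used as a black box, so there is no separate controllable-predecessor/attractor fixpoint analysis to carry out, no termination or closure argument for per-vertex symbolic sets, and the possible need for infinite-memory winning strategies is already absorbed into the correctness of the algorithm of~\cite{DBLP:conf/cav/BrenguierR15}; the monotonicity of $\Lambda(v)$, while true, plays no role. In short: your setup matches the paper's, but the decisive $\liminf/\limsup$ replacement step (Proposition~\ref{prop:mp-inf-sup}) is missing, and without it the reduction to the Pareto-curve theorem is incomplete.
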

%\end{color}

\subparagraph{Extended graph game} From the graph game
$\game=(V,E,w_0,w_1)$, we construct the extended graph game
$\game^{{\sf ext}}=(V^{\sf ext},E^{\sf ext},w_0^{\sf ext},w_1^{\sf
  ext})$, whose vertices and edges are defined as follows. The set of
vertices is $V^{\sf ext}=V \times 2^V$. With an history $h$ in $\game$, we associate a vertex in $\game^{\sf ext}$ which is a pair $(v,P)$, where $v={\last}(h)$ and $P$ is the set of the vertices traversed along $h$. Accordingly the set of edges and the weight functions are defined as
$E^{\sf ext}=\left\{ ((v,P),(v',P')) \mid (v,v') \in E \land P'=P\cup\{v'\}\right\}$ and
$w_i^{\sf ext}((v,P),(v',P'))=w_i((v,v'))$, for $i \in \{0,1\}$. Clearly, there exists a bijection between the plays $\pi$ in $\game$ and the plays $\pi^{\sf ext}$ in $\game^{\sf ext}$ which start in vertices of the form $(v,\{v\})$, i.e. $\pi^{\sf ext}$ is mapped to the play $\pi$ in $\game$ that is obtained by erasing the second dimension of its vertices. 
%\begin{color}{red}In the sequel, we do not always make an explicit distinction between paths $\pi$ in $G$ and their associated path $\pi^{\sf ext}$ in $G^{\sf ext}$ in order to simplify notations.
%\end{color}

\begin{proposition}
\label{prop:extended-plays} For all game graph $\game$, the following holds:
  \begin{enumerate}
         
      \item Let $\pi^{\sf ext}$ be an infinite play in the extended graph and
$\pi$ be its projection into the original graph $\game$ (over the first component of each vertex) , the following properties hold: 
%  \begin{itemize}
      %\item 
      $(i)$ For all $i < j$: if $\pi^{\sf ext}(i)=(v_i,P_i)$ and
        $\pi^{\sf ext}(j)=(v_j,P_j)$ then $P_i \subseteq P_j$.
%      \item  Let us denote by $\tilde{V}(\pi)$ the set of vertices visited along $\pi$. Then there exists $i \geq 0$ such that for all $i < j$: 
 %     if $\pi(i)=(v_i,P_i)$ and $\pi(j)=(v_j,P_j)$ then $P_i = P_j=\tilde{V}(\pi)$. 
      %\item 
      $(ii)$ $\MPinf_i(\pi^{\sf ext})=\MPinf_i(\pi)$, for $i \in \{0,1\}$.
  %\end{itemize}
   \item %\begin{color}{red}
   The unfolding of $\game$ from $v$ and the unfolding of $\game^{\sf ext}$ from $(v,\{v\})$ are isomorphic, and so $\ASV(v)=\ASV(v,\{v\})$.
   %\end{color}    
  \end{enumerate}

\end{proposition}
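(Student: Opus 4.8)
The plan is to treat the whole proposition as a bookkeeping result about the projection $p\colon (v,P)\mapsto v$ and its pointwise lift to plays and histories, proving item~1 directly and then bootstrapping item~2 from a tree isomorphism of the two unfoldings. For item~1$(i)$ I would argue by a one-step observation together with transitivity of $\subseteq$: by the definition of $E^{\sf ext}$, every edge $((v,P),(v',P'))$ satisfies $P'=P\cup\{v'\}\supseteq P$, so writing $\pi^{\sf ext}(k)=(v_k,P_k)$ gives $P_k\subseteq P_{k+1}$ for all $k$, and composing these inclusions yields $P_i\subseteq P_j$ for all $i<j$.

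For item~1$(ii)$ the key point is that weights are pulled back unchanged along the projection. Since $w_i^{\sf ext}((v,P),(v',P'))=w_i((v,v'))$, the partial sums agree edge by edge, $w_i(\pi^{\sf ext}_{\leq k})=w_i(\pi_{\leq k})$ for every $k$. Hence the averaged partial sums are identical for the two plays, and taking $\liminf_{k\to\infty}$ gives $\MPinf_i(\pi^{\sf ext})=\MPinf_i(\pi)$; the same computation with $\limsup$ would give the corresponding $\MPsup$ identity should it be needed later.

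For item~2 I would exhibit an explicit bijection between the histories of $\game$ from $v$ and those of $\game^{\sf ext}$ from $(v,\{v\})$. Given a history $h=v_0v_1\cdots v_k$ with $v_0=v$, its second component is forced: setting $P_0=\{v\}$ and $P_{j+1}=P_j\cup\{v_{j+1}\}$ produces the unique history $h^{\sf ext}=(v_0,P_0)\cdots(v_k,P_k)$ of $\game^{\sf ext}$ projecting to $h$, and conversely projection onto first components sends histories of $\game^{\sf ext}$ back to histories of $\game$. These two maps are mutually inverse and respect the prefix (tree) order, the edge relation, and---taking $V_i^{\sf ext}=V_i\times 2^V$, so that ownership is determined by the first coordinate---the player partition. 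Thus the two unfoldings are isomorphic as initialized game trees.

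Finally I would transport the $\ASV$ across this isomorphism. Since strategies are functions on histories, outcomes are plays, and, by item~1$(ii)$, corresponding plays carry identical $\MPinf_0$ and $\MPinf_1$ values, the isomorphism induces value-preserving bijections $\Sigma_i(\game)\to\Sigma_i(\game^{\sf ext})$ that commute with $\outcome$. In particular the best-response sets $\BR_1^{\epsilon}$, their (non)emptiness, and the nested suprema and infima defining $\ASV(\sigma_0)(v)$ all correspond, giving $\ASV(v)=\ASV(v,\{v\})$. I expect the only genuinely delicate point to be this last transfer: one must check that the $\epsilon$-best-response quantifiers and the $\sup/\inf$ structure are preserved, which they are precisely because $\ASV$ is defined purely from the unfolding and the payoffs of plays, both of which the isomorphism fixes.
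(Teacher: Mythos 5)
Your proposal is correct and matches the paper's (implicit) argument: the paper states this proposition without a separate proof, relying exactly on the observations you spell out --- the second components grow monotonically along edges, the weights $w_i^{\sf ext}$ are pulled back unchanged so all partial sums and hence $\MPinf_i$ values coincide, and the forced second component makes the history map $h \mapsto h^{\sf ext}$ a bijection inducing an isomorphism of the unfoldings of $\game$ from $v$ and of $\game^{\sf ext}$ from $(v,\{v\})$, under which strategies, outcomes, $\epsilon$-best-response sets, and the $\sup/\inf$ structure of $\ASV$ all transfer. Your elaboration of the last step (checking that $\BR_1^{\epsilon}$ and its nonemptiness are preserved) is precisely the bookkeeping the paper leaves to the reader, so nothing is missing.
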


By the first point of the latter proposition and since the set of vertices of the graph is finite, the second component of any play $\pi^{\sf ext}$ stabilises into a set of vertices of $\game$ which we denote by $V^*(\pi^{\sf ext})$. 
  
  %An infinite play $\pi^{\sf ext}$ in the extended graph is called {\em
%  stable} if $(\MPinf_0(\pi^{\sf ext}),\MPinf_1(\pi^{\sf ext}))\not\in
%\Lambda(v)$ for all $v\in V^*(\pi^{\sf ext})$. The following lemma
%relates the adversarial Stackelberg value in $\game$ with the stable
%plays in the {\sf SCC}s of $\game^{\sf ext}$:

%\begin{lemma}\label{lem:stablescc}
%For all mean-payoff games $\mathcal{G}$, for all vertices $v \in V$,
%let ${\sf SCC}^{\sf ext}(v)$ be the set of strongly-connected components in
%$\game^{\sf ext}$ which are reachable from $(v,\{v\})$, and for all
%$S\in {\sf SCC}^{\sf ext}(v)$, let $stable(S)$ the set of stable plays $\pi^{\sf ext}$ in $S$, then we have: 
%$$\ASV(v)=\max_{S\in {\sf SCC}^{\sf ext}(v)} \sup_{\pi^{{\sf ext}} \in
%  stable(S)} \MPinf_0(\pi^{{\sf ext}}).$$
%\end{lemma}

%\begin{color}{red}
We now show how to characterize $\ASV(v)$ with the notion of witness introduced above and the decomposition of $G^{\sf ext}$ into {\sf SCC}. This is formalized in the following lemma:

\begin{lemma}
\label{lem:decomp}
For all mean-payoff games $\mathcal{G}$, for all vertices $v \in V$,
let ${\sf SCC}^{\sf ext}(v)$ be the set of strongly-connected components in $\game^{\sf ext}$ which are reachable from $(v,\{v\})$, then we have 
$$\ASV(v)=\max_{S\in {\sf SCC}^{\sf ext}(v)} \sup \{ c \in \mathbb{R} \mid \exists \pi^{\sf ext}: \pi^{\sf ext}~\mbox{is a witness for}~\ASV(v,\{v\})>c~\mbox{and}~V^*(\pi^{\sf ext})=S \}$$
\end{lemma}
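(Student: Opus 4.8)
The plan is to recognise the right-hand side as a mere reorganisation of the witness characterisation of Theorem~\ref{thm:witness-mp}, applied inside the extended game $\game^{\sf ext}$ and then sliced according to the SCC in which a witness eventually settles. First I would use Proposition~\ref{prop:extended-plays}(2), which gives $\ASV(v)=\ASV(v,\{v\})$, the latter value now being understood in the mean-payoff game $\game^{\sf ext}$. Applying Theorem~\ref{thm:witness-mp} to $\game^{\sf ext}$ at the vertex $(v,\{v\})$ yields, for every $c$, the equivalence $\ASV(v,\{v\})>c$ iff there exists a play $\pi^{\sf ext}$ that is a witness for $\ASV(v,\{v\})>c$. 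Since $x=\sup\{c\in\mathbb{R}\mid c<x\}$ for every real $x$, and $\ASV(v,\{v\})$ is finite (it lies in $[-W,W]$), this gives
\[
\ASV(v)\;=\;\sup\{\,c\in\mathbb{R}\mid \exists\,\pi^{\sf ext}\text{ a witness for }\ASV(v,\{v\})>c\,\}.
\]

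Next I would partition the witnesses according to the SCC in which they become trapped. Along any play $\pi^{\sf ext}$ the second component only grows (Proposition~\ref{prop:extended-plays}(1)(i)) and $V$ is finite, so it stabilises to $V^*(\pi^{\sf ext})$; the set ${\sf inf}(\pi^{\sf ext})$ of infinitely visited vertices is an SCC of $\game^{\sf ext}$, and because the second component is constant along every cycle it takes the common value $V^*(\pi^{\sf ext})$ on that SCC. Hence every play, in particular every witness, settles in a unique $S\in{\sf SCC}^{\sf ext}(v)$, and I read the clause ``$V^*(\pi^{\sf ext})=S$'' of the statement as exactly this ``$\pi^{\sf ext}$ settles in $S$'' condition. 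Because the witness property of $\pi^{\sf ext}$ (its value $(c',d)$ and the absence of $(c,d)$-bad vertices along it) depends only on the play itself and not on the assigned SCC, the set of admissible thresholds decomposes as the finite union
\[
\{\,c\mid \exists\,\pi^{\sf ext}\text{ witness}\,\}\;=\;\bigcup_{S\in{\sf SCC}^{\sf ext}(v)}\{\,c\mid \exists\,\pi^{\sf ext}\text{ witness with }V^*(\pi^{\sf ext})=S\,\}.
\]

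Finally I would take suprema of both sides. As ${\sf SCC}^{\sf ext}(v)$ is finite, the supremum of a finite union is the maximum of the suprema of its members, so
\[
\ASV(v)\;=\;\max_{S\in{\sf SCC}^{\sf ext}(v)}\ \sup\{\,c\mid \exists\,\pi^{\sf ext}\text{ witness with }V^*(\pi^{\sf ext})=S\,\},
\]
which is the claimed equality. This lemma is essentially a bookkeeping reformulation, the substantive work having been done in Theorem~\ref{thm:witness-mp}; accordingly the only genuinely delicate point I expect is the identification underlying ``$V^*(\pi^{\sf ext})=S$''. One must note that the second component is constant on each SCC yet that \emph{distinct} SCCs reachable from $(v,\{v\})$ may carry the \emph{same} stable second component (e.g. two disjoint cycles over the same traversed vertex set), so the slicing is genuinely by SCC and not merely by the value $V^*(\pi^{\sf ext})$; once this is pinned down, the passage from the union to the maximum is an elementary manipulation of suprema over a finite family.
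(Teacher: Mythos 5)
Your proof is correct and is essentially the paper's own argument: the paper proves the lemma by the chain $\ASV(v)=\sup\{c\in\mathbb{R}\mid \ASV(v)>c\}$, invokes Theorem~\ref{thm:witness-mp} to replace the strict inequality by the existence of a witness, transfers to $\game^{\sf ext}$ via point~2 of Proposition~\ref{prop:extended-plays}, and performs the slicing by {\sf SCC} via point~1 --- exactly your ingredients, up to the harmless reordering of applying the witness theorem after, rather than before, moving to the extended game.

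One correction, though, concerning your closing ``delicate point'': the claim that distinct SCCs reachable from $(v,\{v\})$ may carry the same stable second component is false. If $(x,P)$ and $(y,P)$ are both reachable in $\game^{\sf ext}$, then any history reaching $(y,P)$ projects to a history in $\game$ whose set of traversed vertices is exactly $P$ (so it never leaves $P$) and which in particular visits $x$; its suffix from some visit of $x$ lifts, starting at $(x,P)$, to a path in $\game^{\sf ext}$ along which the second component stays equal to $P$, ending in $(y,P)$. Symmetrically $(y,P)$ reaches $(x,P)$, so all reachable vertices with second component $P$ lie in a single SCC. In particular your ``two disjoint cycles over the same traversed vertex set'' scenario cannot occur in the reachable part (to cover all of $P$ before settling in either cycle, the histories would have to supply connecting paths inside $P$ in both directions, merging the two cycles into one SCC), and the map $S\mapsto V^*(S)$ is injective on ${\sf SCC}^{\sf ext}(v)$ --- which is precisely what makes the notation $V^*(\pi^{\sf ext})=S$ in the statement unambiguous, rather than a genuine refinement of slicing by the value $V^*(\pi^{\sf ext})$. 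The mistake is non-load-bearing: your decomposition slices by the SCC in which the witness settles, and the sup-of-a-finite-union argument only needs every witness to fall into some slice, so the proof stands once the erroneous remark is deleted.
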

\begin{proof}
First, we note the following sequence of equalities:
  $$
  \begin{array}{l}
    \ASV(v)\\=\sup \{ c \in \mathbb{R} \mid \ASV(v) \geq c \} \\
            =\sup \{ c \in \mathbb{R} \mid \ASV(v) > c \} \\
            =\sup \{ c \in \mathbb{R} \mid \exists \pi: \pi~\mbox{is a witness for}~\ASV(v)>c \} \\
            =\sup \{ c \in \mathbb{R} \mid \exists \pi^{\sf ext}: \pi^{\sf ext}~\mbox{is a witness for}~\ASV(v,\{v\})>c \} \\
            =\max_{S\in {\sf SCC}^{\sf ext}(v)} \sup \{ c \in \mathbb{R} \mid \exists \pi^{\sf ext}: \pi^{\sf ext}~\mbox{is a witness for}~\ASV(v,\{v\})>c~\mbox{and}~V^*(\pi^{\sf ext})=S \} \\
  \end{array}
  $$
  \noindent
  The first two equalities are direct consequences of the definition of the supremum and that $\ASV(v) \in \mathbb{R}$. The third is a consequence of Theorem~\ref{thm:witness-mp} that guarantees the existence of witnesses for strict inequalities.  The fourth equality is a consequence of point 2 in Proposition~\ref{prop:extended-plays}. The last equality is the consequence of point 1 in Proposition~\ref{prop:extended-plays}. 
\end{proof}
%\end{color}

By definition of $\game^{\sf ext}$, for all {\sf SCC} $S$ of $\game^{\sf
  ext}$, there exists a set of vertices of $\game$ which we also
denote by $V^*(S)$ such that any vertex of $S$ is of the form
$(v,V^*(S))$. The set of bad thresholds for $S$ is then defined as 
$\Lambda^{\sf ext}(S) = \bigcup_{v\in V^{*}(S)} \Lambda(v)$. 
%\begin{color}{red}
Applying Lemma~\ref{lem:badval-effective}, we can construct a formula $\Psi_S(x,y)$ which symbolic encodes the set $\Lambda^{\sf ext}(S)$.
%\end{color}

%\begin{color}{red}
Now, we are equipped to prove that $\ASV(v)$ is effectively computable. This is expressed by the following theorem and established in its proof.

\begin{theorem}
\label{thm:effectiveASV}
For all mean-payoff games $\mathcal{G}$, for all vertices $v \in V$, the value $\ASV(v)$ can be effectively expressed by a formula $\rho_{v}$ in $\langle \mathbb{R},+,\leq \rangle$ and explicitly computed from this formula.
\end{theorem}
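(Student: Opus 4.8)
The plan is to combine the structural results already in hand into a single first-order formula over $\langle \mathbb{R},+,\leq \rangle$ that exactly captures the value $\ASV(v)$, and then to invoke decidability of this theory to extract the number. By Lemma~\ref{lem:decomp}, $\ASV(v)$ is the maximum over the finitely many strongly-connected components $S \in {\sf SCC}^{\sf ext}(v)$ of the supremum of thresholds $c$ admitting a witness trapped in $S$. Since the outer maximum is over a finite set, it suffices to produce, for each fixed reachable {\sf SCC} $S$ of $\game^{\sf ext}$, a formula $\rho_{v,S}(z)$ with one free variable that holds exactly when $z$ equals the inner supremum, and then take the maximum of these finitely many values.

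The core of the work is building $\rho_{v,S}(z)$. First I would express, via Theorem~\ref{thm:caractSCC}, the achievable mean-payoff coordinates of plays trapped in $S$ by the formula $\Phi_{V^*(S)}(x,y)$ capturing $F_{\sf min}({\sf CH}(\mathbb{C}(V^*(S))))$. By the small-witness analysis of Lemma~\ref{lem:smallwitness} and its geometric refinement, a witness trapped in $S$ for $\ASV(v,\{v\}) > c$ exists precisely when there is a point $(x,y)$ realizable in $S$ (so $\Phi_{V^*(S)}(x,y)$ holds) with $x > c$ and such that no vertex of $S$ is $(c,y)$-bad, i.e. $(c,y) \notin \Lambda^{\sf ext}(S)$. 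Using Lemma~\ref{lem:badval-effective} I have the formula $\Psi_S(x,y)$ encoding $\Lambda^{\sf ext}(S)$, so the condition ``witness for threshold $c$ exists in $S$'' becomes the formula
\[
\mathrm{Wit}_S(c) \;:=\; \exists x\, \exists y:\; \Phi_{V^*(S)}(x,y) \wedge x > c \wedge \neg\, \Psi_S(c,y).
\]
The inner supremum for $S$ is then $\sup\{ c \mid \mathrm{Wit}_S(c)\}$, which I characterize by the one-free-variable formula
\[
\rho_{v,S}(z) \;:=\; \big(\forall c:\; \mathrm{Wit}_S(c) \rightarrow c \leq z\big) \;\wedge\; \big(\forall z'<z\; \exists c:\; c > z' \wedge \mathrm{Wit}_S(c)\big),
\]
asserting that $z$ is an upper bound on achievable thresholds and that nothing smaller is. Finally $\rho_v(z)$ says $z$ is the maximum over the finite family: $\bigvee_{S} \rho_{v,S}(z) \wedge \bigwedge_{S'} \forall z':\; \rho_{v,S'}(z') \rightarrow z' \leq z$.

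Because every ingredient---$\Phi_{V^*(S)}$, $\Psi_S$, the boolean and quantifier combinations above---lies in $\langle \mathbb{R},+,\leq \rangle$, the resulting $\rho_v(z)$ is a genuine formula of that theory with one free variable. Effective quantifier elimination then yields an explicit quantifier-free description of the singleton $\{z \mid \rho_v(z)\}$, from which the rational value $\ASV(v)$ is read off, giving both the expressibility and the explicit computation claimed.

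The main obstacle I anticipate is not the logical assembly but justifying that $\mathrm{Wit}_S(c)$ faithfully captures witness existence in $S$: I must argue that the geometric/Carathéodory reduction of Lemma~\ref{lem:smallwitness} lets me replace ``there exists an infinite witness play trapped in $S$'' by ``there exists a realizable coordinate $(x,y)$ in $S$ that avoids all bad vertices,'' in particular that the badness condition depends only on the stabilized vertex set $V^*(S)$ (as encoded by $\Lambda^{\sf ext}(S)$) and that Corollary~\ref{coro:up} guarantees monotonicity so that checking the single pair $(c,y)$ against $\Psi_S$ is both necessary and sufficient. Once this correspondence is nailed down, the remainder is routine manipulation within a decidable theory.
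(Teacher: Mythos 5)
Your proposal is correct and takes essentially the same route as the paper: your $\mathrm{Wit}_S(c)$ is exactly the paper's witness formula $\rho^S_{v_0}(c) \equiv \exists x,y \cdot x > c \land \Phi_S(x,y) \land \neg \Psi_S(c,y)$, applied per reachable {\sf SCC} of $\game^{\sf ext}$ via Lemma~\ref{lem:decomp}, with the value then extracted by quantifier elimination in $\langle \mathbb{R},+,\leq \rangle$. The only cosmetic differences are that you spell out the outer maximum over {\sf SCC}s as an explicit formula and encode the supremum as ``upper bound plus approachability from below,'' whereas the paper writes the equivalent $\exists z \cdot \forall e>0 \cdot (\rho^S_{v_0}(z-e) \land \forall y\,(\rho^S_{v_0}(y) \Rightarrow y \leq z))$.
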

\begin{proof}
To establish this theorem, we show how to build the formula $\rho_{v}(z)$ that is true iff $\ASV(v)=z$. We use Lemma~\ref{lem:decomp}, to reduce this to the construction of a formula that expresses the existence of witnesses for $\ASV(v)$ from $(v,\{v\})$:
$$\ASV(v)=\!\!\!\!\!\max_{S\in {\sf SCC}^{\sf ext}(v)}\!\! \sup \{ c \in \mathbb{R}
\mid \exists \pi^{\sf ext}: \pi^{\sf ext}~\text{is a witness for}~\ASV(v,\{v\})>c~\text{and}~V^*(\pi^{\sf ext})=S \} $$
\noindent
As $\max_{S\in {\sf SCC}^{\sf ext}(v)}$ is easily expressed in $\langle \mathbb{R},+,\leq \rangle$, we concentrate on one {\sf SCC} $S$ reachable from $(v,\{v\})$ and we show how to express 
$$\sup \{ c \in \mathbb{R} \mid \exists \pi^{\sf ext}: \pi^{\sf ext}~\mbox{is a witness for}~\ASV(v,\{v\})>c~\mbox{and}~V^*(\pi^{\sf ext})=S \}$$
\noindent
First, we define a formula that express the existence of a witness for $\ASV(v)>c$.
This is done by the following formula:
%$$\rho^S_{v_0}(c) \equiv \forall e > 0 \cdot \exists x,y \cdot x > z-e \land \Phi_S(x,y) \land \neg \Psi_S(z-e,y)$$
$$\rho^S_{v_0}(c) \equiv \exists x,y \cdot x > c \land \Phi_S(x,y) \land \neg \Psi_S(c,y)$$

\noindent
Where $\Phi_S(x,y)$ is the symbolic encoding of $F_{\sf min}({\sf CH}(\mathbb{C}(S)))$ as defined in Theorem~\ref{thm:caractSCC}. This ensures that the values $(x,y)$ are the mean-payoff values realizable by some path in $S$. By Lemma~\ref{lem:badval-effective}, $\neg \Psi_S(c,y)$ expresses that the path does not cross a $(c,y)$-bad vertex. So the conjunction $\exists x,y \cdot x > c \land \Phi_S(x,y) \land \neg \Psi_S(c,y)$ establishes the existence of a witness with mean-payoff values $(x,y)$ for the threshold $c$. 
%The quantification over $e>0$ expresses that $z$ is the $\sup$ over all the thresholds $z-e$ that have a witness.  
From this formula, we can compute the $\ASV$ by quantifier elimination in:
$$\exists z \cdot \forall e>0 \cdot  (\rho^S_{v_0}(z-e) \land (\forall y \cdot \rho^S_{v_0}(y) \implies y \leq z))$$
\noindent
and obtain the unique value of $z$ that makes the formula true.
\end{proof}
%\end{color}

\section{Stackelberg values for discounted-sum games}\label{secDS}
In this section, we study the notion of Stackelberg value in the case of discounted sum measures. Beside the adversarial setting considered so far, we also refer to a \emph{cooperative} framework for discounted sum-games, since we add some results to~\cite{DBLP:journals/corr/GuptaS14c}, where the cooperative Stackelberg value for discounted-sum measures has been previously introduced and studied.

%Because we provide also some additional results to the cooperative Stackelberg value in this case, we define formally both the $\CSV$ and $\ASV$.

\subparagraph{On the existence of best-responses for $\sf{DS}$}
First, we show that the set of best-responses for Player $1$ to strategies of Player $0$ is guaranteed to be nonempty for discounted sum games, while this was not the case in mean-payoff games. 

%In fact, since the discounted sum measure is a %continuous cost function, we prove that the outcome of %any given strategy $\sigma$ for Player $0$ contains a %play $\pi$ such that %$\sf{DS}^\lambda_1(\pi)=\sup_{\tau\in \Sigma_1}{\sf
%  DS}^\lambda_1({\sf Out}(\sigma,\tau))$, which leads %to $\BR_1(\sigma)\neq \varnothing$. 
  
\begin{lemma}
\label{lem:exists-ds-br}
For all discounted sum games  $\mathcal{G}$ and strategies $\sigma_0\in \Sigma_0(\mathcal{G})$,   $\BR_1(\sigma_0)\neq \varnothing$.
\end{lemma}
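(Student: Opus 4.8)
The plan is to exploit the defining feature of the discounted-sum measure: the contribution of the edges after step $k$ to $\DS^\lambda_1$ is bounded in absolute value by $\lambda^k W/(1-\lambda)$, so tails of plays are negligible. Fix a (possibly infinite-memory) strategy $\sigma_0 \in \Sigma_0$. Since $\sigma_0$ is history-dependent, I would not reason on the finite arena but on the (infinite) tree of histories. For every history $h$ with ${\sf last}(h)=v$, I define the optimal future value for Player~1, with the discount \emph{restarted} at $v$, by
\[
u(h)=\sup_{\sigma_1 \in \Sigma_1}\ \sum_{j\geq 0}\lambda^j\, w_1(\pi_{|h|+j},\pi_{|h|+j+1}),\qquad \pi=\outcome_h(\sigma_0,\sigma_1).
\]
This supremum is finite; in fact $|u(h)|\leq W/(1-\lambda)$ for every $h$, since each summand is at most $\lambda^j W$ in absolute value. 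Note that the future value depends only on the behaviour of $\sigma_1$ on extensions of $h$, so this is genuinely a supremum over continuations.

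The first real step is to establish a Bellman-style optimality equation for $u$. If $v={\sf last}(h)\in V_0$, then $\sigma_0$ forces the move, so $u(h)=w_1(v,\sigma_0(h))+\lambda\, u(h\cdot\sigma_0(h))$. If $v\in V_1$, then Player~1 first selects a successor and then optimizes the remainder, and by finiteness of $Succ(v)$ the outer supremum is attained, giving $u(h)=\max_{v'\in Succ(v)}\bigl(w_1(v,v')+\lambda\, u(h\cdot v')\bigr)$. This identity follows routinely from the definition of the supremum together with the fact that $Succ(v)$ is finite. I would then define a \emph{greedy} strategy $\sigma_1^\ast$ for Player~1: for every $h$ ending in $V_1$, let $\sigma_1^\ast(h)$ be any successor in $Succ(v)$ attaining the maximum above. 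This $\sigma_1^\ast$ is well defined precisely because the maximum is over a finite set, so no $\epsilon$-approximation is needed.

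It then remains to verify that $\sigma_1^\ast$ attains the value $u$, which is where discounting is crucial. Fix any history $h$ and set $\pi^\ast=\outcome_h(\sigma_0,\sigma_1^\ast)$. Unfolding the Bellman equation with \emph{equalities} at every step (forced at $V_0$-vertices, greedy-optimal at $V_1$-vertices), I obtain for all $N\geq 0$
\[
u(h)=\sum_{j=0}^{N-1}\lambda^j\, w_1(\pi^\ast_{|h|+j},\pi^\ast_{|h|+j+1})+\lambda^N\, u(\pi^\ast_{\leq |h|+N}).
\]
Letting $N\to\infty$, the remainder term vanishes because $|u|\leq W/(1-\lambda)$ and $\lambda^N\to 0$; hence $u(h)$ equals the future discounted sum actually realized by $\sigma_1^\ast$, i.e. the supremum is attained. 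Applying this to the length-$0$ histories $h=v$ yields, for every $v\in V$, that $\sigma_1^\ast$ maximizes $\DS^\lambda_1(\outcome_v(\sigma_0,\sigma_1))$ over all $\sigma_1$; since $\sigma_1^\ast$ is a single function defined on all histories (each history encoding its own starting vertex, so there is no conflict between different $v$), this gives $\sigma_1^\ast\in\BR_1(\sigma_0)$ and hence $\BR_1(\sigma_0)\neq\varnothing$.

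I expect the main obstacle to be precisely the infinite-memory nature of $\sigma_0$, which makes the history tree infinite and non-regular, so one cannot invoke finiteness of a product graph (as in the finite-memory case of Lemma~\ref{lem:emptiness-nonemptiness}) nor a bare compactness argument, which would only produce an optimal play from a single vertex rather than one strategy optimal from every vertex simultaneously. The resolution is that discounting does double duty: it turns each Bellman maximization into a finite one (guaranteeing an optimal successor exists), and it forces the tail remainder to decay geometrically, so that the pointwise-greedy strategy actually attains the global optimum from every history at once.
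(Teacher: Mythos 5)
Your proof is correct, but it takes a genuinely different route from the paper. The paper argues by contradiction via compactness: it sets $s=\sup_{\tau}\DS_1(\out(\sigma_0,\tau))$, assumes no play compatible with $\sigma_0$ attains $s$, cuts every play at a prefix $\pi_{\leq k}$ where even the best conceivable tail cannot reach $s$ (i.e.\ $\DS_1(\pi_{\leq k})+W\lambda^k/(1-\lambda)<s$), invokes K\H{o}nig's lemma to get a finite tree, and then extracts from the branch minimizing the residual gap a uniform positive bound $s-\DS_1(\pi)\geq v(\pi^*_T)>0$ valid for \emph{all} plays, contradicting the definition of the supremum. You instead run the classical dynamic-programming argument: a Bellman optimality equation for the history-indexed value $u(h)$, attainment of the max at $V_1$-vertices by finiteness of $Succ(v)$, a greedy strategy $\sigma_1^\ast$, and geometric decay of the remainder $\lambda^N u(\pi^\ast_{\leq |h|+N})$ to show the greedy strategy realizes $u$. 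Both arguments hinge on the same two facts — bounded, geometrically decaying tails and finite branching — but they use them differently: the paper uses finite branching for K\H{o}nig's lemma, you use it to make each one-step maximization attained. Your version buys constructiveness (an explicit optimal strategy rather than a nonconstructive existence-of-a-play argument) and it directly delivers optimality from \emph{every} history, hence from every vertex $v$ simultaneously, which is exactly the universally quantified form of $\BR_1(\sigma_0)$ in the paper's definition; the paper's proof, as written, establishes attainment of the supremum from a fixed initial vertex and leaves the (harmless, since optimal plays from distinct vertices live on disjoint histories) combination over vertices implicit. The paper's approach is somewhat shorter and avoids setting up the value function, at the cost of being a proof by contradiction that exhibits no strategy.
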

\begin{proof}
Given $\sigma\in\Sigma_0(\mathcal{G})$, consider $S=\{{\sf
  DS}_1(\out(\sigma,\tau))\:|\:\tau\in
\Sigma_1(\mathcal{G})\}$. $S$ is a non empty limited subset of
$\mathbb{R}$, since for each $\tau\in\Sigma_1(\mathcal{G})$
 it holds  ${{\sf{DS}}_1(\out(\sigma,\tau))}\leq \dfrac{W}{1-\lambda}$, where $W$ is the maximum absolute value of a weight in $\mathcal{G}$. Hence, $S$ admits a unique superior extreme $s=sup(S)$. By definition of superior extreme, for each $\epsilon >0$, there exists $v_\epsilon\in S$ such that $s\geq v_\epsilon>s-\epsilon$. Therefore, for each  $\epsilon>0$ there exists $\tau_\epsilon\in\Sigma_1(\mathcal{G})$ such that $s\geq {\sf{DS}_1}(\out(\sigma,\tau_\epsilon))> s -\epsilon$, i.e.:
 \begin{equation}\label{equaTowardCont}
 0\leq s- {\sf{DS}}_1(\out(\sigma,\tau_\epsilon)) < \epsilon
 \end{equation}
 We show that this implies that $\out(\sigma)$ contains a play $\pi^*$ such that ${\sf DS}_1(\pi^*)=s$, which leads to $BR_1(\sigma)\neq \varnothing$, since Player $1$ has a strategy to achieve $s=sup(\{{\sf DS}_1(\out(\sigma,\tau))\:|\:\tau\in \Sigma_1(\mathcal{G})\})$.

\noindent By contradiction, suppose that ${\out}(\sigma)$ does not contain any play $\pi$ such that ${\sf DS}_1(\pi)=s$. Hence, for each $\pi\in {\out}(\sigma)$, it holds that ${\sf DS}_1(\pi)<s$ and $\pi$ admits a prefix $\pi_{\leq k}$ such that: 
\begin{equation}\label{equa2}
{\sf DS}_1(\pi_{\leq k})+ W\frac{\lambda^k}{1-\lambda}<s
\end{equation}
Hence,  we can cut each play in $\out(\sigma)$ as soon as Equation \ref{equa2} is accomplished, leading to a finite tree $T$ (by Konig lemma, since $\out(\sigma)$ is finitely branching). Let $\pi^*_T=\pi_0\dots \pi_k$ be a branch in the finite tree $T$  such that  the value $v(\pi^*_T)= s -({\sf DS}_1(\pi^*_T)+ W\frac{\lambda^k}{1-\lambda})$ is minimal. Note that, by Equation \ref{equa2},  $v(\pi^*_T)>0$ since $v(\pi^*_T)= s -({\sf DS}_1(\pi^*_T)+ W\frac{\lambda^k}{1-\lambda})>s-s=0$. 

Then, for  each play $\pi$, let $\pi_{\leq p}$ be the longest prefix of $\pi$ which is also a branch in the finite tree $T$. By definition of   $\pi^*_T$, we have:
\begin{equation}\label{eq4}
  s-{\sf DS}_1(\pi)\geq s -({\sf DS}_1(\pi_{\leq p})+ W\frac{\lambda^p}{1-\lambda})\geq v(\pi^*_T) >0
 \end{equation}
 This leads to  a contradiction to the fact that for all $\epsilon>0$ there exists $\tau\in\Sigma_1(\mathcal{G})$ such that $ s -{\sf DS}_1(\out(\sigma,\tau_\epsilon))<\epsilon $, established within Equation \ref{equaTowardCont}.
\end{proof}

\subparagraph{Stackelberg values for $\sf{DS}$ in the adversarial and cooperative settings}

The existence of best-responses allows us to simplify the notion of Stackelberg value for discounted sum measures, avoiding the parameter $\epsilon$ used for mean-payoff games. In particular, 
the adversarial Stackelberg value {\sf ASV}$(v)$  for discounted sum games is defined for all $\sigma_0\in\Sigma_0(\mathcal{G})$ as:
   $$\ASV(\sigma_0)(v)=\inf_{\sigma_1 \in \BR_{1}(\sigma_0)} {\sf DS}_0^\lambda(\outcome_v(\sigma_0,\sigma_1))\mbox{~and ~} \ASV(v)=\sup_{\sigma_0\in\Sigma_0} \ASV(\sigma_0)(v)$$
\noindent As previously announced, we also consider the notion of Stackelberg value for discounted sum measures in the cooperative setting, 
where Player $0$ suggests a profile of strategies $(\sigma_0,\sigma_1)$ and Player $1$ agrees to play $\sigma_1$ if the latter strategy is a best response to $\sigma_0$. Formally, the cooperative  Stackelberg value {\sf CSV}$(v)$  for discounted sum games is defined as:
$$\CSV(\sigma_0)(v)=\sup_{\sigma_1 \in \BR_{1}(\sigma_0)} {\sf DS}_0^\lambda(\outcome_v(\sigma_0,\sigma_1))\mbox{~and ~} \CSV(v)=\sup_{\sigma_0\in\Sigma_0} \CSV(\sigma_0)(v)$$

\noindent Lemma \ref{lem:target-csv} below links the cooperative Stackelberg value for discounted-sum measures to the \emph{target discounted-sum problem}~\cite{DBLP:conf/lics/BokerHO15} (cfr. Definition \ref{tdsp}), whose decidability is notoriously hard to solve and relates to  several open 
questions in mathematics and computer science \cite{DBLP:conf/lics/BokerHO15}.

\begin{definition}[Target Discount Sum Problem \cite{DBLP:conf/lics/BokerHO15} ({\sf TDS})] \label{tdsp} Given a rational discount factor $0<\lambda<1$ and three rationals $a,b,t$ does there exist an infinite sequence $w\in\{a,b\}^\omega$ such that $\sum_{i=0}^\infty w(i)\lambda^i=t$?
 \end{definition}
\noindent In particular, given an instance $I=(a,b,t,\lambda)$ of the
{\sf TDS} problem, Figure \ref{figTDSreduction}  depicts  a discounted sum game  $\mathcal{G}^I$ such that $I$ admits a solution iff $\CSV(v)\geq \lambda\cdot t$.
%In fact, since ${\DS}_0(\pi)=- {\DS}_1(\pi)$ for each play $\pi \in \mathcal{G}^I(s)$, the only strategy $\sigma$  of Player $0$ such that $\CSV(\sigma)\geq \lambda\cdot t$ (convincing Player $1$ to drive the play towards $s$ from $v$) consists in building a path $\pi$ in $\mathcal{G}^I(s)$  with $t={\DS}_0(\pi)=-{\DS}_1(\pi)$. By construction of $\mathcal{G}^I$, this is possible  iff $I$ admits a solution.

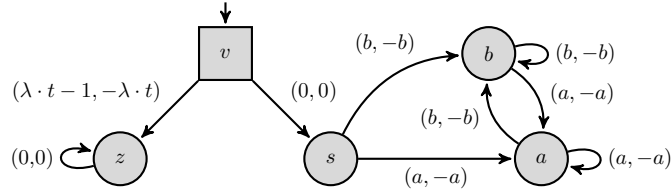
\begin{figure}[t]
    \centering

\begin{tikzpicture}[->,>=stealth',shorten >=1pt,auto,node distance=2.8cm,
                    thick,scale=0.7,every node/.style={scale=0.8}]
  \tikzstyle{every state}=[fill=gray!30,text=black]
  \tikzstyle{every edge}=[draw=black]
  \tikzstyle{initial}=[initial by arrow]

  %% weighted automaton

	\node[thick,state,initial above,initial text=,rectangle,minimum size=25pt]
        (s1) at (0,0) {$v$};

	\node[thick,state,minimum size=25pt]
        (s2) at (-2,-2) {$z$};
    \node[thick,state,minimum size=25pt]
        (s3) at (2,-2) {$s$};  
    \node[thick,state,minimum size=25pt]
        (s4) at (6,-2) {$a$};
    \node[thick,state,minimum size=25pt]
        (s5) at (5,0) {$b$};     

%	\node[thick,state,minimum size=25pt]
   %     (s3) at (8,0) {$3$};

       \draw (s2) edge [loop left] node {(0,0)} node [left,xshift=-0.1cm,yshift=-0.35cm] {} (s2) ;

       \draw (s1) edge [swap] node {$(\lambda\cdot t-1, -\lambda \cdot t)$} node [below] {} (s2) ;
       
         \draw (s1) edge  node {$(0,0)$} node [below] {} (s3) ;

        \draw (s4) edge [bend left] node [near end]{$(b,-b)$} node [left,xshift=-0.1cm,yshift=-0.35cm] {} (s5) ;

 \draw (s5) edge [ bend left] node [near end]{$(a,-a)$} node [left,xshift=-0.1cm,yshift=-0.35cm] {} (s4) ;
 
 \draw (s3) edge [bend left] node [near end]{$(b,-b)$} node [left,xshift=-0.1cm,yshift=-0.35cm] {} (s5) ;
 
 \draw (s3) edge  node [swap]{$(a,-a)$} node [left,xshift=-0.1cm,yshift=-0.35cm] {} (s4) ; 
 
  \draw (s4) edge [loop right] node {$(a,-a)$} node [left,xshift=-0.1cm,yshift=-0.35cm] {} (s4) ;
  \draw (s5) edge [loop right] node {$(b,-b)$} node [left,xshift=-0.1cm,yshift=-0.35cm] {} (s5) ;

   %     \draw (s3) edge [loop above] node {(0,2)} node [left,xshift=-0.1cm,yshift=-0.35cm] {$c$} (s3) ;

  %      \draw (s2) edge [bend left] node [yshift=-0.07cm] {(0,2)} node [below] {$c$} (s3) ;

    %    \draw (s3) edge [bend left] node [above,yshift=-0.07cm] {(0,1)} node [below] {$d$} (s2) ;
       
\end{tikzpicture}

    \caption{The instance of ${\sf TDS}$ $I=(a,b,\lambda,t)$ admits a solution iff $\CSV(v)\geq \lambda\cdot t$.}
    \label{figTDSreduction}
\end{figure}

\begin{lemma}
\label{lem:target-csv}
The  target discounted-sum problem reduces to the  problem of deciding if $\CSV(v)\geq c$ in discounted-sum games.
\end{lemma}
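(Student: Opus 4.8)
The plan is to exhibit, for each instance $I=(a,b,t,\lambda)$ of the target discounted-sum problem (Definition~\ref{tdsp}), the discounted-sum game $\game^I$ of Fig.~\ref{figTDSreduction} together with the threshold $c=\lambda\cdot t$, and to prove the equivalence announced in the caption: $I$ has a solution iff $\CSV(v)\geq \lambda\cdot t$. Since $I\mapsto(\game^I,v,\lambda t)$ is plainly computable, this is a genuine reduction. The gadget is built so that Player~$1$ owns only the initial vertex $v$, where he chooses between the \emph{escape} edge $v\to z$ and the \emph{generating} edge $v\to s$, while Player~$0$ owns $z,s,a,b$ and hence controls everything afterwards; best responses exist trivially here (Player~$1$ has a single binary choice), so $\CSV$ is well defined via $\BR_1$. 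In the $\{s,a,b\}$ region every edge carries a weight of the form $(x,-x)$ with $x\in\{a,b\}$, and starting from $s$ Player~$0$ can realize \emph{any} word $w\in\{a,b\}^\omega$ as the sequence of first-coordinate weights; this region is therefore essentially zero-sum and encodes exactly the sums appearing in the TDS problem.

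First I would compute the two relevant outcomes from $v$. Taking $v\to z$ and looping in $z$ yields the payoff pair $(\lambda t-1,\,-\lambda t)$. Taking $v\to s$ and generating a word $w$ yields $\DS_0=\lambda\cdot\mathrm{ds}(w)$ and $\DS_1=-\lambda\cdot\mathrm{ds}(w)$, where $\mathrm{ds}(w)=\sum_{i\geq 0}\lambda^i w(i)$ and the extra factor $\lambda$ comes from the delaying edge $v\to s$. Writing $D=\{\mathrm{ds}(w)\mid w\in\{a,b\}^\omega\}$, the achievable values of $\DS_0$ on the $s$-branch are exactly $\lambda D$. Since $\DS_1=-\DS_0$ on this branch, Player~$1$ weakly prefers $s$ over $z$ iff $\DS_0\leq\lambda t$ and strictly prefers $z$ iff $\DS_0>\lambda t$. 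Because $\CSV$ lets Player~$0$ pick, among Player~$1$'s best responses, the one best for her, a short case analysis gives $\CSV(\sigma_0)(v)=\DS_0$ whenever the word generated by $\sigma_0$ satisfies $\DS_0\leq\lambda t$ (including the tie at $\DS_0=\lambda t$, where both $z$ and $s$ are best responses and Player~$0$ selects $s$), and $\CSV(\sigma_0)(v)=\lambda t-1$ otherwise. Taking the supremum over $\sigma_0$ yields $\CSV(v)=\max\big(\sup\{x\in\lambda D\mid x\leq\lambda t\},\ \lambda t-1\big)$. As $\lambda t-1<\lambda t$ and every element of the first set is $\leq\lambda t$, this already shows $\CSV(v)\leq\lambda t$, so $\CSV(v)\geq\lambda t$ is equivalent to $\CSV(v)=\lambda t$.

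The crux, and the step I expect to be the main obstacle, is to show that this equality holds \emph{exactly} when $t\in D$, with no slack coming from the value being merely a supremum. This is where compactness enters: $D$ is the continuous image of the compact space $\{a,b\}^\omega$ under $w\mapsto\mathrm{ds}(w)$, hence $D$ (and $\lambda D$) is compact, so $\{x\in\lambda D\mid x\leq\lambda t\}$, being the intersection of a compact set with a closed half-line, attains its supremum whenever it is nonempty. Consequently $\sup\{x\in\lambda D\mid x\leq\lambda t\}=\lambda t$ iff $\lambda t\in\lambda D$ iff $t\in D$, i.e.\ iff there is a word $w\in\{a,b\}^\omega$ with $\mathrm{ds}(w)=t$. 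Combining this with the inequality above gives $\CSV(v)\geq\lambda t$ iff $t\in D$, which is precisely the assertion that $I$ admits a solution. I would emphasize that compactness is indispensable here: without it, the value could approach $\lambda t$ along the $s$-branch without ever reaching it, which would break the equivalence with the exact-target condition of the TDS problem.
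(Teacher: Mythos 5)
Your proof is correct, and it uses the same reduction as the paper: the same gadget $\game^I$ of Fig.~\ref{figTDSreduction} with threshold $c=\lambda\cdot t$, the same payoff computations on the two branches of Player~1's single choice at $v$, and essentially the same forward direction (a solution word $w$ makes $\tau_s$ a best response with Player~0 payoff exactly $\lambda t$). Where you genuinely diverge is in the no-solution direction. The paper argues per strategy: for each $\sigma$ it case-splits on whether $\tau_z$ is a best response and concludes $\CSV(\sigma)(v)<\lambda t$, and then passes to the supremum over $\sigma$. Strictly speaking, a pointwise strict bound does not by itself give $\CSV(v)<\lambda t$: the values $\CSV(\sigma)(v)$ could a priori accumulate at $\lambda t$ from below, in which case $\CSV(v)\geq\lambda t$ would hold even though the TDS instance has no solution, breaking the equivalence. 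Your closed-form computation $\CSV(v)=\max\bigl(\sup\{x\in\lambda D\mid x\leq\lambda t\},\ \lambda t-1\bigr)$, combined with compactness of $D$ (the continuous image of the compact space $\{a,b\}^\omega$ under $w\mapsto\sum_i \lambda^i w(i)$), closes exactly this gap: $D\cap(-\infty,t]$ attains its maximum, so if $t\notin D$ the supremum stays \emph{strictly} below $\lambda t$, and hence $\CSV(v)\geq\lambda t$ holds precisely when $t\in D$. So your route buys an airtight equivalence---the compactness step you flag as indispensable is indeed the ingredient the paper's per-strategy case analysis silently relies on---while the paper's version is shorter and foregrounds the best-response mechanics; your tie-handling at $\DS_0=\lambda t$ and the observation that $\CSV(v)\geq\lambda t$ forces $\CSV(v)=\lambda t$ also match what the paper's argument implicitly uses.
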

\begin{proof}
Let $I=(a,b,t,\lambda)$ be an instance of the target discounted sum problem and consider the game $\mathcal{G}^I$ depicted in Figure \ref{figTDSreduction}. We prove that $I$ admits a solution iff $\CSV(v)\geq \lambda\cdot t$.

Suppose that $I$ admits a solution and let $w\in\{a,b\}^\omega$ such that $\sum_{i=0}^\infty w(i)\lambda^i=t$. Consider the following strategy $\sigma$ for Player $0$: for all $\alpha\in \{a,b\}^*$, $\sigma(vs\alpha)=x$ if $w(|\alpha|)=x$, where $x\in \{a,b\}$. We prove that if $\tau$ is a best response to $\sigma$, then ${\DS_0(\out(\sigma,\tau))}=\lambda\cdot t$. In fact, Player $1$ has two choices from $v$. Let us denote $\tau_s$ (resp. $\tau_z$) the strategy that prescribes to Player $1$ to proceed to vertex $s$ (resp. $z$) out from $v$. We have that ${\DS_1(\out(\sigma,\tau_s))}={\DS_1(\out(\sigma,\tau_z))}=-\lambda\cdot t$, by definition of $\sigma$ and $\mathcal{G}^I$. Hence, $\tau_s$ is a best response to $\sigma$ which guarantees to Player $0$ a payoff ${\DS_0(\out(\sigma,\tau_s))}=\lambda\cdot t$.

In the other direction, suppose that $I$ does not admit any solution, i.e. there does not exist an infinite sequence $w\in\{a,b\}^\omega$ such that $\sum_{i=0}^\infty w(i)\lambda^i=t$. We prove that for any strategy $\sigma$ for Player $0$, if $\tau$ is a best response of Player $1$ to $\sigma$ then 
${\DS_0(\out(\sigma,\tau))}<\lambda\cdot t$. Let $\sigma$ be an arbitrary strategy for Player $0$ and consider the strategy $\tau_z$ for Player $1$.

We have two cases to consider depending on wether $\tau_z$ is a best response to $\sigma$ or not.
In the first case, we have that ${\DS(\out(\sigma,\tau_z))}=(\lambda\cdot t-1,-\lambda\cdot t)$ and, since $\tau_z$ is a best response to $\sigma$, we need to have ${\DS_1(\out(\sigma,\tau_s))}\leq -\lambda\cdot t$. We can not have that ${\DS_1(\out(\sigma,\tau_s))}= -\lambda\cdot t$, since this would imply ${\DS_0(\out(\sigma,\tau_i))}= {-\DS_1(\out(\sigma,\tau_s))}= \lambda\cdot t$ contradicting our hypothesis that $I$ does not admit any solution. Therefore, ${\DS_1(\out(\sigma,\tau_s))} < -\lambda\cdot t$, meaning that $\tau_s$ is not a best response to $\sigma$ and ${\CSV(v)}=\lambda\cdot t-1<\lambda\cdot t$. 

In the second case, where $\sigma_z$ is not a best response to $\sigma$, we have that ${\DS_1(\out(\sigma,\tau_s))} > -\lambda\cdot t$ which implies that ${\CSV(v)}={\DS_0(\out(\sigma,\tau_s))} = {-\DS_1(\out(\sigma,\tau_s))} < \lambda\cdot t$.  
\end{proof}

The construction used to link the cooperative Stackelberg value to the target discounted sum problem can be properly modified\footnote{Consider the game $\mathcal{G}^I$ depicted in Figure \ref{figTDSreduction}  for $a=0,b=1,\lambda=\frac{2}{3}, t=\frac{3}{2}$. By Proposition $1$ in \cite{ChattarjeeLPAR2013}, Player $0$ can achieve $\frac{3}{2}$ from $s$---and therefore \CSV(v)=1---only with infinite memory.} to prove that infinite memory may be necessary to allow Player~0 to achieve her $\CSV$, recovering a result originally proved in~\cite{DBLP:journals/corr/GuptaS14c}. In the same paper, the authors show that in $3$-player discounted sum games the cooperative Stackelberg value cannot be approximated by considering strategies with bounded memory only. In the next section, we show that this is not the case  for $2$-player discounted sum games.

\subparagraph{{\bf Gap problems and their algorithmic solutions}}
%As according to lemma \ref{lem:target-csv} the target discounted sum problem reduces to the Stackelberg value problem, 
We consider a gap approximation of the Stackelberg value problem in
both the cooperative and the adversarial settings. Given $\epsilon>0$
and $c\in \mathbb{Q}$, and ${\sf VAL}\in\{\CSV,\ASV\}$, let us define the sets of games:
\begin{itemize}
    \item ${\sf Yes}^{\epsilon,c}_{\sf VAL} =\{(\mathcal{G},v)\:|\:
      \mathcal{G} \mbox{ is a game with } {\sf VAL}(v) >c+\epsilon\}$
    \item ${\sf No}^{\epsilon,c}_{\sf VAL} =\{(\mathcal{G},v)\:|\:
      \mathcal{G} \mbox{ is a game with } {\sf VAL}(v) <c-\epsilon\}$
%     \item ${\sf Yes}^{\epsilon,c}_{\ASV} =\{(\mathcal{G},v)\:|\: \mathcal{G} \mbox{ is a game with } \ASV(v) >c+\epsilon\}$
%    \item ${\sf No}^{\epsilon,c}_\ASV =\{(\mathcal{G},v)\:|\: \mathcal{G}(v) \mbox{ is a game with } \ASV(v) <c-\epsilon\}$
\end{itemize}
The \CSV-gap (resp. \ASV-gap)  problem with gap $\epsilon>0$ and threshold $c\in \mathbb{Q}$ consists in determining if  a given game $\mathcal{G}$ belongs to ${\sf Yes}^{\epsilon,c}_\CSV$ or ${\sf No}^{\epsilon,c}_\CSV$ (resp. ${\sf Yes}^{\epsilon,c}_\ASV$ or ${\sf No}^{\epsilon,c}_\ASV$). More precisely, solving the Stackelberg value gap  problem in e.g. the cooperative setting  amounts to answer {\sf Yes} if the instance of the game belongs to ${\sf Yes}^{\epsilon,c}_\CSV$, answer {\sf No} if the instance belongs to ${\sf No}^{\epsilon,c}_\CSV$, never answer or answer arbitrarily otherwise.

Theorem \ref{thm:ASV-CSV-gap} below uses the results in Lemma \ref{lemmaApprox} to provide an algorithm that solves the Stackelberg value gap problem in the cooperative and adversarial settings,  for games with discounted sum objectives. In particular, Lemma \ref{lemmaApprox}, shows   that finite memory strategies are sufficient to witness  Stackelberg values strictly greater than a threshold $c\in \mathbb{Q}$.

\begin{lemma}\label{lemmaApprox}
Let $\mathcal{G}$ be a discounted-sum game and  consider
$c\in\mathbb{Q}$ and $\epsilon>0$. If Player $0$ has a strategy
$\sigma_0$ such that $\CSV(\sigma_0)(v)>c+\epsilon$
(resp. $\ASV(\sigma_0)(v)>c+\epsilon$), then Player $0$ has a strategy
$\sigma_0^*$ with finite memory $M(\epsilon)$ such that
$\CSV(\sigma_0^*)(v) > c$ (resp. $\ASV(\sigma_0^*)(v) > c$). Moreover,
$M(\epsilon)$ is computable given $\epsilon$. 
\end{lemma}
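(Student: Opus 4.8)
The plan is to exploit the geometric decay of the discounted sum: beyond horizon $N$ every weight is scaled by at most $\lambda^N$, so the tail of either player's payoff is bounded in absolute value by $\lambda^N\frac{W}{1-\lambda}$. I would therefore fix $N=N(\epsilon)$ with $\lambda^N\frac{2W}{1-\lambda}<\tfrac{\epsilon}{2}$, which is computable from $\epsilon,\lambda,W$, and build $\sigma_0^*$ that reproduces $\sigma_0$ on the (finitely many) histories of length $<N$ and then switches to a memoryless/lasso continuation. The memory needed to replay $\sigma_0$ up to depth $N$ is bounded by the number of such histories, $\sum_{k<N}|V|^k$, so $M(\epsilon)=O(|V|^{N})$ is finite and explicitly computable, while the continuation adds only a bounded amount of memory. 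The two observations I keep using are: (a) if two Player-$0$ strategies agree on all histories of length $<N$, then against any fixed $\sigma_1$ the two outcomes share their first $N$ edges, hence their $\DS_0$ (and $\DS_1$) differ by at most $\lambda^N\frac{2W}{1-\lambda}$; and (b) for a finite-memory $\sigma_0^*$ the best-response problem of Player~$1$ is a one-player discounted optimisation on the finite product graph, so $\BR_1(\sigma_0^*)\neq\varnothing$ and its outcomes are eventually periodic.

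For the \textbf{cooperative case} I would argue directly. Since $\CSV(\sigma_0)(v)>c+\epsilon$, pick $\sigma_1\in\BR_1(\sigma_0)$ with $\pi=\outcome_v(\sigma_0,\sigma_1)$ satisfying $\DS_0(\pi)>c+\epsilon$ and $\DS_1(\pi)=\beta:=\sup_{\sigma_1'}\DS_1(\outcome_v(\sigma_0,\sigma_1'))$. Let $u=\pi_N$ and let $\rho$ be a play from $u$ maximising $\DS_1$ (it exists and is a lasso, by the discounted Bellman equation). Put $\pi^*=\pi_{\le N}\cdot\rho$; by (a) we get $\DS_0(\pi^*)>c+\epsilon-\tfrac{\epsilon}{2}>c$, while $\DS_1(\pi^*)\ge\DS_1(\pi)=\beta$ because the cooperative tail $\rho$ is at least as good for Player~$1$ as the tail of $\pi$. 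I then define $\sigma_0^*$ to follow the lasso $\pi^*$ and, as soon as Player~$1$ deviates, to switch to a memoryless strategy minimising $\DS_1$ (which exists by determinacy of two-player zero-sum discounted games). Following $\pi^*$ is an exact best response: any deviation of Player~$1$ at a history $h\le\pi^*$ of length $<N$ yields him at most what he could secure against $\sigma_0$ from the same history (minimisation can only hurt him), hence at most $\beta\le\DS_1(\pi^*)$; deviations inside the cooperative tail are non-improving by Bellman-optimality of $\rho$ combined with the punishment. Thus $\CSV(\sigma_0^*)(v)\ge\DS_0(\pi^*)>c$.

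The \textbf{adversarial case} uses the same truncation, now with a punishing memoryless continuation after step $N$, but it is where the real difficulty lies. Here I must control \emph{every} $\sigma_1^*\in\BR_1(\sigma_0^*)$. Writing $\pi^*=\outcome_v(\sigma_0^*,\sigma_1^*)$, bound (a) gives $\DS_0(\pi^*)\ge \DS_0^{<N}(\pi^*)-\tfrac{\epsilon}{2}$, where $\DS_0^{<N}$ denotes the discounted weight of the first $N$ edges, so it suffices to show this depth-$N$ prefix is worth more than $c+\tfrac{\epsilon}{2}$. The obstacle is that the prefix of $\pi^*$ need not be a prefix of an exact best response of $\sigma_0$: truncation perturbs Player~$1$'s continuation values by $O(\lambda^N)$, so $\sigma_1^*$ is only an $O(\lambda^N)$-best response to $\sigma_0$, and the adversarial value for discounted sum — being defined through \emph{exact} best responses — is not continuous, so I cannot simply transfer the strict bound $>c+\epsilon$. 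The hard part will be absorbing this perturbation. My intended resolution is to classify the worst best response by the last moment $K$ at which it still coincides with an exact best response of $\sigma_0$: when $K\ge N$ the prefix is a genuine best-response prefix and the bound transfers up to $\tfrac{\epsilon}{2}$; when $K<N$ the punishing continuation — available precisely because the game is two-player zero-sum, which is the feature that fails in the three-player setting mentioned above — forces a strict loss in Player~$1$'s value, and the strict slack $>c+\epsilon$ together with the geometric decay of the Player~$0$ tail keeps $\DS_0(\pi^*)>c$. Choosing $N$ as above makes all the error terms fit inside the budget $\epsilon$.

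Finally I would record that $M(\epsilon)=O(|V|^{N(\epsilon)})$ with $N(\epsilon)=\lceil \log_\lambda\!\big(\tfrac{\epsilon(1-\lambda)}{4W}\big)\rceil$ is computable from $\mathcal{G}$ and $\epsilon$, which yields the last assertion of the statement.
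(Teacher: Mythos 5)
Your cooperative half is essentially the paper's proof (truncate at a computable horizon $N$ with $\lambda^N\frac{W}{1-\lambda}<\frac{\epsilon}{2}$, append a $\DS_1$-maximizing tail, punish deviations with a memoryless $\DS_1$-minimizing strategy), and it is correct; the memory bound is also the right one. The genuine gap is in the adversarial half. Your construction punishes after \emph{every} depth-$N$ history, and the ``classification by the last coincidence time $K$'' is an analysis device, not a repair of the construction: under uniform punishment, Player~1's ranking of depth-$N$ prefixes is governed by $\DS_1(\text{prefix}) + \lambda^N\cdot(\text{worst-case continuation value})$, which can be maximized at a prefix that is \emph{not} a best-response prefix of $\sigma_0$ and along which $\DS_0$ is small. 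The strict slack $>c+\epsilon$ does not save you, because punishment also lowers Player~1's value on the good prefixes --- possibly by more than on the bad ones --- so nothing forces his best response to $\sigma_0^*$ through a good prefix, and your key step ``the $K\ge N$ case transfers the bound'' never applies to the worst best response. Concretely, in games where Player~1's entire incentive to follow the play favorable to Player~0 sits in the tail beyond any fixed horizon (exactly the situation of the game in Figure~\ref{figTDSreduction}, which is why infinite memory can be needed to achieve the value exactly), truncation plus uniform punishment destroys that incentive, Player~1's best response migrates to a branch that is bad for Player~0, and $\ASV(\sigma_0^*)$ collapses below $c$.

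The missing idea, which is how the paper proceeds, is a green/red dichotomy \emph{with cooperation on the good side}: color a depth-$N$ history green if it is the prefix of the outcome of some \emph{exact} best response to $\sigma_0$, red otherwise; after green histories Player~0 switches to the memoryless strategy \emph{maximizing} $\DS_1$, and punishes (minimizes $\DS_1$) only after red ones. Then every green continuation gives Player~1 at least $d=\sup_{\tau} \DS_1(\outcome_v(\sigma_0,\tau))$, since the cooperative tail can only improve on the original best-response tail, while every red continuation gives strictly less than $d$: the depth-$N$ tree is finite and best responses always exist (Lemma~\ref{lem:exists-ds-br}), so a red prefix through which Player~1 could attain $d$ against $\sigma_0$ would in fact be green. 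Hence every best response to $\sigma_0^*$ passes through a green node, and on those plays the prefix bound $\DS_0(\pi_{\le N}) > c+\frac{\epsilon}{2}$ transfers to $\DS_0(\pi)>c$. Your two error-budget observations and the bound $M(\epsilon)=O(|V|^{N(\epsilon)})$ then go through verbatim, but without the asymmetric (cooperate-on-green, punish-on-red) continuation the adversarial statement is not proved.
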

\begin{proof} 
Let $\sigma^{{\sf DS}_1}_{min}\in \Sigma_0$  be a  memoryless strategy for Player $0$ minimizing $sup_{\tau\in\Sigma_1} {\sf DS}_1 (\out(\sigma,\tau))$.   Let $\sigma^{{\sf DS}_1}_{max}\in \Sigma_0$  be a  memoryless strategy for Player $0$ that maximizes $sup_{\tau\in\Sigma_1} {\sf DS}_1 (\out(\sigma,\tau))$. Such  memoryless strategies exist since $2$-player (single-valued) discounted-sum games are memoryless determined. In particular,  $\sigma^{{\sf DS}_1}_{min}\in \Sigma_0$  can be obtained by using standard algorithms for two players (single-valued) discounted-sum games. In turn, $\sigma^{{\sf DS}_1}_{max}\in \Sigma_0$  can be computed by solving a single player (single valued) discounted-sum game, in which all the nodes are controlled by the maximizer who aims at maximizing ${\sf DS}_1$.

\emph{Cooperative Setting}: Let  $\sigma^*\in\Sigma_0(\mathcal{G})$ be a strategy for Player $0$ such that $\sf{DS}_0( {\out}(\sigma^*,\tau))> c+\epsilon$ for some strategy  $\tau\in BR_1(\sigma^*)$.  Denote by $\pi^*$ the play $\pi^*=\out(\sigma^*,\tau)$ and let $N$ such that $\lambda^N\dfrac{W}{1-\lambda}<\dfrac{\epsilon}{2}$. Given the above premises, consider the finite memory strategy $\sigma'\in\Sigma_0$ for Player $0$ that 
follows $\sigma^*$ for the first $N$ steps and then either apply the memoryless strategy $\sigma^{{\sf DS}_1}_{min}\in \Sigma_0$ or the memoryless strategy $\sigma^{{\sf DS}_1}_{max}\in \Sigma_0$, depending on the history $h$ followed up to $N$. In particular, if $h=\pi^*_{\leq N}$, then the strategy $\sigma'$ prescribes to Player $0$ to follow $\sigma^{{\sf DS}_1}_{max}\in \Sigma_0$, cooperating with Player $1$ at maximizing ${\sf DS}_1$. Otherwise ($h\neq\pi^*_{\leq N}$), the strategy $\sigma'$ prescribes to Player $0$ to follow $\sigma^{{\sf DS}_1}_{min}\in \Sigma_0$, minimizing the payoff of the adversary. We show that a best response $\tau'$ for Player $1$ to $\sigma'$ consists in following $\pi^*$ up to $N$ and then applying the memoryless strategy $\tau^{{\sf DS}_1}_{max}\in \Sigma_1$, i.e.  maximizing $sup_{\sigma\in\Sigma_0} {\sf DS}_1 (\out(\sigma,\tau))$. Infact, by definition of $\sigma'$ and $\tau'$ we have that:
\begin{itemize}
    \item ${\sf \DS}_1(\out(\sigma',\tau'))\geq {\sf \DS}_1(\pi^*)$
     \item for any other strategy $\tau''\neq \tau'$ for Player $1$: 
    \begin{itemize}
    \item if ${\out}(\sigma',\tau'')_{\leq N}=x\neq \pi^*_{\leq N}$, then:
    $${\sf DS}_1({\out}(\sigma',\tau''))={\sf DS}_1(x) + \lambda^N{\sf DS}_1({\out}_x(\sigma^{{\sf DS}_1}_{min},\tau''))\leq$$
     $$\leq {\sf DS}_1(x) + \lambda^N(sup_{\tau\in \Sigma_1}({\sf DS}_1({\out}_x(\sigma^{{\sf DS}_1}_{min},\tau)))\leq$$
      $$\leq {\sf DS}_1(x) + \lambda^N(sup_{\tau\in \Sigma_1}({\sf DS}_1({\out}_x(\sigma^*,\tau)))= {\sf DS}_1(\pi^*)\leq {\sf \DS}_1({\out}(\sigma',\tau'))$$
      since ${\sf DS}_1(\pi^*)$ is the payoff (for player $1$) of a best response of Player $1$ to $\sigma^*$.
    \item  if ${\out}(\sigma',\tau'')_{\leq N}=x= \pi^*_{\leq N}$, then:
    $${\sf DS}_1({\out}(\sigma',\tau''))\leq {\sf DS}_1(x)+ \lambda^N\cdot sup\{{\sf DS}_1(\pi)\:|\:\pi\in {\sf Plays}(\mathcal{G}) \wedge \pi \mbox{ starts at } last(x)\}=$$
    $$= {\sf DS}_1(x)+ \lambda^N\cdot {\sf DS}_1({\out}_x(\sigma^{{\sf DS}_1}_{max},\tau^{{\sf DS}_1}_{max}))={\sf DS}_1({\out}(\sigma',\tau'))$$
    \end{itemize}

\end{itemize}

Finally, we show that the best response $\pi'$ of Player $1$ to $\sigma'$ guarantees to Player $0$ a payoff greater than $c$. Infact, 
  ${\sf{DS}}_0({\out}(\sigma',\tau'))>{\sf{DS}}_0(\pi^*_{\leq N})-\dfrac{\epsilon}{2}>c+\dfrac{\epsilon}{2}-\dfrac{\epsilon}{2}=c$, since ${\sf{DS}}_0(\pi^*_{\leq N})>c+\dfrac{\epsilon}{2}$. Due to the choice of $N$, having ${\sf{DS}}_0(\pi^*_{\leq N})\leq c+\dfrac{\epsilon}{2}$ would lead infact to the following contradiction: ${\sf{DS}}_0(\pi^*)\leq {\sf{DS}}_0(\pi^*_{\leq N}) + \lambda^N\dfrac{W}{1-\lambda}<{\sf{DS}}_0(\pi^*_{\leq N})+\dfrac{\epsilon}{2} \leq c+\dfrac{\epsilon}{2} +\dfrac{\epsilon}{2} = c+\epsilon$, i.e. ${\sf{DS}}_0(\pi^*)\leq c+\epsilon$.

\emph{Adversarial Setting}: Let  $\sigma\in\Sigma_0$ be a strategy for Player $0$ such that for all $\tau\in BR_1(\sigma)$ it holds $\sf{DS}_0(\out(\sigma,\tau))> c+\epsilon$. Let $N$ such that $\lambda^N\dfrac{W}{1-\lambda}<\dfrac{\epsilon}{2}$ and consider the unfolding $T$ of $\out(\sigma)$ up to $N$. For each maximal root-to-leaf branch $b$ of $T$, color its leaf $last(b)$  green if $b$ is the prefix $\pi_{\leq N}$ of some play $\pi=\out(\sigma, \tau)$ such that $\tau\in BR_1(\sigma)$. Otherwise, let the leaf $last(b)$ of $b$ be colored by red. 
We show that the finite memory strategy $\sigma^* \in\Sigma_0$ that prescribes to Player $0$ to follow $\sigma$ up to $N$ and then:

\begin{itemize}
\item from each green node apply the  memoryless strategy $\sigma^{{\sf DS}_1}_{max}\in \Sigma_0$ (i.e. cooperate with Player $1$ to maximize the payoff $\sf{DS}_1$)
\item from each red node apply the memoryless strategy $\sigma^{{\sf DS}_1}_{min}\in \Sigma_0$ (i.e. minimize  the payoff $\sf{DS}_1$ of the adversary )
\end{itemize}
is such that $ASV(\sigma^*)>c$.
Let $d=sup\{{\sf{DS}}_1(\out(\sigma,\tau))\:|\:
\tau\in \Sigma_1(\mathcal{G})\}$ and consider $\pi\in \out(\sigma^*)$. 

First, we show that if $\pi$ contains a green node then 
$\sf{DS}_0(\pi_{\leq N})>c$. In fact, ${\sf{DS}}_0(\pi_{\leq N})>{\sf{DS}}_0(\pi_{\leq N}) - \lambda^N\dfrac{W}{1-\lambda}> c+\dfrac{\epsilon}{2}-\dfrac{\epsilon}{2}=c$, since 
$\lambda^N\dfrac{W}{1-\lambda}<\dfrac{\epsilon}{2}$ by definition of $N$ and since ${\sf{DS}}_0(\pi_{\leq N})>c+\dfrac{\epsilon}{2}$ being $last(\pi_{\leq N})$  a green node (witnessing that $\pi_{\leq N}$ is the prefix of a play $\pi'$ compatible with a best response of Player $1$ to $\sigma^*$, for which $\sf{DS}_0(\pi')>c+\epsilon$).

Moreover, there is a play $\pi\in \out(\sigma^*)$ containing a green
node for which ${\sf{DS}}_1(\pi)\geq d$. This is because of two
reasons. First, a play in $\out(\sigma)$  compatible with a best
response to $\sigma$ by Player $1$ is of the form $hv\pi'$, where $hv$
is a maximal root-to-leaf branch  $b$ of $T$ with $last(b)=v$ green
(by definition of green nodes). Second, for each hystory $hv$ such
that $hv$ is a maximal root-to-leaf branch  $b$ of $T$ with
$last(b)=v$ green,  $\out(\sigma^*)$ contains a play $hv\bar{\pi}$,
where $\bar{\pi}$ is a play starting in $v$ maximizing ${\sf
  DS}_1$. Therefore  ${\sf DS}_1(hv\bar{\pi})= {\sf DS}_1(hv)+\lambda^N {\sf DS}_1(\bar{\pi}) \geq {\sf DS}_1(hv)+\lambda^N {\sf DS}_1(\pi'))=d$, where $hv\pi'$ is a play compatible with a  best response of Player $1$ to $\sigma$. To conclude our proof, we need just to show that each play $\pi\in \out(\sigma^*)$ containing a red node is such that ${\sf{DS}}_1(\pi)<d$. Infact, being $last(\pi_{\leq N})$  red, the history $\pi_{\leq N}$ can not be a prefix of any play in $\out(\sigma)$ compatible with a best response of Player $1$ to $\sigma$. In other words, by playing $\sigma$ Player $0$ allows the adversary to gain a payoff that is at most $r<d$  on each play  $\pi=hv\pi'$ with $v$ red. Therefore, switching her strategy from $\sigma$ to $\sigma^*$ (i.e. playing $\sigma$ for the first $N$ turns and then switching to the memoryless strategy  $\sigma^{{\sf DS}_1}_{min}\in \Sigma_0$) Player $0$ is sure to  let Player $1$   gain a payoff that is at most $r'\leq r<d$ on each play  $\pi=hv\pi'$ with $v$ red.

As a conclusion, against $\sigma^*$ Player $1$ can achieve at least a value $d$. Hence, each best response to $\sigma^*$ visits a green node (if it does not, then ${\sf DS}_1<d$ which is a contradiction). This guarantees that ${\sf DS}_0>c$.

 %\emph{Cooperative Setting} The proof for the cooperative %setting proceeds similarly, with a different coloring of %green and red nodes (cfr. the full version for a full proof).

%[Proof sketch  - full proof in appendix]
%  Let  $\sigma\in\Sigma_0$ be a strategy for Player $0$ such %that for all $\tau\in \BR_1(\sigma)$ it holds %$\sf{DS}_0(\out(\sigma,\tau))> c+\epsilon$. Let $N$ such that %$\lambda^N\dfrac{W}{1-\lambda}<\dfrac{\epsilon}{2}$ and %consider the unfolding $T$ of $\out(\sigma)$ up to $N$. For %each maximal root-to-leaf branch $b$ of $T$, color its leaf %$last(b)$  green if $b$ is the prefix $\pi_{\leq N}$ of some %play $\pi=\out(\sigma, \tau)$ such that $\tau\in %\BR_1(\sigma)$. Otherwise, let the leaf $last(b)$ of $b$ be %colored by red. It is possible to show  that  %$\ASV(\sigma^*)>c$ for  the finite memory strategy $\sigma^* %\in\Sigma_0$ that prescribes to Player $0$ to follow $\sigma$ %up to $N$ and then minimizes $\sf{DS}_1$ (resp. 
%  cooperates with Player $1$  to maximize the payoff %$\sf{DS}_1$) from each red (resp. green) node.
%  The proof for the cooperative setting proceeds similarly, %with a different coloring of green and red nodes.
  \end{proof}
  
The  approximation algorithm for solving the Stackelberg values gap problems introduced in Theorem \ref{thm:ASV-CSV-gap} roughly works as follows.  Given a discounted sum game $\mathcal{G}$, a rational threshold $c\in \mathbb{Q}$ and a tolerance rational value $\epsilon>0$, the procedure checks whether there exists a strategy $\sigma_0\in \Sigma_0(\mathcal{G})$ with finite memory $M(\epsilon)$ such that ${\ASV}(\sigma_0)>c$ (resp. ${\CSV}(\sigma_0)>c$ ). If such a strategy exists, the procedure answers {\sf Yes}, otherwise it answers  {\sf No}. 
The correctness of the  outlined procedure follows directly from Lemma \ref{lemmaApprox}.
%By Lemma \ref{lemmaApprox}, such answers are consistent under the promise that  for each strategy $\sigma_0 \in \Sigma_0(\mathcal{G})$ for Player $0$, either $\ASV(\sigma_0)>c+\epsilon$ or  $\ASV(\sigma_0)<c-\epsilon$.  

\begin{theorem}
\label{thm:ASV-CSV-gap}
The gap problems for both the $\CSV$ and $\ASV$ are   decidable for games with discounted-sum objectives.
\end{theorem}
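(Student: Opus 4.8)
The plan is to turn Lemma~\ref{lemmaApprox} into a terminating decision procedure by brute-force enumeration of bounded-memory strategies of Player~0. Fix a discounted-sum game $\game$, a vertex $v$, a threshold $c \in \mathbb{Q}$ and a gap $\epsilon > 0$, and let ${\sf VAL} \in \{\CSV, \ASV\}$. First I would compute the memory bound $M(\epsilon)$ supplied by Lemma~\ref{lemmaApprox}, which is effective by that lemma. Since the arena is finite, there are only finitely many strategies of Player~0 realizable by a finite-state machine with at most $M(\epsilon)$ states, and they can be enumerated. For each such strategy $\sigma_0$, the procedure tests whether ${\sf VAL}(\sigma_0)(v) > c$; it answers {\sf Yes} as soon as one strategy passes the test, and {\sf No} if none does.

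Correctness follows directly from the gap hypothesis together with Lemma~\ref{lemmaApprox}. If $(\game,v) \in {\sf Yes}^{\epsilon,c}_{\sf VAL}$, then ${\sf VAL}(v) = \sup_{\sigma_0} {\sf VAL}(\sigma_0)(v) > c + \epsilon$, so there is a (possibly infinite-memory) strategy $\sigma_0$ with ${\sf VAL}(\sigma_0)(v) > c + \epsilon$; Lemma~\ref{lemmaApprox} then yields a finite-memory strategy $\sigma_0^*$ of memory at most $M(\epsilon)$ with ${\sf VAL}(\sigma_0^*)(v) > c$, which the enumeration finds, so the answer is {\sf Yes}. Conversely, if $(\game,v) \in {\sf No}^{\epsilon,c}_{\sf VAL}$ then ${\sf VAL}(v) < c - \epsilon < c$, whence ${\sf VAL}(\sigma_0)(v) \le {\sf VAL}(v) < c$ for every strategy $\sigma_0$, in particular for every bounded-memory one; no strategy passes the test and the answer is {\sf No}. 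On instances lying in neither set the procedure may answer arbitrarily, which is precisely what the gap formulation permits.

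The remaining ingredient, and the technical heart of the argument, is to decide for a fixed finite-memory strategy $\sigma_0$ whether ${\sf VAL}(\sigma_0)(v) > c$. Here I would form the product of the arena with the finite-state transducer realizing $\sigma_0$; in this product all of Player~0's moves are resolved, leaving a finite graph whose plays are exactly the outcomes of $\sigma_0$ against the strategies of Player~1. On this graph I would first compute, from every vertex, the optimal value of ${\sf DS}_1$ that Player~1 can secure, using the standard memoryless-determined discounted-sum optimization over a finite graph; by Lemma~\ref{lem:exists-ds-br} this optimum is attained. The best responses $\BR_1(\sigma_0)$ are then precisely the Player~1 strategies that, from every vertex, stay within the subgraph of ${\sf DS}_1$-optimal edges, and over this best-response subgraph I would optimize ${\sf DS}_0$---taking the supremum for $\CSV$ and the infimum for $\ASV$---again a discounted-sum optimization on a finite graph, whose optimal value is a rational that can be compared with $c$. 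The main obstacle I anticipate is exactly this two-step constrained optimization: one must correctly characterize the everywhere-optimal responses of Player~1 as a finite subgraph, and then argue that optimizing ${\sf DS}_0$ over it is effective. Once this fixed-strategy test is in place, combining it with the enumeration over memory-$M(\epsilon)$ strategies and the approximation guarantee of Lemma~\ref{lemmaApprox} closes the proof.
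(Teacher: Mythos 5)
Your proposal is correct and takes essentially the same route as the paper's own proof: both reduce the gap problem, via Lemma~\ref{lemmaApprox}, to testing whether some Player~0 strategy of memory at most $M(\epsilon)$ satisfies ${\sf VAL}(\sigma_0)(v) > c$, and both decide this fixed-strategy test by taking the product of the arena with the strategy, computing Player~1's optimal $\DS_1$ values on the resulting one-player graph, pruning to the $\DS_1$-optimal edges so that the remaining plays are exactly the best responses, and then optimizing $\DS_0$ (infimum for $\ASV$, supremum for $\CSV$) over the pruned graph. The only cosmetic differences are that you phrase the search as explicit enumeration of bounded-memory strategies where the paper states it as an existence check, and the paper invokes a linear-programming formulation for the one-player discounted-sum optimizations where you appeal to standard memoryless determinacy.
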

We conclude this subsection by providing a reduction from the partition problem to our gap problems (for both {\CSV} and {\ASV}), showing {\sf NP}-hardness for the corresponding problems.

\begin{figure}[t]
    \centering
 % \centering
  %  \includegraphics[scale=0.2]{hardness-dsum.pdf}
\begin{tikzpicture}[->,>=stealth',shorten >=1pt,auto,node distance=2.8cm,
                    thick,scale=0.7,every node/.style={scale=0.8}]
  \tikzstyle{every state}=[fill=gray!30,text=black]
  \tikzstyle{every edge}=[draw=black]
  \tikzstyle{initial}=[initial by arrow]

	\node[thick,state,initial above,initial text=,rectangle,minimum size=25pt]
        (s1) at (0,0) {$v_0$};

	\node[thick,state,minimum size=25pt]
        (s2) at (-3.5,0) {$v_1$};

	\node[thick,state,minimum size=25pt]
        (s3) at (3,0) {$1$};
        
    \node[thick,state,minimum size=25pt]
        (s4) at (6,0) {$2$}; 
        
    \node[thick,state,minimum size=25pt]
        (s5) at (9,0) {$n$};
    
    \node[thick,state,minimum size=25pt]
        (s6) at (12,0) {$v_2$};  
        
        \node        (q_dots) at (7.5,0) {$\cdots$}; 
        
        \draw (s2) edge [loop above] node {(0,0)} node [left,xshift=-0.1cm,yshift=-0.35cm] {} (s2) ;

        \draw (s1) edge [swap] node {$0,T -\dfrac{2}{3}$} node [] {} (s2) ;
        
        \draw (s1) edge node {$(0,0)$} node [] {} (s3) ;
        
        \draw (s3) edge [bend left] node {$(w(1),0)$} (s4) ;
        
       % \path[-,color=white, dotted] (s4) edge   node {$\dots$} (s5);
        
         \draw (s5) edge [bend left] node {$(w(n),0)$} (s6) ;
         
         \draw (s3) edge [swap, bend right]  node {$(0,w(1))$}  (s4) ;
        
         \draw (s5) edge [swap, bend right]  node {$(0,w(n))$} (s6) ;

          \draw (s6) edge [loop above] node {(0,0)} node [left,xshift=-0.1cm,yshift=-0.35cm] {} (s6) ;

        %\draw (s2) edge [loop above] node {(0,1)} node [left,xshift=-0.1cm,yshift=-0.35cm] {$d$} (s2) ;

        %\draw (s3) edge [loop above] node {(0,2)} node [left,xshift=-0.1cm,yshift=-0.35cm] {$c$} (s3) ;

        %\draw (s2) edge [bend left] node [yshift=-0.07cm] {(0,2)} node [below] {$c$} (s3) ;

       % \draw (s3) edge [bend left] node [above,yshift=-0.07cm] {(0,1)} node [below] {$d$} (s2) ;
        % \draw (s2) edge [bend left,->] node {$\trans{\underline{\hspace{0.2cm}}}{grant_2}$} (s1) ;
\end{tikzpicture}

\vspace{-2mm}
     \caption{Arena for hardness proof of the gap problem.}
    \label{fig:hardness-dsum}
\vspace{-3mm}
\end{figure}
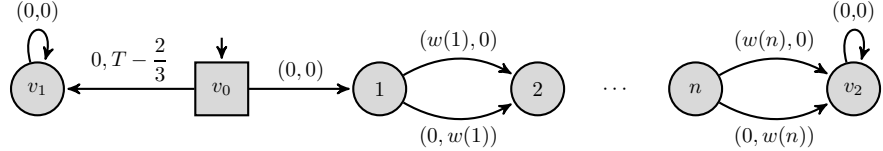

\begin{theorem}
\label{thm:hardness-gap-ds}
The gap problems for both the $\CSV$ and $\ASV$ are {\sf NP}-hard.
\end{theorem}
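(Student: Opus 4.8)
The plan is to reduce from the \textbf{partition problem}: given positive integers $w(1),\dots,w(n)$ with even total $2T=\sum_{i}w(i)$, decide whether some $S\subseteq\{1,\dots,n\}$ satisfies $\sum_{i\in S}w(i)=T$. The target is the game of Figure~\ref{fig:hardness-dsum}, in which only $v_0$ belongs to Player~1 and every other vertex belongs to Player~0. I would fix the discount factor $\lambda=\tfrac12$ and, to cancel the geometric discounting, replace the weight $w(i)$ carried by the two edges leaving vertex $i$ by the compensated weight $2^{i}w(i)$; since that edge occupies position $i$ in every play through the lower branch and $\lambda^{i}2^{i}=1$, item $i$ contributes exactly $w(i)$ to the discounted sum, the weights remain integral and of polynomial bit-size, and a global scaling by $3$ (so that $T-\tfrac23$ becomes integral) turns this into a legitimate integer-weighted discounted-sum game. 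The crucial structural feature is that on the lower branch each item's weight is routed \emph{entirely} to one of the two dimensions, so $\DS_0+\DS_1$ is the constant $2T$; if $S$ denotes the set of indices Player~0 routes to her own dimension, then $\DS_0=\sum_{i\in S}w(i)=:x_S$ and $\DS_1=2T-x_S$.

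The second step is a best-response analysis at the single Player~1 vertex $v_0$. Moving to $v_1$ yields Player~1 the value $T-\tfrac23$, whereas descending yields him $2T-x_S$. Because $x_S$ is an integer, $2T-x_S$ can never equal the non-integer $T-\tfrac23$, so Player~1 has a \emph{unique} best response to every strategy of Player~0: he descends iff $2T-x_S>T-\tfrac23$, i.e.\ iff $x_S\le T$, and otherwise goes to $v_1$. Hence the value Player~0 receives against her announced strategy is $x_S$ when $x_S\le T$ and $0$ when $x_S\ge T+1$, and since the best response is always forced there is no room either for cooperation or for adversarial tie-breaking, giving $\CSV(\sigma_0)(v_0)=\ASV(\sigma_0)(v_0)$ for every $\sigma_0$. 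Optimising over Player~0's routing then yields $\CSV(v_0)=\ASV(v_0)=\max\{x_S : x_S\le T\}$, the largest attainable subset sum not exceeding $T$.

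From here the conclusion is immediate. If the partition instance is positive, $x_S=T$ is attainable and the value equals $T$; if it is negative, $T$ is unreachable and, subset sums being integers, the value is at most $T-1$. Fixing the threshold $c=T-\tfrac12$ and any gap $\epsilon<\tfrac12$ (say $\epsilon=\tfrac14$), a positive instance has value $T>c+\epsilon$ and so lies in ${\sf Yes}^{\epsilon,c}_{\sf VAL}$, while a negative instance has value $\le T-1<c-\epsilon$ and so lies in ${\sf No}^{\epsilon,c}_{\sf VAL}$, simultaneously for ${\sf VAL}=\CSV$ and ${\sf VAL}=\ASV$. The construction is plainly polynomial, which establishes {\sf NP}-hardness of both gap problems.

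I expect the main obstacle to be the interaction of the discounting with the combinatorial structure: a naive copy of the partition weights would make the relevant quantities the \emph{discounted} subset sums $\sum_{i\in S}\lambda^{i}w(i)$ rather than genuine subset sums, so the weight-compensation trick—together with verifying that it preserves integrality and polynomial size—is what makes the reduction faithful to partition. A secondary subtlety is the choice of the offset $\tfrac23$: being a non-integer in $(0,1)$, it forces the comparison at $v_0$ to be strict, hence the best response to be unique, which is precisely what collapses $\CSV$ and $\ASV$ onto the same value while leaving the integer gap of at least $1$ between positive and negative instances intact.
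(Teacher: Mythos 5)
Your reduction is correct, and it shares the paper's skeleton: the same source problem (partition), the same arena of Fig.~\ref{fig:hardness-dsum} with the $T-\frac{2}{3}$ escape edge for Player~1, and the same placement of the threshold near $T-\frac{1}{2}$ to exploit the integer gap between positive and negative instances. Where you genuinely diverge is in how the discounting is neutralized. The paper keeps the original weights $w(i)$ and instead chooses the discount factor $\lambda$ (and the gap $\epsilon$) per instance, close enough to $1$ that the distortion of the discounted subset sums stays below the integer gap of $1$ --- this is exactly what the two constraints of eq.~(\ref{eq:D}) enforce --- so its analysis is an approximate one, sandwiching the discounted payoffs between $T\cdot\lambda^{n+1}$ and the raw sums. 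You instead fix $\lambda=\frac{1}{2}$ and pre-compensate the weights to $2^{i}w(i)$ (then scale by $3$ for integrality of the offset), so that discounted payoffs equal raw subset sums \emph{exactly}. This buys you an exact value computation, $\CSV(v_0)=\ASV(v_0)=\max\{x_S : x_S\le T\}$, and --- via the non-integrality of the offset --- a \emph{unique} best response at the single Player~1 vertex, which collapses the cooperative and adversarial cases by construction rather than by the paper's separate strict-inequality arguments for the two directions; the paper's variant buys unchanged (small) weights at the price of the bookkeeping in eq.~(\ref{eq:D}). Both parameter choices are polynomial: your compensated weights $3\cdot 2^{i}w(i)$ have bit-size $O(n+\log w(i))$, just as the paper's per-instance $\lambda$ does. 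One cosmetic slip to fix: after the global scaling by $3$, the threshold and gap must be stated in scaled units (e.g.\ $c=3T-\frac{3}{2}$ and $\epsilon=\frac{3}{4}$, with attainable values $3T$ versus at most $3T-3$); as written you mix scaled weights with the unscaled threshold $c=T-\frac{1}{2}$.
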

\begin{proof}

We do a reduction from the Partition problem to our gap problems, working for  both $\CSV$ and $\ASV$.
Let us consider an instance of the partition problem defined by a set $A=\{1,2, \dots n\}$, a function $w : A \rightarrow \mathbb{N}_0$. The partition problem asks if there exists $B \subset A$ such that $\sum_{a \in B} w(a)=\sum_{a \in A \setminus B} w(a)$. W.l.o.g., we assume  $\sum_{a \in A} w(a)=2 \cdot T$ for some $T$. 

To define our reduction, we first fix the two parameters $\lambda \in (0,1)$ and $\epsilon>0$ by choosing values that respect the following two constraints:
\begin{equation}
    \label{eq:D}
    T \cdot \lambda^{n+1} > T-\frac{1}{2}+\epsilon
%\end{equation}
%
%\begin{equation}
        %\label{eq:D1}
        \quad \quad \quad \quad \quad (T-1) \cdot \lambda^{n+1} < T-\frac{1}{2}-\epsilon
\end{equation}
It is not difficult to see that such values always exist and they can be computed in polynomial time from the description of the partition problem.
Then, we construct the bi-weighted arena ${\cal A}$ depicted in~Fig.~\ref{fig:hardness-dsum}. In this arena, Player~1 has only two choices in the starting state of the game $v_0$. There, he can either send the game to the state $v_1$, and get a payoff of $W-\frac{2}{3}$, %as the discount factor is equal to $\lambda$ when he receives the reward $\frac{W-\frac{2}{3}}{\lambda}$, 
or he can go to state $1$. 

From state $1$, Player~0 can simulate a partition of the elements of $A$ by choosing edges: left edges simulate the choice of putting the object corresponding to the state in the left class and right edges simulate the choice of putting the corresponding object in the right class. Let $D_0$ and $D_1$ be the discounted sum obtained by Player~0 and Player~1 when arriving in $v_2$. Because $\lambda$ and $\epsilon$ have been chosen according to eq.~\ref{eq:D} %and eq.~\ref{eq:D1}
, we have that:
 $D_0 > W-\frac{1}{2}+\epsilon \land D_1 > W-\frac{1}{2}+\epsilon$
 if and only if
 the choices of edges of Player~0 correspond to a valid partition of $A$.

 Indeed, assume that $B \subseteq A$ is a solution to the partition problem. Assume that Player~0 follows the choices defined by $B$. Then when the game reaches state $b$, the discounted sum of rewards for both players is larger than $W \cdot \lambda^{n+1}$. This is because along the way to $b$, the discounted factor applied on the rewards obtained by both players has always been smaller than $\lambda^{n+1}$ as they were equal to $\lambda^{i+1}$ for all $i \leq n$.  Additionally, we know that sum of (non-discounted) rewards for both players is equal to $W$ as $B$ is a correct partition. Now, it should be clear that both $\ASV(v_0)$ and $\CSV(v_0)$ are greater than $W - \frac{1}{2}+\epsilon$ as in the two cases, Player~1 has no incentive to deviate from the play that goes to $v_1$ as Player~1 would only get $W-\frac{2}{3}$ which is strictly smaller than $D_1$.
 
 Now, assume that there is no solution to the partition problem. In that case, Player~0 cannot avoid to give less than $W-1$ to herself or to Player~1 when going from $v_0$ to $b$. In the first case, its reward is less than $W-1$ and in the second case, the reward of Player~1 is less than $W-1$ and Player~1 has an incentive to deviate to state $v_1$. In the two cases, we have that both $\ASV(v_0)$ and $\CSV(v_0)$ are less than $W-\frac{1}{2}-\epsilon$. 
 So, we have establish that the answer to the gap problem is yes if the partition instance is positive, and the answer is no if the partition instance is negative.

%[Proof sketch - full proof in appendix]
%Our hardness proof is by a reduction from the partition %problem to the $\CSV$ and $\ASV$ gap problems (the reduction %works for both). The reduction is based on %Figure~\ref{fig:hardness-dsum}. In this game arena, the %states $1$ to $n$ are controlled by Player~0 and choices of %edges in those states (up or down) simulate the partitioning %of the $n$ objects with their values $w(1),w(2),\dots,w(n)$. %Assume for simplicity that Players receive the sum of the %weights on the edges and not the discounted sum (the %discounted factor can be chosen small enough so that the same %reasoning apply to the discounted sum as shown in appendix). %If $T=\frac{\sum_{i=1}^{i=n} w(i)}{2}$, then %$\ASV(v_0)>T-\frac{1}{2}$ iff $\CSV(v_0)>T-\frac{1}{2}$ iff %the set of values can be partitioned into two sets of equal %value $T$. Indeed, if a partition exists, Player~0 simulates %this partition and Player~1 has no incentive to go to the %state $v_1$ as Player~1 will receive value $T > %T-\frac{2}{3}$, and Player~0 also receives $T$. On the other %hand, if there is no solution to the partition problem, by %simulation of a partition from state $1$, either Player~0 %receives $T-1$ or less, or Player~1 deviates to state $v_1$ %if he is offered less than or exactly $T-1$ by the choices of %Player~0. In the two cases, the value received by Player~0 is %less than $T-\frac{1}{2}$.
\end{proof}

\bibliographystyle{plain}
%\bibliography{biblio}

\begin{thebibliography}{10}

\bibitem{andersson}
D.~Andersson.
\newblock Improved combinatorial algorithms for discounted payoff games.
\newblock Master's thesis, Department of Information Technology, Uppsala
  University, 2006.

\bibitem{DBLP:conf/lics/BokerHO15}
Udi Boker, Thomas~A. Henzinger, and Jan Otop.
\newblock The target discounted-sum problem.
\newblock In {\em 30th Annual {ACM/IEEE} Symposium on Logic in Computer
  Science, {LICS} 2015, Kyoto, Japan, July 6-10, 2015}, pages 750--761, 2015.

\bibitem{DBLP:conf/lata/BrenguierCHPRRS16}
Romain Brenguier, Lorenzo Clemente, Paul Hunter, Guillermo~A. P{\'{e}}rez,
  Mickael Randour, Jean{-}Fran{\c{c}}ois Raskin, Ocan Sankur, and Mathieu
  Sassolas.
\newblock Non-zero sum games for reactive synthesis.
\newblock In {\em Language and Automata Theory and Applications - 10th
  International Conference, {LATA} 2016, Prague, Czech Republic, March 14-18,
  2016, Proceedings}, volume 9618 of {\em Lecture Notes in Computer Science},
  pages 3--23. Springer, 2016.

\bibitem{DBLP:conf/cav/BrenguierR15}
Romain Brenguier and Jean{-}Fran{\c{c}}ois Raskin.
\newblock Pareto curves of multidimensional mean-payoff games.
\newblock In {\em Computer Aided Verification - 27th International Conference,
  {CAV} 2015, San Francisco, CA, USA, July 18-24, 2015, Proceedings, Part
  {II}}, volume 9207 of {\em Lecture Notes in Computer Science}, pages
  251--267. Springer, 2015.

\bibitem{DBLP:conf/lfcs/BrihayePS13}
Thomas Brihaye, Julie {De Pril}, and Sven Schewe.
\newblock Multiplayer cost games with simple nash equilibria.
\newblock In {\em Logical Foundations of Computer Science, International
  Symposium, {LFCS} 2013, San Diego, CA, USA, January 6-8, 2013. Proceedings},
  volume 7734 of {\em Lecture Notes in Computer Science}, pages 59--73.
  Springer, 2013.

\bibitem{DBLP:conf/concur/ChatterjeeDEHR10}
Krishnendu Chatterjee, Laurent Doyen, Herbert Edelsbrunner, Thomas~A.
  Henzinger, and Philippe Rannou.
\newblock Mean-payoff automaton expressions.
\newblock In {\em {CONCUR} 2010 - Concurrency Theory, 21th International
  Conference, {CONCUR} 2010, Paris, France, August 31-September 3, 2010.
  Proceedings}, volume 6269, pages 269--283. Springer, 2010.

\bibitem{ChattarjeeLPAR2013}
Krishnendu Chatterjee, Vojt{\v{e}}ch Forejt, and Dominik Wojtczak.
\newblock Multi-objective discounted reward verification in graphs and mdps.
\newblock In Ken McMillan, Aart Middeldorp, and Andrei Voronkov, editors, {\em
  Logic for Programming, Artificial Intelligence, and Reasoning}, pages
  228--242, Berlin, Heidelberg, 2013. Springer Berlin Heidelberg.

\bibitem{DBLP:conf/icalp/ConduracheFGR16}
Rodica Condurache, Emmanuel Filiot, Raffaella Gentilini, and
  Jean{-}Fran{\c{c}}ois Raskin.
\newblock The complexity of rational synthesis.
\newblock In {\em 43rd International Colloquium on Automata, Languages, and
  Programming, {ICALP} 2016, July 11-15, 2016, Rome, Italy}, volume~55 of {\em
  LIPIcs}, pages 121:1--121:15. Schloss Dagstuhl - Leibniz-Zentrum fuer
  Informatik, 2016.

\bibitem{DBLP:journals/siamcomp/FerranteR75}
Jeanne Ferrante and Charles Rackoff.
\newblock A decision procedure for the first order theory of real addition with
  order.
\newblock {\em {SIAM} J. Comput.}, 4(1):69--76, 1975.

\bibitem{DBLP:conf/lics/FiliotGR18}
Emmanuel Filiot, Raffaella Gentilini, and Jean{-}Fran{\c{c}}ois Raskin.
\newblock Rational synthesis under imperfect information.
\newblock In {\em Proceedings of the 33rd Annual {ACM/IEEE} Symposium on Logic
  in Computer Science, {LICS} 2018, Oxford, UK, July 09-12, 2018}, pages
  422--431. {ACM}, 2018.

\bibitem{DBLP:conf/tacas/FismanKL10}
Dana Fisman, Orna Kupferman, and Yoad Lustig.
\newblock Rational synthesis.
\newblock In {\em Tools and Algorithms for the Construction and Analysis of
  Systems, 16th International Conference, {TACAS} 2010, Held as Part of the
  Joint European Conferences on Theory and Practice of Software, {ETAPS} 2010,
  Paphos, Cyprus, March 20-28, 2010. Proceedings}, volume 6015 of {\em Lecture
  Notes in Computer Science}, pages 190--204. Springer, 2010.

\bibitem{DBLP:conf/birthday/Goldreich06a}
Oded Goldreich.
\newblock On promise problems: {A} survey.
\newblock In {\em Theoretical Computer Science, Essays in Memory of Shimon
  Even}, volume 3895 of {\em Lecture Notes in Computer Science}, pages
  254--290. Springer, 2006.

\bibitem{DBLP:conf/time/GuptaS14}
Anshul Gupta and Sven Schewe.
\newblock Quantitative verification in rational environments.
\newblock In {\em 21st International Symposium on Temporal Representation and
  Reasoning, {TIME} 2014, Verona, Italy, September 8-10, 2014}, pages 123--131.
  {IEEE} Computer Society, 2014.

\bibitem{DBLP:conf/sefm/GuptaSTDP16}
Anshul Gupta, Sven Schewe, Ashutosh Trivedi, Maram Sai~Krishna Deepak, and
  Bharath~Kumar Padarthi.
\newblock Incentive stackelberg mean-payoff games.
\newblock In {\em Software Engineering and Formal Methods - 14th International
  Conference, {SEFM} 2016, Held as Part of {STAF} 2016, Vienna, Austria, July
  4-8, 2016, Proceedings}, pages 304--320, 2016.

\bibitem{DBLP:journals/corr/GuptaS14c}
Anshul Gupta, Sven Schewe, and Dominik Wojtczak.
\newblock Making the best of limited memory in multi-player discounted sum
  games.
\newblock In {\em Proceedings Sixth International Symposium on Games, Automata,
  Logics and Formal Verification, GandALF 2015, Genoa, Italy, 21-22nd September
  2015}, volume 193 of {\em {EPTCS}}, pages 16--30, 2015.

\bibitem{DBLP:journals/dm/Karp78}
Richard~M. Karp.
\newblock A characterization of the minimum cycle mean in a digraph.
\newblock {\em Discrete Mathematics}, 23(3):309--311, 1978.

\bibitem{DBLP:journals/amai/KupfermanPV16}
Orna Kupferman, Giuseppe Perelli, and Moshe~Y. Vardi.
\newblock Synthesis with rational environments.
\newblock {\em Ann. Math. Artif. Intell.}, 78(1):3--20, 2016.

\bibitem{nash50}
J.~F. Nash.
\newblock Equilibrium points in $n$-person games.
\newblock In {\em PNAS}, volume~36, pages 48--49. National Academy of Sciences,
  1950.

\bibitem{osbornebook}
{Martin J.} Osborne.
\newblock {\em An introduction to game theory}.
\newblock Oxford Univ. Press, 2004.

\bibitem{DBLP:conf/popl/PnueliR89}
Amir Pnueli and Roni Rosner.
\newblock On the synthesis of a reactive module.
\newblock In {\em Conference Record of the Sixteenth Annual {ACM} Symposium on
  Principles of Programming Languages, Austin, Texas, USA, January 11-13,
  1989}, pages 179--190, 1989.

\bibitem{sack1999handbook}
J.-R. Sack and J.~Urrutia, editors.
\newblock {\em Handbook of Computational Geometry}.
\newblock North-Holland Publishing Co., NLD, 2000.

\bibitem{DBLP:journals/iandc/VelnerC0HRR15}
Yaron Velner, Krishnendu Chatterjee, Laurent Doyen, Thomas~A. Henzinger,
  Alexander~Moshe Rabinovich, and Jean{-}Fran{\c{c}}ois Raskin.
\newblock The complexity of multi-mean-payoff and multi-energy games.
\newblock {\em Inf. Comput.}, 241:177--196, 2015.

\end{thebibliography}

\section{Appendix}

\subsection{Proofs of Section \ref{secMP}}

\subsubsection{Proof of Lemma~\ref{lem:emptiness-nonemptiness}}

\begin{proof}
We establish the three statements of this lemma in turn. 
First, consider the arena depicted in Fig.~\ref{fig:game-empty}, where
square nodes are vertices controlled by Player~$1$ while round nodes
are controlled by Player~$0$. We also give names to Player's actions to ease
the description of strategies. In this arena, Player~1 can either always play $a$, noted $\sigma_1^{\omega}$, or play $k$ times $a$ and then $b$, noted $\sigma_1^{k}$. Now, consider the following strategy $\sigma_0$ for Player~0 defined as follows: if Player~1 has played $k$ times $a$ before playing $b$, then play repeatedly $c^k$ followed by one $d$. Clearly, playing $\sigma_1^{\omega}$ has a mean-payoff value of $0$ for Player~2, while playing $\sigma^k_1$ against $\sigma_0$ has a mean-payoff of $\frac{2k+1}{k+1} >0$, so playing $\sigma_1^{\omega}$ is clearly not a best-response to $\sigma_0$. But it is also clear that for all $k_1 < k_2$, we have that $$\MPinf_1(\outcome_{v_0}(\sigma_0,\sigma_1^{k_1})) < \MPinf_1(\outcome_{v_0}(\sigma_0,\sigma_1^{k_2}))$$
\noindent
and so we conclude that there is no best-response for Player~1 to the strategy $\sigma_0$ of Player~0.

Second, as the mean-payoff measure is prefix independent\footnote{In
  the sense that for all $i\in\{0,1\}$, all infinite plays $\pi$ and finite plays
  $\pi'$, $\MPinf_i(\pi'\pi) = \MPinf_i(\pi)$.}, we can w.l.o.g. consider a fixed starting vertex $v_0$ and consider best-responses from there. Let $\sigma_0$ be a finite memory strategy of Player~0 in the arena ${\cal A}$. We note $A(\sigma_0,v_0)$ the finite graph obtained by fixing the strategy $\sigma_0$ for Player~0 in arena ${\cal A}$ from $v_0$. We can consider $A(\sigma_0,v_0)$ as a finite one-player mean-payoff arena as only Player~1 has remaining choices. It is well known that in a finite one-player mean-payoff arena, there are optimal memoryless strategies: they consist in reaching one simple cycle with a maximal mean-payoff~\cite{DBLP:journals/dm/Karp78}. The mean-payoff obtained by such a strategy is the mean-payoff of this cycle and is maximal. A strategy of Player~1 that follows this cycle is thus optimal and it is a best-response to $\sigma_0$.

Third, let $\sigma_0$ be any strategy of Player~0. Let ${\cal
  A}(\sigma_0,v_0)$ denotes the unfolding of the arena ${\cal A}$ for
vertex $v_0$ in which the choices of Player~0 has been fixed by the
strategy $\sigma_0$. We refer to ${\cal A}(\sigma_0,v_0)$ as the tree
$T$ and to any outcome compatible with $\sigma_0$ as an infinite branch $b$ of $T$. Against $\sigma_0$, Player~1 cannot obtain a value which is larger than
$d=\sup_{b \in T} \MPinf_1(b)$. By definition of $\sup$ in the real numbers, for all $\epsilon > 0$, there exists a branch $b \in T$ such that $\MPinf_1(b) > d-\epsilon$. So, for every $\epsilon > 0$, there exists a branch $b$ and a strategy $\sigma_1^b$ that follows this branch against $\sigma_0$ and which is thus an  $\epsilon$-best-response.
\end{proof}

\subsubsection{Proof of Theorem~\ref{thm:equiv-def}}

\begin{proof}
First, consider the game depicted in Fig.~\ref{fig:no-opt-ASV}. First let us show that $\ASV(v_0)=1$. 
For all $\epsilon >0$, assume that Player~0 plays $\sigma_0^{k(\epsilon)}$ defined as: repeat forever, from $v_1$ play one time $v_1 \rightarrow v_1$ and then repeat playing $v_1 \rightarrow v_0$ for $k(\epsilon)$ times, with $k$ chosen such that the mean-payoff for Player~0 is larger than $1-\epsilon$. Such a $k$ always exists. The best-response of Player~1 to $\sigma_0^{k(\epsilon)}$ is to always play $v_0 \rightarrow v_1$ as by playing this edge forever, Player~1 gets a mean-payoff strictly larger than $1$. 
Clearly, by playing less frequently $v_1 \rightarrow v_1$, Player~0 can obtain a value which is arbitrary close to $1$. But in addition, we note that the only way for Player~0 to reach value $1$ would be to play $v_1 \rightarrow v_1$ with a frequency that goes to $0$ in the limit. And in that case, the mean-payoff obtained by Player~1 would be equal to $1$. So it would not be better than the mean-payoff that Player~1 gets when playing $v_1 \rightarrow v_2$. As a consequence, in that case $v_1 \rightarrow v_2$ would also be a best-response too, and the adversarial Stackelberg value of that strategy of Player~0 would be equal to $0$.

Second, we show the following equivalence, which directly implies the second part of the theorem
(by taking $c = \ASV(v)-\epsilon$): $\ASV(v) > c$ iff
$\exists \sigma_0\in \Sigma_0\cdot \ASV(\sigma_0)(v) > c$.

Let us now prove this equivalence. By definition of
$\ASV(\sigma_0)(v)$, we have to show that 
$$\ASV(v)> c \text{ iff }\exists \sigma_0 \cdot  \exists \tau >0 :
\BR_1^{\tau}(\sigma_0)\neq\varnothing \land \forall \sigma_1 \in
\BR_1^{\tau}(\sigma_0) : \MPinf_0(\outcome_v(\sigma_0,\sigma_1)) >
c.$$

\noindent
The right to left direction is trivial as $\sigma_0$ and $\tau$ can play the role of witness of $\ASV(v) > c$. 

For the left to right direction, let $c' = \ASV(v)$. By definition of
$\ASV(v)$, we have
$$c'=\sup_{\sigma_0,\epsilon \geq 0 ~\mid~ \BR_{1}^{\epsilon}(\sigma_0)\neq\varnothing} ~~~\inf_{\sigma_1 \in \BR_{1}^{\epsilon}(\sigma_0)} \MPinf_0(\outcome_v(\sigma_0,\sigma_1)) > c$$
\noindent
By definition of $\sup$, for all $\delta >0$, we have that:
$$ \exists \sigma^{\delta}_0 \cdot \exists \epsilon^{\delta}>0 :  
\BR_1^{\epsilon^{\delta}}(\sigma^{\delta}_0)\neq\varnothing \land \inf_{\sigma_1 \in  \BR_1^{\epsilon^{\delta}}(\sigma^{\delta}_0)} \MPinf_0(\outcome_v(\sigma^{\delta}_0,\sigma_1)) \geq c'-\delta $$
\noindent
which in turn implies:
$$ \exists \sigma^{\delta}_0 \cdot \exists \epsilon^{\delta}>0 :  
\BR_1^{\epsilon^{\delta}}(\sigma^{\delta}_0)\neq\varnothing \land \forall \sigma_1 \in  \BR_1^{\epsilon^{\delta}}(\sigma^{\delta}_0): \MPinf_0(\outcome_v(\sigma^{\delta}_0,\sigma_1)) \geq c'-\delta $$
Now let us consider $\delta >0$ such that $c' - \delta > c$. Such a $\delta$ exists as $c' > c$. Then we obtain:
$$\exists \sigma_0 \cdot  \exists \tau >0 : \BR_1^{\tau}(\sigma_0)\neq\varnothing \land \forall \sigma_1 \in  \BR_1^{\tau}(\sigma_0) : \MPinf_0(\outcome_v(\sigma_0,\sigma_1)) > c.$$

\noindent
Finally, we note that the need for memory for $\epsilon$ approximation is a consequence of the example used in Theorem~\ref{thm:equiv-def}.
\end{proof}

\subsubsection{Proof of Lemma~\ref{lem:smallwitness}}

\begin{proof}
We first give a name to simple paths that are useful in the proof. First, the simple path that goes from from $v$ to $\ell_1$, is called $\pi_1$. Second, the simple path that goes from $\ell_1$ to $\ell_2$ is called $\pi_2$. Finally, the simple path that goes from $\ell_2$ back to $\ell_1$ is called $\pi_3$.

The right to left direction consists in showing that the existence of
$\pi_1$, $\pi_2$, $\pi_3$ and of the two simple cycles $\ell_1$, $\ell_2$
implies the existence of a witness for $\ASV(v) > c$ as required by
Theorem~\ref{thm:witness-mp}. For all $i\in\mathbb{N}\setminus\{0\}$,
we let $\rho_i = \ell_1^{\lceil \alpha \cdot i \rceil} \cdot \pi_2 \cdot
\ell_2^{\lceil \beta \cdot i \rceil} \cdot \pi_3$ and define the witness
$\pi$ as follows:
$$\pi=\pi_1 \rho_1\rho_2\rho_3\dots
$$
\noindent
It is easy to show that $\MPinf_0(\pi)=\alpha \cdot w_0(\ell_1) + \beta
\cdot w_0(\ell_2)$ which is greater than $c$ by hypothesis. Indeed, by
construction of $\pi$, we have that the importance of the non-cyclic
part ($\pi_2$ and $\pi_3$) is vanishing as $i$ is getting large, and
as the mean-payoff measure is prefix independent, the role of $\pi_1$
can be neglected. For the same reason, we have that
$\MPinf_1(\pi)=\alpha \cdot w_1(\ell_1) + \beta \cdot w_1(\ell_2)$ which is
equal to $d$ by hypothesis. It remains to show that $\pi$ does not
cross a $(c,d)$-bad vertex. This is direct by construction of $\pi$
and point $(3)$.

Let us now consider the left to right direction. Let $\pi$ be a
witness for $\ASV(v) >c$. By Theorem~\ref{thm:witness-mp}, we have
that $\pi$ starts in $v$, $\MPinf_0(\pi)=c' > c$ and
$\MPinf_1(\pi)=d$, and all the vertices $v'$ along $\pi$ are such that
$v' \nvDash \Estrat{1} \MPinf_0 \leq c \land \MPinf_1 \geq d$. Let us
note $D$ the set of vertices that appears infinitely often along
$\pi$. As the set of vertices is finite, we know that $D$ is non-empty
and there exists an index $i \geq 0$ such that the states visited
along $\pi(i\dots)$ is exactly those vertices in $D$. So, clearly, in
the graph underlying the game arena, $D$ is a strongly connected
component. So, in the strongly connected component $D$, there is an
infinite play $\pi'$ that is such that  $\MPinf_0(\pi')=c' > c$ and
$\MPinf_1(\pi')=d$. According to Proposition~\ref{prop:larger} and Theorem~\ref{thm:caractSCC},
%Lemma~\ref{lem:caractSCC-larger} 
if in a strongly connected component, there is a play with
$\MPinf_0(\pi')=c' > c$ and  $\MPinf_1(\pi')=d$ there is a convex
combination of coordinates of simple cycles that gives a value $(x,y)$
such that $x \geq c'$ and $y \geq d$. As we are concerned here with
mean-payoff games with $2$ dimensions, we can apply the CarathÃ©odory
baricenter theorem to deduce that there exists a set of cycles of
cardinality at most $3$, noted $\{ \ell_{i_1}, \ell_{i_2}, \ell_{i_3}\}$, and
$\alpha_{i_1}$, $\alpha_{i_2}$, and $\alpha_{i_3}$, such that  the
convex hull of vectors $(w_0(\ell_{i_1}),w_1(\ell_{i_1}))$,
$(w_0(\ell_{i_2}),w_1(\ell_{i_2}))$, and $(w_0(\ell_{i_3}),w_1(\ell_{i_3}))$
intersects with the set $P=\{(x,y) \mid x > c \land y \geq d \}$. This
convex hull is a triangle. If a triangle intersects $P$, it has to be the case that one of its edges intersects $P$. This edge is definable as the convex combination of two of the vertices of the triangle. As a consequence, we conclude that there are two simple cycle $\ell_1$ and $\ell_2$, and two rational values $\alpha,\beta \geq 0$ such that $\alpha+\beta=1$ and $\alpha \cdot w_0(\ell_1)+ \beta \cdot w_0(\ell_2) > c$ and $\alpha \cdot w_1(\ell_1)+ \beta \cdot w_1(\ell_2) \geq d$. It remains to show how to construct $\pi_1$, $\pi_2$, and $\pi_3$. For $\pi_1$, we concatenate a play without repetition of vertices from vertex $v$ to the set $D$ that only takes vertices also in $\pi$, then when in $D$, we take a finite play without repetition to the simple cycle $\ell_1$. For $\pi_2$, we take a simple play from the cycle $\ell_1$ to the cycle $\ell_2$, and fr $\pi_3$, we take a simple play from $\ell_2$ to $\ell_1$. The existence of those plays is guaranteed by the fact that $D$ is strongly connected. It is easy to verify that the constructed plays and cycles have all the properties required.      
\end{proof}

\subsubsection{Proof of Lemma~\ref{lem:badval-effective}}

To establish this lemma, we start from results that have been
established in~\cite{DBLP:conf/cav/BrenguierR15} where the Pareto
curve of $d$-dimensional mean-payoff games is studied.

\subparagraph{Pareto curve}  Before giving the formal details, we recall the notion of {\em Pareto curve} associated with a $d$-dimensional mean-payoff game. Those games are played between two players, called here \eve\/ and \adam, on a game arena where each edge is labelled by a $d$-dimensional vector of weights. In such context, we are interested in strategies of \eve\/ that ensure thresholds as high as possible on all dimensions. However, since the weights are multidimensional, there is not a unique maximal threshold in general. The concept of {\em Pareto optimality} is used to identify the most interesting thresholds.
To define the set of Pareto optimal thresholds, we first define the set of thresholds that \eve\/ can force from a vertex $v$:
$${\sf Th}(\calG,v) \\ =\left\{ x \in \mathbb{R}^d \mid \exists \sigma_\exists \cdot \forall \pi \in \outcome_v(\sigma_\exists) \cdot \forall i : 1 \leq i \leq d : \MPinf_i(\pi) \ge x_i \right\}.$$
  A threshold $c \in \mathbb{R}^{d}$ is \newdef{Pareto optimal} from
  $v$ if it is maximal in the set ${\sf Th}(\calG,v)$. So the set of Pareto optimal thresholds is defined as:
  $$\PO(\calG,v)= \{  x \in {\sf Th}(\calG,v) \mid \neg\exists x' \in
  {\sf Th}(\calG,v): x' > x \}\qquad \text{(for the component-wise order)}$$
We refer to this set as the \emph{Pareto curve} of the game.
Note that the set of thresholds that \eve\/ can force is exactly equal
to the downward closure, for the component-wise order, of the Pareto optimal thresholds, i.e. ${\sf Th}(\calG,v)= \downarrow \PO(\calG,v)$.

\subparagraph{Cells} We recall here the notion of cells in geometry,
which is useful to represent the set
of Pareto optimal thresholds. Let $a \in \mathbb{Q}^d$ be a vector in $d$ dimensions.
  The associated \newdef{linear function} $\alpha_a \colon \mathbb{R}^d \mapsto \mathbb{R}$ is the function $\alpha_a (x) = \sum_{i\in \lsem 1,d\rsem} a_i \cdot x_i$ that computes the weighted sum relative to $a$.
  A \newdef{linear inequation} is a pair $(a,b)$ where $a\in \mathbb{Q}^d \setminus \{ \vec0\}$ and $b \in \mathbb{Q}$.
  The \newdef{half-space} satisfying $(a,b)$ is the set $\ssp(a,b) = \{ x \in \mathbb{R}^d \mid  \alpha_a(x) \ge b \}$.
  A \newdef{linear equation} is also given by a pair $(a,b)$ where $a\in \mathbb{Q}^d \setminus \{ \vec0\}$ and $b \in \mathbb{Q}$ but we associate with it the \newdef{hyperplane} 
  $\hp(a,b) = \{ x \in \mathbb{R}^d \mid  \alpha_a(x) = b \}$.
  If $H = \ssp(a,b)$ is a half-space, we sometimes write $\hp(H)$ for the \emph{associated hyperplane} $\hp(a,b)$.
  A \newdef{system of linear inequations} is a set $\lambda = \{(a_1,b_1),\dots,(a_l,b_l)\}$ of linear inequations.
  The \emph{polyhedron} generated by $\lambda$ is the set $\pol(\lambda) = \bigcap_{(a,b)\in\lambda} \ssp(a,b)$.

We say that two points $x$ and $y$ are equivalent with respect to a
set of half-spaces $\mathcal{H}$, written $x \sim_{\mathcal{H}} y$, if
they satisfy the same set of equations and inequations defined by
$\mathcal{H}$. Formally $x \sim_{\mathcal{H}} y$ if for all $H\in
\mathcal{H}$, $x \in H \Leftrightarrow y \in H$ and $x \in \hp(H)
\Leftrightarrow y \in \hp(H)$. Given a point $x$, we write
$[x]_{\mathcal{H}} = \{ y \mid x \sim_{\mathcal{H}} y \}$ the
equivalence class of $x$. These equivalence classes are known in
geometry as \emph{cells}~\cite{sack1999handbook}. We write
$C(\mathcal{H})$ the set of cells defined by $\mathcal{H}$. Cells can
be represented as a disjunction of conjunctions of strict and non-strict linear
inequations of the form $\sum_{i=1}^d a_i.x_i>b$ and $\sum_{i=1}^d
a_i.x_i\geq b$ respectively.

\begin{theorem}[\cite{DBLP:conf/cav/BrenguierR15}]\label{thm:val}
  There is a deterministic exponential algorithm that given a $d$-dimensional mean-payoff game $\calG$ and a vertex $v$ computes an effective representation of $\PO(\calG,v)$ and $\val(\calG,v)$ as a union of cells or equivalently as a formula of the theory of the reals $\langle \mathbb{R},+,\geq\rangle$ with $d$ free variables.
  Moreover, when the dimension $d$ is fixed and the weights are polynomially bounded then the algorithm works in deterministic polynomial time.
\end{theorem}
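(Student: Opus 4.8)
The plan is to reduce the computation of the whole Pareto curve to the \emph{threshold} question and then to a quantifier-elimination step. Concretely, I would first produce a single formula $\Theta(x_1,\dots,x_d)$ of $\langle \mathbb{R},+,\le\rangle$ that defines the set ${\sf Th}(\calG,v)$ of thresholds \eve\ can force, i.e. such that $\Theta[x/c]$ holds iff \eve\ has a strategy guaranteeing $\MPinf_i(\pi)\ge c_i$ on every coordinate $i$ against every strategy of \adam. Granting such a $\Theta$, the value set is $\val(\calG,v)={\sf Th}(\calG,v)$ itself, and the Pareto curve is definable by $\PO(x)\equiv \Theta(x)\land\lnot\exists x'\,(\Theta(x')\land x'>x)$, with ${\sf Th}(\calG,v)=\downarrow\PO(\calG,v)$. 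Since $\langle\mathbb{R},+,\le\rangle$ admits effective quantifier elimination \cite{DBLP:journals/siamcomp/FerranteR75}, both formulas can be turned into quantifier-free boolean combinations of linear (in)equations, which is exactly a representation as a finite union of cells.

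The heart of the argument is the construction of $\Theta$. I would use that \eve's objective $\bigwedge_i\MPinf_i\ge x_i$ is prefix-independent, so by determinacy of multidimensional mean-payoff games \cite{DBLP:journals/iandc/VelnerC0HRR15} forceability depends only on the long-run behaviour of plays, and in particular only on the strongly connected region in which a play is ultimately trapped. The geometric key, in the spirit of Theorem~\ref{thm:caractSCC}, is that inside such a region $S$ the achievable mean-payoff vectors form the set $F_{\sf min}({\sf CH}(\mathbb{C}(S)))$, which is expressible by a formula $\Phi_S$ over the cycle-mean coordinates. I would then characterize ``$x\in{\sf Th}$'' as a finite disjunction over candidate sub-arenas $S$ (closed under \adam's moves and attractable by \eve) of the conjunction of (a) a reachability/safety condition stating that \eve\ can force the play into $S$, which is a purely combinatorial attractor property of the finite arena, and (b) a first-order condition over the $\Phi_S$-encoded cycle coordinates stating that every mean-payoff vector \adam\ can still impose inside $S$ dominates $x$ componentwise. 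Conjoining and disjoining these over the finitely many candidates yields $\Theta(x)$, uniformly in the free parameter $x$.

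For the complexity bounds I would observe that every hyperplane occurring in the resulting cell decomposition is spanned by mean-payoff coordinates $(\MP_1(c),\dots,\MP_d(c))$ of simple cycles $c$, whose denominators are at most $|V|$ and whose numerators are at most $|V|\cdot W$ in absolute value; this bounded bit-complexity, together with the effective construction of the $\Phi_S$ and the quantifier-elimination procedure, gives the deterministic exponential bound in general (the exponential stemming from the number of simple cycles and of candidate sub-arenas). When $d$ is fixed and $W$ is polynomially bounded, the number of distinct coordinate vectors is polynomial, the formulas involve a fixed number of variables, and quantifier elimination over $\langle\mathbb{R},+,\le\rangle$ with a bounded number of variables runs in polynomial time, which yields the claimed deterministic polynomial-time bound.

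I expect the main obstacle to be precisely the step that replaces reasoning about (possibly infinite-memory) strategies of both players by the finite geometric condition (b): one must prove that \eve\ forces threshold $x$ inside $S$ if and only if the part of $F_{\sf min}({\sf CH}(\mathbb{C}(S)))$ that \adam\ can still enforce lies above $x$, and crucially that this equivalence holds \emph{uniformly} in $x$, so that the certifying linear constraints have coefficients independent of $x$. This is the technical content that must be imported from \cite{DBLP:conf/cav/BrenguierR15}, resting on the threshold characterizations of \cite{DBLP:journals/iandc/VelnerC0HRR15}; making the forcing certificate describable by finitely many $x$-independent linear inequalities is what allows the final quantifier elimination both to terminate and to produce a genuine cell decomposition.
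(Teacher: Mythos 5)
First, note that the paper does not prove this statement at all: Theorem~\ref{thm:val} is imported verbatim from \cite{DBLP:conf/cav/BrenguierR15}, so there is no in-paper proof to match and your attempt must stand on its own. It does not, and the gap is exactly where you suspect, but it is worth naming precisely. Your condition (b) asserts that Eve forces threshold $x$ inside a sub-arena $S$ iff ``every mean-payoff vector Adam can still impose inside $S$ dominates $x$,'' with the admissible vectors encoded by $\Phi_S$, i.e.\ by $F_{\sf min}({\sf CH}(\mathbb{C}(S)))$. This conflates the two-player game inside $S$ with the one-player situation: Theorem~\ref{thm:caractSCC} characterizes the mean-payoff vectors of \emph{all plays} of a graph, and is valid only once Adam's choices have been fixed. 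In the genuine game, by determinacy Adam wins the complement objective $\bigvee_i \MPinf_i < x_i$, and he can do so \emph{adaptively}, pushing different coordinates low depending on Eve's behaviour, without ever committing to a single vector failing to dominate $x$; conversely Eve may need infinite memory \cite{DBLP:journals/iandc/VelnerC0HRR15}, so no trap-plus-attractor decomposition of the arena captures her winning region. The ingredient that repairs this --- and which is the actual heart of \cite{DBLP:conf/cav/BrenguierR15} --- is Adam's \emph{memoryless} determinacy for the disjunctive complement objective (the same fact the present paper invokes, via Propositions~\ref{prop:mp-inverse} and~\ref{prop:mp-inf-sup}, in the proof of Lemma~\ref{lem:badval-effective}): it yields ${\sf Th}(\calG,v)=\bigcap_{\sigma}\bigcup_{S}\,\downarrow\!F_{\sf min}({\sf CH}(\mathbb{C}(S)))$, where $\sigma$ ranges over Adam's finitely many memoryless strategies and $S$ over the SCCs of $\calG[\sigma]$ reachable from $v$. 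This is the finite, $x$-uniform description you needed; deferring it to the cited paper, as you do in your last paragraph, is deferring the theorem itself.

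Your complexity argument also needs repair in the fixed-$d$ case. Once ${\sf Th}$ is written as the intersection above, the formula $\Theta$ has exponentially many conjuncts (one per memoryless Adam strategy), so running quantifier elimination \cite{DBLP:journals/siamcomp/FerranteR75} on $\Theta$ gives the exponential bound but cannot by itself give polynomial time for fixed $d$. The correct route, which your bit-complexity observation on cycle means (denominators $\le |V|$, numerators $\le |V|\cdot W$) does set up, is the arrangement argument of \cite{DBLP:conf/cav/BrenguierR15}: the polynomially many bounded-coefficient hyperplanes induce, in fixed dimension, polynomially many cells, ${\sf Th}(\calG,v)$ is a union of such cells, and each cell is classified by one threshold query on a sample point, decidable in (pseudo-)polynomial time for fixed $d$ and polynomially bounded weights via \cite{DBLP:journals/iandc/VelnerC0HRR15}. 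So keep your skeleton (definability, coefficient bounds, quantifier elimination for the Pareto-maximality formula $\PO(x)\equiv\Theta(x)\land\lnot\exists x'\,(\Theta(x')\land x'>x)$), but the proof is incomplete without establishing the memoryless-Adam intersection formula and without replacing formula-level quantifier elimination by cell enumeration with sample-point queries in the fixed-dimension analysis.
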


While the set $\Lambda(v)$ is not equal to ${\sf Th}(\calG,v)$, the
definition of the two sets are similar in nature (in dimension 2), and
we show next that we can adapt the algorithm
of~\cite{DBLP:conf/cav/BrenguierR15} used to prove Theorem~\ref{thm:val} to compute a symbolic representation of $\Lambda(v)$. For that we rely on two propositions. Each proposition deals with one of the differences in the definitions of $\Lambda(v)$ and ${\sf Th}(\calG,v)$. First, while in ${\sf Th}(\calG,v)$, \eve\/ aims at maximizing the mean-payoff in each dimension, for $\Lambda(v)$, Player~1 wants to minimize the mean-payoff on the first dimension (the payoff of Player~0) and maximize the mean-payoff on the second dimension (his own Player~1 payoff).  But this discrepancy can be easily handled using the following property:
      
      \begin{proposition}
      \label{prop:mp-inverse}
    For all mean-payoff games $\game$, for all play $\pi \in {\sf Play}_{\game}$, for all thresholds $c \in \mathbb{R}$, we have that $\MPinf_1(\pi) \leq c$ if and only if $-\MPsup_1(\pi) \geq -c$.
    \end{proposition}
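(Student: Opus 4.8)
The plan is to reduce Proposition~\ref{prop:mp-inverse} to a single elementary fact about real sequences: for every sequence $(a_k)_{k \geq 1}$ of reals one has $\limsup_{k}(-a_k) = -\liminf_{k} a_k$. First I would instantiate this identity at the sequence of normalized partial weights $a_k = \tfrac{1}{k}\,w_1(\pi_{\leq k})$. By the definitions given in the preliminaries, $\liminf_k a_k = \MPinf_1(\pi)$, while the negated sequence $-a_k = \tfrac{1}{k}\,(-w_1)(\pi_{\leq k})$ is exactly the normalized partial-weight sequence of $\pi$ taken with Player~1's weight function replaced by $-w_1$; its $\limsup$ is thus the $\limsup$-mean-payoff on the negated weights, i.e.\ the quantity written $-\MPsup_1(\pi)$ in the statement. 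The cited identity then yields the pointwise equality of values $-\MPsup_1(\pi) = -\MPinf_1(\pi)$ for the fixed play $\pi$, equivalently $\MPinf_1(\pi) = -\bigl(-\MPsup_1(\pi)\bigr)$.

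Once this equality of values is available, both directions of the biconditional are immediate and there is no need to treat the two implications separately. Since $x \mapsto -x$ is an order-reversing bijection of $\mathbb{R}$, the inequality $\MPinf_1(\pi) \leq c$ holds if and only if $-\MPinf_1(\pi) \geq -c$, and substituting the equality established above rewrites the right-hand side as $-\MPsup_1(\pi) \geq -c$. As each step in this chain is a genuine equivalence, I would simply compose them to conclude $\MPinf_1(\pi) \leq c \iff -\MPsup_1(\pi) \geq -c$.

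I do not anticipate any real obstacle: the statement is just an instance of the standard duality between $\liminf$ and $\limsup$ under negation, so the argument is a few lines. The only point requiring care is the sign bookkeeping, namely that the minus sign must be understood as negating the \emph{weight function} (which is what swaps the two limit operators), rather than as a spurious negation that would incorrectly identify $\MPinf_1$ with $\MPsup_1$ on the same weights. I would flag, but not use in the proof, the motivation for the identity: it lets Player~1's coordinate-wise minimization objective be recast as a $\limsup$-maximization objective on the negated weights, so that the Pareto-curve machinery of Theorem~\ref{thm:val}, phrased for lower-bound threshold conditions, becomes applicable to $\Lambda(v)$.
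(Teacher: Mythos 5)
Your proof is correct and matches the paper exactly in spirit: the paper states Proposition~\ref{prop:mp-inverse} without any explicit proof, treating it as the standard duality $\limsup_k(-a_k)=-\liminf_k a_k$ applied to the partial-average sequence $a_k=\tfrac{1}{k}w_1(\pi_{\leq k})$, which is precisely the argument you spell out. You also correctly handle the one genuine trap, namely reading $-\MPsup_1(\pi)$ as the $\limsup$ mean-payoff of $\pi$ under the negated weight function $-w_1$ (the reading forced by the surrounding use in Proposition~\ref{prop:mp-inf-sup}) rather than as the negation of $\MPsup_1(\pi)$ on the original weights, under which misreading the statement would be false.
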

 
So, if we inverse the payoff on the first dimension on each edge of the game, we end up with an equivalent game where Player~1 now wants to maximize the value he can obtain on the two dimensions. We note $\game'$ this new game. The only remaining difference that we need to deal with in order to be able to apply Theorem~\ref{thm:val} is that one of the dimension is now measured by a {\it mean-payoff sup} and not {\it mean-payoff inf}. The following proposition tells us that we can safely replace the limsup mean-payoff by a liminf mean-payoff. This is a direct corollary of a more general statement in~\cite{DBLP:journals/iandc/VelnerC0HRR15}:

\begin{proposition}[Lemma~14~in~\cite{DBLP:journals/iandc/VelnerC0HRR15}]
\label{prop:mp-inf-sup}
For all mean-payoff games $\game$, for all state $v \in V$, for all $c,d \in \mathbb{Q}$, $$v \models \Estrat{1} -\MPsup_0 \geq -c \land \MPinf_1 \geq d$$ 
\noindent 
if and only if $$v \models \Estrat{1} -\MPinf_0 \geq -c \land \MPinf_1 \geq d.$$
\end{proposition}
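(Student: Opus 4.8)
The plan is to prove the two implications separately, noting that $-\MPsup_0 \geq -c$ means $\MPsup_0 \leq c$ and $-\MPinf_0 \geq -c$ means $\MPinf_0 \leq c$, so the statement compares what Player~1 can force when it must keep the \emph{limsup} of Player~0's mean-payoff below $c$ against when it must only keep the \emph{liminf} below $c$, in both cases together with $\MPinf_1 \geq d$. The implication from the $\MPsup$ objective to the $\MPinf$ objective holds play-by-play: since $\MPinf_0(\pi) \leq \MPsup_0(\pi)$ for every play $\pi$, any $\pi$ with $\MPsup_0(\pi) \leq c$ also satisfies $\MPinf_0(\pi) \leq c$. Hence the set of plays satisfying $-\MPsup_0 \geq -c \land \MPinf_1 \geq d$ is contained in the set satisfying $-\MPinf_0 \geq -c \land \MPinf_1 \geq d$, and any Player~1 strategy forcing the former against every Player~0 strategy forces the latter. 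This gives that $v \models \Estrat{1} -\MPsup_0 \geq -c \land \MPinf_1 \geq d$ entails $v \models \Estrat{1} -\MPinf_0 \geq -c \land \MPinf_1 \geq d$ with no further work.

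The substantive direction is the converse: from an enforceable \emph{liminf} upper bound on Player~0's payoff I must recover an enforceable \emph{limsup} upper bound. This cannot be argued play-by-play, since on a single play $\MPinf_0 \leq \MPsup_0$ points the wrong way, so the witnessing strategy must genuinely be transformed. To do so I would pass to the inverted game $\game'$ introduced above, whose first weight function is the negation of $w_0$, so that by Proposition~\ref{prop:mp-inverse} the quantities $-\MPsup_0$ and $-\MPinf_0$ become the liminf and the limsup mean-payoff of the first coordinate of $\game'$, respectively. In this reformulation both objectives are conjunctions of lower-bound mean-payoff objectives for Player~1 regarded as the maximiser: the left-hand side reads the first coordinate with $\MPinf$ and the right-hand side reads it with $\MPsup$, while the $\MPinf_1 \geq d$ coordinate is unchanged. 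The claim is then exactly that, for the player aiming at lower thresholds, a single coordinate may be interpreted with $\MPsup$ rather than $\MPinf$ without affecting enforceability, which is the general interchangeability statement of Lemma~14 in~\cite{DBLP:journals/iandc/VelnerC0HRR15}. Applying that lemma to the coordinate carrying Player~0's inverted payoff closes the equivalence.

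I expect the only real difficulty to lie in this converse direction, and it is entirely delegated to the multi-dimensional machinery of~\cite{DBLP:journals/iandc/VelnerC0HRR15}: upgrading a liminf guarantee to a limsup guarantee for the maximiser rests on finite-memory determinacy and on the closure properties of the enforceable threshold sets for conjunctions of mean-payoff objectives, none of which is captured by the elementary pointwise comparison that suffices for the easy direction.
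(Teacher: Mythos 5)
Your proposal is correct and takes essentially the same route as the paper: the paper justifies this proposition purely as a direct corollary of Lemma~14 of \cite{DBLP:journals/iandc/VelnerC0HRR15}, applied after the weight inversion of Proposition~\ref{prop:mp-inverse}, which is exactly where you delegate the substantive direction. Your extra observation that the implication from the $\MPsup$-bounded objective to the $\MPinf$-bounded one is a trivial pointwise inclusion of play sets is a harmless refinement that the paper leaves implicit in the citation.
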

\noindent

By Proposition~\ref{prop:mp-inverse} and
Proposition~\ref{prop:mp-inf-sup}, we have shown that it suffices to
inverse the weights of the first dimension in the bi-weighted graph to
obtain a two-dim. mean-payoff game in which the set of thresholds
${\sf Th}(\calG,v)$ that $\eve$ can enforce is exactly the set
$\Lambda(v)$. As a consequence, we can use the algorithm behind
Theorem~\ref{thm:val} to compute a symbolic representation (in the theory $\langle \mathbb{R},+, \leq \rangle$) of this set
in {\sf ExpTime}, achieving to prove Lemma~\ref{lem:badval-effective}.

\subsection{Proofs of Section \ref{secDS}}

\subparagraph{Proof of Theorem \ref{thm:ASV-CSV-gap}}
\begin{proof}
 By Lemma \ref{lemmaApprox}, if $\mathcal{G}\in {\sf Yes}^{\epsilon,c}_\ASV$  then Player $0$ has a strategy  with finite memory $M(\epsilon)$ such that for all  $\tau\in BR_1(\sigma^*)$   it holds $\sf{DS}^\lambda_0({\sf Out}(\sigma^*,\tau) > c$. Therefore, to solve the
\ASV-gap problem,   with gap $\epsilon$ and  threshold $c$ it is sufficient to apply the following two-steps procedure: 
\begin{enumerate}
    \item check if there exists a strategy $\sigma\in \Sigma_0(\mathcal{G})$ with finite memory $M(\epsilon)$ such that $\ASV(\sigma)>c$.
    \item  If such a strategy exists answer YES, otherwise answer  NO.
\end{enumerate} 
In particular, given a finite memory strategy $\sigma$ for Player $0$, checking weather $\ASV(\sigma)>c$ can be done by first computing the product of the game $\mathcal{G}$ with the finite memory strategy $\sigma$. This yields to a single Player game $\mathcal{G}'$ (controlled by Player $1$). By \cite{andersson}, the problem of solving a one-player (single-valued) discounted-sum game   can be stated as a linear program. Therefore, we can use linear programming to determine, for each vertex $x$ of $\mathcal{G}'$, the maximal discounted  sum $\DS_1(\pi)$ that Player $1$ can obtain by following a path in $\mathcal{G}'(x)$. At this point, we can  delete from $\mathcal{G}'$ each edge that is not consistent with the solution of the above  linear program, obtaining a new (single player) discounted sum games $\mathcal{G}''$ where each play is consistent with both $\sigma$ and a best response of Player $1$ to $\sigma$. Answering weather $\ASV(\sigma)>c$ finally amounts to determine the minimal discounted  sum $\DS_0(\pi)$ that Player $1$ can obtain by following a path in $\mathcal{G}''(v)$, i.e. solving again a one player single value discounted sum game problem.

We conclude by  showing that our two-steps procedure answers {\sf Yes} if  $\mathcal{G}\in {\sf Yes}^\epsilon_c$,  answers {\sf No} if  ${\sf No}^\epsilon_c$,  answers arbitrarly otherwise. Assume  $\mathcal{G} \in {\sf Yes}^\epsilon_c$. Then, by Lemma \ref{lemmaApprox} Player $0$ has a strategy with finite memory $M(\epsilon)$ witnessing $\ASV>c$ and the algorithm answers {\sf Yes}. If   $\mathcal{G}\in {\sf No}^\epsilon_c$,  Player $0$ has no strategy  witnessing $\ASV>c$. Therefore  Player $0$ has no  finite memory strategy  witnessing $\ASV>c$ and the algorithm answers {\sf No}.  The answer of the algorithm is not guaranteed to be neither {\sf Yes} nor {\sf No} if $\mathcal{G}\notin $ ${\sf Yes}^\epsilon_c\cup {\sf No}^\epsilon_c$.

 The \CSV-gap  algorithm is similarly defined on the ground of Lemma \ref{lemmaApprox}. In particular,  to solve the
\CSV-gap problem,   with gap $\epsilon$ and  threshold $c$ it is sufficient to  proceed as follows: 
Check if there exists a strategy $\sigma\in \Sigma_0(\mathcal{G})$ with finite memory $M(\epsilon)$ such that $\CSV(\sigma)>c$. If such a strategy exists answer YES, otherwise answer  NO.

\end{proof}

\end{document}